\documentclass[12pt]{article}
\usepackage{amsmath}
\usepackage{graphicx,psfrag,epsf}
\usepackage{amssymb, amsmath, amsthm, color,enumerate,float}
\usepackage{algorithm, comment,subfig}
\usepackage[numbers]{natbib}
\usepackage{enumerate,multirow,booktabs}
\usepackage{url}
\usepackage{algpseudocode}
\usepackage{url} 

\newcommand{\blind}{0}
\newtheorem{lem}{Lemma}
\newtheorem{thm}{Theorem}
\newtheorem*{thm*}{Theorem}
\newtheorem{pro}{Proposition}
\newtheorem{defn}{Definition}
\newtheorem{rmk}{Remark}
\newtheorem{cor}{Corollary}
 \DeclareMathOperator{\var}{var}
\DeclareMathOperator{\tr}{tr} 
\DeclareMathOperator{\IMSE}{IMSE}

\DeclareMathOperator{\Eff}{Eff}
\DeclareMathOperator{\LEff}{LEff}
\DeclareMathOperator*{\argmin}{argmin}

\newcommand{\reals}{\mathbb{R}}

\newcommand{\vx}{\boldsymbol{x}}

\newcommand{\vc}{\boldsymbol{c}}

\newcommand{\vf}{\boldsymbol{f}}
\newcommand{\vg}{\boldsymbol{g}}

\newcommand{\vlambda}{\boldsymbol{\lambda}}

\newcommand{\vbeta}{\boldsymbol{\beta}}

\newcommand{\mA}{{\mathsf A}}
\newcommand{\mB}{{\mathsf B}}
\newcommand{\mC}{{\mathsf C}}

\newcommand{\mI}{{\mathsf I}}

\newcommand{\mS}{{\mathsf S}}
\newcommand{\E}{\mathbb{E}}

\newcommand{\dif}{{\rm d}}

\newcommand{\Ic}{\text{I}_\text{c}}
\newcommand{\EI}{\text{EI}}

\newcommand{\unif}{\text{\rm unif}}
\newcommand{\asine}{\text{\rm arcsine}}

\newcommand{\citeO}[1]{\bibpunct{}{}{;}{a}{}{,}\cite{#1}}
\newcommand{\citeI}[1]{\bibpunct{(}{)}{;}{a}{}{,}\cite{#1}}

\addtolength{\oddsidemargin}{-.5in}%
\addtolength{\evensidemargin}{-.5in}%
\addtolength{\textwidth}{1in}%
\addtolength{\textheight}{+1in}%
\addtolength{\topmargin}{-.5in}%

\begin{document}

\def\spacingset#1{\renewcommand{\baselinestretch}%
{#1}\small\normalsize} \spacingset{1}


\if0\blind
{
  \title{\bf An Efficient Algorithm for Elastic I-optimal Design of Generalized Linear Models}
  \author{Yiou Li\thanks{Yiou Li is Assistant Professor, Department of Mathematical Sciences, DePaul University, Chicago, IL 60605 (E-mail: yli139@depaul.edu)}\hspace{.2cm}\\
    Department of Mathematical Sciences, DePaul University\\
    and \\
    Xinwei Deng \thanks{Xinwei Deng is Associate Professor, Department of Statistics, Virginia Tech, Blacksburg, VA 24061 (E-mail: xdeng@vt.edu)}\\
    Department of Statistics, Virginia Tech\\
    }
     \date{}
  \maketitle
} \fi

\if1\blind
{
  \title{\bf An Efficient Algorithm for Elastic I-optimal Design of Generalized Linear Models}
  \date{}
  \maketitle
} \fi

\begin{abstract}
The generalized linear models (GLMs) are widely used in  statistical analysis and the related design issues are undoubtedly challenging.
The state-of-the-art works mostly apply to design criteria on the estimates of regression coefficients.
The prediction accuracy is usually critical in modern decision making and artificial intelligence applications. It is of importance to study optimal designs from the prediction aspects for generalized linear models.
In this work, we consider the Elastic I-optimality as a prediction-oriented design criterion for generalized linear models, and develop efficient algorithms for such $\EI$-optimal designs.
By investigating theoretical properties for the optimal weights of any set of design points and extending the general equivalence theorem to the $\EI$-optimality for GLMs,
the proposed efficient algorithm adequately combines the Fedorov-Wynn algorithm and multiplicative algorithm.
It achieves great computational efficiency with guaranteed convergence.
Numerical examples are conducted to evaluate the feasibility and computational efficiency of the proposed algorithm.
\end{abstract}

\noindent%
{\it Keywords:} Fast convergence; Multiplicative algorithm; Prediction-oriented; Sequential algorithm
\vfill

\newpage
\spacingset{1.45} 
\section{Introduction}\label{section:intro}
The generalized linear model (GLM) is a flexible generalization of linear models by relating the response to the predictors through a link function (\citeO{72nelder}).
The GLMs are widely used in many statistical analyses with different applications including business analytics, image analysis, bioinformatics, etc.
From a data collection perspective, it is of great importance to understand the optimal designs for the generalized linear model, especially from the prediction aspects.

Suppose that an experiment has $d$ explanatory variables $\vx = [x_1,...,x_d]$, and let $\Omega_j$ be a measurable set of all possible levels of the $j$th explanatory variable.
Common examples of $\Omega_j$ are $[-1,1]$ and $\reals$.
The experimental region, $\Omega$, is some measurable subset of
$\Omega_1\times\cdots\times\Omega_d$. In a generalized linear model,
the response variable $Y(\vx)$ is assumed to follow a particular distribution in the exponential family, including normal, binomial, Poisson and gamma distribution, etc. A link function provides the relationship between the linear predictor, $\eta=\vbeta^T\vg(\vx)$, and the mean of the response $Y(\vx)$,
\begin{eqnarray}\label{eqn:GLM}
\mu(\vx) = \E[Y(\vx)] = h^{-1}\left(\vbeta^T\vg(\vx)\right),
\end{eqnarray}
where $\vg = [g_1,...,g_l]^T$ are the basis and $\vbeta=[\beta_1,\beta_2,...,\beta_{l}]^T$ are the corresponding regression coefficients.
Here $h: \reals\rightarrow\reals$ is the link function, and $h^{-1}$ is the inverse function of $h$.

In this work, we consider an approximate design $\xi$ as
$\xi = \left\{\begin{array}{ccc}
\vx_1,&...,&\vx_n\\
\lambda_1,&...,&\lambda_n
\end{array}\right\}$
with $\vx_i\neq \vx_j$ if $i\neq j$. The $\lambda_i (\lambda_i\geq 0)$ represents the fraction of experiments that is to be carried out at design point $\vx_i$ and $\sum\limits_{i=1}^n\lambda_i=1$.
The Fisher information matrix of the generalized linear model in \eqref{eqn:GLM} is:
\begin{equation}\label{eqn:fisher}
\mI(\xi, \vbeta) = \sum\limits_{i=1}^n\lambda_i\vg(\vx_i)w(\vx_i)\vg^T(\vx_i),
\end{equation}
where $w(\vx_i) = 1/\{\var(Y(\vx_i))[h^{'}(\mu(\vx_i))]^2\}$. The notation $\mI(\xi,\vbeta)$ emphasizes the dependence on the design $\xi$ and regression coefficient $\vbeta$.

For GLMs, the design issues tend to be much more challenging than those in linear models due to the complicated Fisher information matrix.
The Fisher information matrix $\mI(\xi, \vbeta)$ often depends on the regression coefficient $\vbeta$ through the function $w(\vx_i)$ in \eqref{eqn:fisher}.
Since most optimal design criteria can be expressed as a function of the Fisher information matrix,
a scientific understanding of locally optimal designs is often conducted under the assumption that some initial estimates of the model coefficients are available.
%
Most literature and algorithms on design of GLMs focus on the D-optimality or A-optimality for accurate estimation of regression coefficients.
\citeI{atkinson2015designs} provided a review of recent work on the designs of GLMs mainly based on D-optimality.
Practically, design criteria related to model prediction accuracy are of large interest, especially in decision-making and artificial intelligence (e.g., \citeO{07stein}; \citeO{settles2009active}; \citeO{bilgic2010active}).
Under such a consideration, the I-optimality that aims at minimizing the average variance of prediction over the experimental region $\Omega$ (\citeO{atkinson2014optimal}) is often used in the literature.
For the linear model,
\citeI{87haines} proposed a simulated annealing algorithm to obtain exact I-optimal design.
\citeI{88meyer, 95meyer} used simulated annealing and coordinate-exchange algorithms to construct exact I-optimal design.
However, there are few works on I-optimality for GLMs in the literature.

In practical applications such as additive manufacturing (\citeO{sun2018functional}), different regions of exploratory variables $\vx$ present different importance of interest.
For example, one may be interested in the prediction of response over a subregion  $\Omega_C$ of $\Omega$ instead of the whole experimental region $\Omega$ (\citeO{atkinson2014optimal}). For instance, the experimental region $\Omega$ could be the hypercube $[-1,1]^d$. But it is likely that the responses corresponding to positive explanatory variables are of more importance, or even only those responses are of interest. \citeI{06khu} proposed the criterion that minimizes the mean-squared error of the prediction at a single explanatory variable value, where $\Omega_C$ contains a single point.
Based on this motivation, we propose a so-called \emph{Elastic I-optimality} that is more general and flexible than the classical I-optimality.
The EI-optimality criterion aims at minimizing the integrated mean squared prediction error with respect to \textit{certain probability measure on the experimental region.}
This criterion shares a similar spirit but is more general than the criterion in the work of \citeI{box1963choice}.

The contribution of this work is to study a general and flexible prediction-oriented optimality criterion, EI-optimality, to fill the gap in the theory of EI-optimal designs for GLMs, and to advance an efficient algorithm of constructing EI-optimal designs for GLMs.
Specifically, we first establish the EI-optimality for GLMs and study theoretical properties of optimal fractions (i.e., weights) $\lambda_1, \ldots, \lambda_n$ given design points $\vx_1, \ldots, \vx_n$.
The resultant theoretical properties are not limited to EI-optimality, but also can be applied to other optimality criteria.
Based on these theoretical investigations, we develop an efficient sequential algorithm of constructing EI-optimal designs for GLMs.
The proposed algorithm sequentially adds points into design and optimizes their weights simultaneously using the multiplicative algorithm (\citeO{78titterington}).
The convergence properties of the proposed algorithm is established.
A key contribution of the proposed algorithm is to ensure the computational efficiency with sound theoretical convergence.
Our theoretical results provide theoretical insights on the optimal weights, which is very crucial to achieve a fast convergence rate for the proposed algorithm.
The advantages of the proposed algorithm are: (1) very easy to implement; (2) theoretically proved convergence; (3) computationally efficient; (4) suitable for other optimality criteria by simple modification.

The multiplicative algorithm, first proposed by \citeI{78titterington} and  \citeI{78silvey}, has been widely used  in finding optimal designs of linear regression models.
The main drawback of the multiplicative algorithm is that it requires a large candidate pool of points and updating the weights of all candidate points simultaneously can be computationally expensive.
However, in our proposed sequential algorithm, guided by the theoretical results on optimal weights, we can apply the multiplicative algorithm to a much smaller set of points,
which breaks the barrier of the original multiplicative algorithm and thus greatly improves the efficiency of the algorithm.
The proposed algorithm is not only computationally efficient, but also very simple to implement.
Furthermore, by employing the multiplicative algorithm, only nonnegative weights will be obtained and one needs not to deal with potential negative weights as in the algorithm proposed by \citeI{13yang}.
It is worth pointing out that the proposed algorithm can be easily extended to construct optimal design under other optimality criteria, like $\Phi_p$-optimality.

The remainder of this work is organized as follows.
The proposed prediction-oriented EI-optimality for GLMs is established in Section \ref{section:IMSE}.
Section \ref{section:optweights} details the findings on computing optimal weights given any set of design points.
The proposed sequential algorithm and its convergence properties are developed in Section \ref{sec: Algorithm}.
Numerical examples are conducted in Section \ref{section:Example} to evaluate the performance of the proposed algorithm.
We conclude this work with some discussion in Section \ref{sec: discussion}.
All the proofs are reported in the Appendix.



\section{The EI-Optimality Criterion for GLM}\label{section:IMSE}
The GLM is a generalization of various statistical models, including linear, logistic and Poisson regression.
For GLMs, making prediction of response $Y(\vx)$ at given input $\vx$ is always an important objective in many practical applications (e.g., \citeO{07stein}; \citeO{settles2009active}; \citeO{bilgic2010active}).
Thus, it is of great interest to adopt a prediction oriented criterion for computing the optimal design.
Following this idea, it is natural to consider the design based on mean response $\mu(\vx)$ that is square integrable with respect to some probability measure $\nu$ defined on $\mathbb{R}^d$. The associated probability distribution is given by $F_{\IMSE}(\vx)  = \nu\left(\prod\limits_{i=1}^d(-\infty,x_i]\right)$. Then, a general and flexible Elastic I-optimality criterion is defined as follows.
\begin{defn} \label{defn:IMSE}
The \textbf{elastic integrated mean squared prediction error (EIMSE)} is defined in terms of the difference between the true mean response, $\mu(\vx)$, and the fitted mean response $\hat{\mu}(\vx)$ as:
\begin{equation} \label{Res-IMSEdef}
\text{EIMSE}(\xi,\vbeta, F_{\IMSE})=\E\left[\int_{\Omega} \left(\hat{\mu}(\vx)-\mu(\vx)\right)^2\dif F_{\IMSE}(\vx) \right],
\end{equation}
where $F_{\IMSE}$ is the probability distribution induced by probability measure $\nu$.
\end{defn}

The classical I-optimality for linear models is defined as the average variance of response over the experimental region $\Omega$ (\citeO{atkinson2014optimal}):
$$I(\xi) = \int_{\Omega}\var(Y|\vx)\dif \vx\left/\int_{\Omega}\dif \vx\right. = \E\left[\int_{\Omega} \left(\hat{\mu}(\vx)-\mu(\vx)\right)^2\dif F_{\unif}(\vx) \right],$$
where $F_{\unif}$ is the uniform distribution on $\Omega$, $\hat{\mu}$ and $\mu$ are the fitted mean response and true mean response of the linear model, respectively. Obviously, the I-optimality $I(\xi)$ is a special case of EI-optimality in \eqref{Res-IMSEdef} with $\nu$ chosen to be the uniform probability measure.
In \citeI{06atk} (Chapter 10.6), the author briefly mentioned a similar criterion on average prediction variance involving a probability distribution, without
theoretical investigation and efficient algorithms.
Here, we formally propose the EI-optimality and discuss in detail about its advantages.
By introducing the probability measure $\nu$, the EI-optimatliy is more flexible to assist practical applications in several scenarios:
(1) the responses corresponding to $\vx$ on a subregion $\Omega_C$ are of particular interest; (2) the responses corresponding to different values of $\vx$ are of different importance; (3) the responses corresponding to finite number of $\vx$ values are of interest. It is worthy to point out that when $\nu$ is chosen to be the Dirac measure that puts a unit mass at $\vx_0$ that maximizes $\E\left[(\hat{\mu}(\vx)-\mu(\vx))^2\right]$, i.e., $\E\left[(\hat{\mu}(\vx_0)-\mu(\vx_0))^2\right] = \sup\limits_{\vx\in\Omega}\,\,\, \E\left[(\hat{\mu}(\vx)-\mu(\vx))^2\right]$, the  $\EI(\xi,\vbeta,F_{\IMSE})$ becomes the G-optimality (\citeO{06atk}) that focuses on the maximum variance of the mean response.

Under the context of generalized linear models, the fitted mean response $\hat{\mu}(\vx) = h^{-1}\left(\hat{\vbeta}^T\vg(\vx)\right)$ can be expanded around the true mean response $\mu(\vx) = h^{-1}\left(\vbeta^T\vg(\vx)\right)$ using Taylor expansion, which is
\begin{eqnarray*}
\hat{\mu}(\vx)-\mu(\vx) = h^{-1}\left(\hat{\vbeta}^T\vg(\vx)\right) - h^{-1}\left(\vbeta^T\vg(\vx)\right)\approx\vc(\vx)^T\left(\hat{\vbeta}-\vbeta\right),
\end{eqnarray*}
with $\boldsymbol{c}(\vx) = \left(\frac{\partial h^{-1}}{\partial\beta_1}(\vx),...,\frac{\partial h^{-1}}{\partial\beta_l}(\vx)\right)^T = \vg(\vx)\left(\frac{\dif h^{-1}}{\dif \eta}\right)$.
Here $\eta=\vbeta^T\vg(\vx)$ is the linear predictor. Then, using the above first-order Taylor expansion, the elastic integrated mean squared error defined in equation \eqref{Res-IMSEdef} could be approximated with,
$$\EI(\xi,\vbeta,F_{\IMSE}) = \E\left[\int_{\Omega} \vc(\vx)^T\left(\hat{\vbeta}-\vbeta\right)\left(\hat{\vbeta}-\vbeta\right)^T\vc(\vx)\dif  F_{\IMSE}\right].$$
In numerical analysis, the first-order Taylor expansion is often used to approximate the difference of a nonlinear function between adjacent points.
Considering the design issue for generalized linear models, similar approaches was commonly used in the literature, such as the work in \citeI{07stein} for the logistic regression models.

\begin{lem}\label{lem:IMSE}
For the GLMs in \eqref{eqn:GLM}, the Elastic IMSE $\EI(\xi,\vbeta,F_{\IMSE})$ can be expressed as
$$\EI(\xi,\vbeta,F_{\IMSE}) = \tr\left(\mA \mI(\xi,\vbeta)^{-1}\right),$$
where the matrix $\mA = \int_{\Omega}\vc(\vx)\vc(\vx)^T\dif F_{\IMSE}(\vx)=\int_{\Omega}\vg(\vx)\vg^T(\vx)\left[\frac{\dif h^{-1}}{\dif \eta}\right]^2\dif F_{\IMSE}(\vx)$ depends only on the regression coefficients, basis functions and the probability distribution $F_{\IMSE}$, but not the design. The Fisher information matrix $\mI(\xi,\vbeta)$, defined in equation (\ref{eqn:fisher}), depends on regression coefficients, basis functions and design, but not on the probability distribution $F_{\IMSE}$.
\end{lem}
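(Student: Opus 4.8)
The plan is to start from the first-order approximation already derived above,
$$\EI(\xi,\vbeta,F_{\IMSE}) \approx \E\left[\int_{\Omega} \vc(\vx)^T\left(\hat{\vbeta}-\vbeta\right)\left(\hat{\vbeta}-\vbeta\right)^T\vc(\vx)\dif F_{\IMSE}\right],$$
and to push the scalar integrand through a trace/cyclic-permutation manipulation until the deterministic part and the stochastic part separate. Since the integrand is a scalar, it equals its own trace, and the cyclic property of the trace gives $\vc(\vx)^T(\hat{\vbeta}-\vbeta)(\hat{\vbeta}-\vbeta)^T\vc(\vx) = \tr\left(\vc(\vx)\vc(\vx)^T(\hat{\vbeta}-\vbeta)(\hat{\vbeta}-\vbeta)^T\right)$.

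Next I would interchange the expectation with the integral over $\Omega$ (justified by Fubini's theorem under the square-integrability assumption on $\mu$ stated before Definition \ref{defn:IMSE}) and then move the expectation inside the trace by linearity, noting that $\vc(\vx)\vc(\vx)^T$ is nonrandom. This yields
$$\EI(\xi,\vbeta,F_{\IMSE}) \approx \int_{\Omega} \tr\left(\vc(\vx)\vc(\vx)^T\,\E\left[(\hat{\vbeta}-\vbeta)(\hat{\vbeta}-\vbeta)^T\right]\right)\dif F_{\IMSE}(\vx).$$

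The key substitution is the statistical one: $\E[(\hat{\vbeta}-\vbeta)(\hat{\vbeta}-\vbeta)^T]$ is the mean squared error matrix of $\hat{\vbeta}$, and since $\hat{\vbeta}$ is the maximum likelihood estimator it is asymptotically unbiased with asymptotic covariance equal to the inverse Fisher information, so $\E[(\hat{\vbeta}-\vbeta)(\hat{\vbeta}-\vbeta)^T] \approx \mI(\xi,\vbeta)^{-1}$. Substituting this and then pulling the (now $\vx$-free) matrix $\mI(\xi,\vbeta)^{-1}$ out and the integral inside the trace by linearity gives $\tr\left(\left(\int_\Omega \vc(\vx)\vc(\vx)^T\dif F_{\IMSE}\right)\mI(\xi,\vbeta)^{-1}\right) = \tr(\mA\,\mI(\xi,\vbeta)^{-1})$. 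The two equivalent forms of $\mA$ follow immediately by writing $\vc(\vx) = \vg(\vx)(\dif h^{-1}/\dif\eta)$, so that $\vc(\vx)\vc(\vx)^T = \vg(\vx)\vg^T(\vx)[\dif h^{-1}/\dif\eta]^2$.

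I expect the main obstacle to be justifying, rather than merely invoking, the identification $\E[(\hat{\vbeta}-\vbeta)(\hat{\vbeta}-\vbeta)^T] \approx \mI(\xi,\vbeta)^{-1}$, since this is the only genuinely approximate (asymptotic) step and relies on the standard large-sample theory of the MLE together with the first-order Taylor linearization already invoked; all the remaining manipulations (trace cyclicity, Fubini, linearity of expectation and trace) are routine and exact. It is worth remarking that the ``$\approx$'' in the statement is inherited entirely from these two approximations and is otherwise an equality, so the content of the lemma is really the clean separation of the design-dependent factor $\mI(\xi,\vbeta)^{-1}$ from the design-free factor $\mA$.
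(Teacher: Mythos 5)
Your proposal is correct and follows essentially the same route as the paper's own proof: both start from the first-order Taylor-expanded expression, exchange expectation with the integral, invoke the asymptotic MLE covariance $\E[(\hat{\vbeta}-\vbeta)(\hat{\vbeta}-\vbeta)^T]\approx \mI(\xi,\vbeta)^{-1}$, and finish with the trace-cyclicity and linearity manipulations to separate $\mA$ from $\mI(\xi,\vbeta)^{-1}$. The only differences are cosmetic (you apply the trace identity before pushing the expectation inside, and you flag the Fubini step explicitly), so there is nothing to add.
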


Hereafter, we call the \textit{EI-optimality} as the corresponding optimality criterion aiming at minimizing
$$\EI(\xi,\vbeta,F_{\IMSE}) = \tr\left(\mA \mI(\xi,\vbeta)^{-1}\right).$$
A design $\xi^*$ is called an \textit{EI-optimal design} if it minimizes $\EI(\xi,\vbeta,F_{\IMSE})$. In this work, we will focus on local EI-optimal designs given some initial regression coefficient $\vbeta$.
To simplify the notation, $\vbeta$ will be omitted from the notation $\mI(\xi,\vbeta)$ of the Fisher information matrix, and $\mI(\xi)$ will be used instead.

\section{Algorithms for Finding Optimal Weights Given Design Points}\label{section:optweights}

In this section, we will investigate how to assign optimal weights to support points that minimize $\EI(\xi,\vbeta,F_{\IMSE})$ when the design points are given.
One popular method in the literature (\citeO{13yang}) is the Newton-Raphson technique to find the optimal weights, which calculates the roots of the first-order partial derivatives of the criterion with respect to the weights.
There are certain possible drawbacks of such a Newton-Raphson based method:
First, it may result in weights outside $[0,1]$ and thus further efforts are needed.
Second, it requires the inversion of Hessian matrix, which could be (numerically) singular.
Moreover, as noted in \citeI{13yang}, there is no guarantee of convergence.
The problems of Newton-Raphson method will be illustrated in the numerical examples in Section \ref{section:Example}.

Specifically, we will derive a theorem on optimal weights given design points for $\Phi_p$-optimality (\citeO{kiefer1974general}) in Section 3.1,
and then will take the mathematical structure of EI-optimality as a slight variation of $\Phi_1$-optimality.
Guided by this theorem, we develop an efficient algorithm (\textbf{Algorithm 1}) to find the optimal weights given design points in Section 3.2.
Note that \textbf{Algorithm 1} can be used for both $\Phi_p$-optimality and EI-optimality.
Interestingly, this algorithm coincides with the well-known multiplicative algorithm,  providing a good justification of applying multiplicative algorithm in our sequential algorithm in Section \ref{sec: Algorithm}.
There are several advantages of the multiplicative algorithm: simple to implement, ensuring feasible weights, guaranteed convergence, and no Hessian matrix inversion.

\subsection{Properties of Optimal Weights Given Design Points}
 Following the definition  given in \citeI{75kiefer}, the $\Phi_p$-optimality is defined as
$$\Phi_p(\xi) = \left[\tr\left(\mI(\xi)^{-p}\right)\right]^{1/p},\,\,\,0< p<\infty$$
with $\Phi_0(\xi)$ as D-optimality, $\Phi_{\infty}(\xi)$ as E-optimality and $\Phi_1(\xi)$ as A-optimality. Even more generally, one may be interested in several functions $\vf(\vbeta) = [f_1(\vbeta),...,f_q(\vbeta)]^T$ of regression coefficient $\vbeta$. Then, the more general $\Phi_p$-optimality  is defined as (\citeO{kiefer1974general})
\begin{eqnarray}\label{eq:phi_p}
\Phi_p(\xi) = \left(q^{-1}\tr\left[\frac{\partial \vf(\vbeta)}{\partial \vbeta^T}\mI(\xi)^{-1}\left(\frac{\partial \vf(\vbeta)}{\partial \vbeta^T}\right)^T\right]^p\right)^{1/p},\,\,\,0<p<\infty.
\end{eqnarray}

Given the fixed design points $\vx_1,\vx_2,...,\vx_n$,
define $\vlambda = [\lambda_1,\lambda_2,...,\lambda_n]^T$ to be the weight vector with $\lambda_i$ as the weight of the corresponding design point $\vx_i$.
We write the corresponding design as
$$\xi^{\vlambda} = \left\{\begin{array}{ccc}
\vx_1,&...,&\vx_n\\
\lambda_1,&...,&\lambda_n
\end{array}\right\}.$$
A superscript $\vlambda$ is added to emphasize that only the weight vector is changeable in the design under this situation. The optimal weight vector $\vlambda^*$ should be the one that minimizes $\Phi_p(\xi^{\vlambda})$ in \eqref{eq:phi_p} with design points $\vx_1,...,\vx_n$ fixed.

\begin{lem}\label{lem:ConWeight}
The $\Phi_p$ is a convex function of weight vector $\vlambda$, and
$$\Phi_p(\xi^{\vlambda}) = \left[q^{-1}\tr\left[\mI^{-1}(\xi^{\vlambda})\mB\right]^{p}\right]^{1/p},$$
where $\mB = \left(\frac{\partial \vf(\vbeta)}{\partial \vbeta^T}\right)^T\left(\frac{\partial \vf(\vbeta)}{\partial \vbeta^T}\right)$ is positive semidefinite with size $l\times l$ and rank $q\leq l$, with $q$ as the length of $\vf$.
\end{lem}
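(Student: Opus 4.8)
The plan is to prove the two assertions separately: first the algebraic rewriting of $\Phi_p(\xi^{\vlambda})$ in terms of $\mB$, and then the convexity in $\vlambda$. I would begin with the rewriting, since it is a direct trace manipulation and it also exposes the structure that makes the convexity argument transparent. Write $\mF = \frac{\partial \vf(\vbeta)}{\partial \vbeta^T}$ for the $q\times l$ Jacobian, so that $\mB = \mF^T\mF$ and the quantity inside \eqref{eq:phi_p} is $\tr\big[(\mF\,\mI^{-1}(\xi^{\vlambda})\,\mF^T)^p\big]$.

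The key step for the rewriting is the standard fact that for any conformable matrices $P$ and $Q$, the products $PQ$ and $QP$ share the same nonzero eigenvalues with the same multiplicities. Taking $P=\mF$ and $Q=\mI^{-1}(\xi^{\vlambda})\mF^T$ gives $PQ=\mF\,\mI^{-1}(\xi^{\vlambda})\,\mF^T$ (of size $q\times q$) and $QP=\mI^{-1}(\xi^{\vlambda})\,\mF^T\mF=\mI^{-1}(\xi^{\vlambda})\mB$ (of size $l\times l$). The first is symmetric positive semidefinite and the second is similar to $\mI^{-1/2}\mB\,\mI^{-1/2}\succeq 0$, so in both cases the eigenvalues are real and nonnegative and the matrix power is well defined through the eigendecomposition; since $p>0$ forces any zero eigenvalue to contribute $0^p=0$ to the trace, I get $\tr[(PQ)^p]=\tr[(QP)^p]$, which is exactly $\tr[(\mF\,\mI^{-1}\mF^T)^p]=\tr[(\mI^{-1}\mB)^p]$ and yields the claimed formula. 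Along the way the stated properties of $\mB$ follow: $\mB=\mF^T\mF$ is symmetric positive semidefinite of size $l\times l$, and $\mathrm{rank}(\mB)=\mathrm{rank}(\mF)=q$ whenever the Jacobian has full row rank (so $q\le l$).

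For the convexity, the crucial observation is that $\mI(\xi^{\vlambda})=\sum_i\lambda_i\,\vg(\vx_i)w(\vx_i)\vg^T(\vx_i)$ is \emph{affine} in $\vlambda$; hence it suffices to show that $M\mapsto \big[q^{-1}\tr((\mF M^{-1}\mF^T)^p)\big]^{1/p}$ is convex on the cone of positive definite matrices, and then precompose with the affine map $\vlambda\mapsto\mI(\xi^{\vlambda})$. I would assemble this from two classical ingredients: (i) matrix inversion $M\mapsto M^{-1}$ is operator convex, so for $M_\theta=\theta M_1+(1-\theta)M_2$ one has $M_\theta^{-1}\preceq \theta M_1^{-1}+(1-\theta)M_2^{-1}$, and applying the order-preserving congruence $N\mapsto \mF N\mF^T$ gives the analogous Loewner inequality for $\mF M_\theta^{-1}\mF^T$; and (ii) the functional $K\mapsto[\tr(K^p)]^{1/p}$ on positive semidefinite matrices is Loewner monotone for every $p>0$ (by Weyl's monotonicity of eigenvalues).

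For $p\ge 1$ the functional in (ii) is the Schatten $p$-norm and hence convex, so chaining monotonicity with the Loewner inequality from (i) closes the argument at once; this already covers the A-optimality case $p=1$ and therefore the EI-optimality of interest, since $\EI(\xi,\vbeta,F_{\IMSE})=\tr(\mA\,\mI(\xi)^{-1})$ is precisely $\Phi_1$ with $\mB=\mA$. The main obstacle is the range $0<p<1$, where $K\mapsto[\tr(K^p)]^{1/p}$ is concave rather than convex, so the naive monotone-convex composition fails. There I would argue instead that composition with the matrix inverse restores convexity (as one can already verify in the diagonal case, where $(m_1,\dots,m_k)\mapsto[\sum_i m_i^{-p}]^{1/p}$ is convex on the positive orthant for every $p>0$), or simply invoke Kiefer's general result \cite{kiefer1974general} that the $\Phi_p$ criteria are convex in the information matrix for all $0<p<\infty$. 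This concavity-for-small-$p$ phenomenon is the one genuinely delicate point; the rest is routine.
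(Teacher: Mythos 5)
Your proposal splits into a part that works and a part with a genuine gap. The trace rewriting is correct and is the same eigenvalue argument the paper uses ($PQ$ and $QP$ share their nonzero spectrum). Your convexity argument for $p\geq 1$ is also correct, and it is genuinely different from the paper's: you chain the Loewner inequality $\mF M_\theta^{-1}\mF^T \preceq \theta\,\mF M_1^{-1}\mF^T+(1-\theta)\,\mF M_2^{-1}\mF^T$ (operator convexity of inversion plus congruence) with Weyl monotonicity and the Schatten-norm triangle inequality, whereas the paper never invokes operator convexity of the inverse at all. As you note, this covers $p=1$ and hence the EI-optimality criterion that the rest of the paper actually needs.

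The gap is the range $0<p<1$, which the lemma as stated ($0<p<\infty$) includes. Your diagonal-case verification that $(m_1,\dots,m_k)\mapsto\left[\sum_i m_i^{-p}\right]^{1/p}$ is convex does not prove the matrix statement, because $\mF M_1^{-1}\mF^T$ and $\mF M_2^{-1}\mF^T$ need not commute, so there is no simultaneous diagonalization reducing the claim to the scalar one; and your fallback of citing Kiefer for convexity of $\Phi_p$ in the information matrix outsources exactly the assertion under proof. The paper closes this case (indeed all $p>0$ at once, with no case split) by making your instinct that ``composition with the inverse restores convexity'' precise through a different factorization of the composition. It Smith-decomposes $\mB=\mS^T\diag(\boldsymbol{I}_q,\boldsymbol{0})\,\mS$ and rewrites
$$\tr\left[\mI^{-1}(\xi^{\vlambda})\mB\right]^{p}=\tr\left(\mC(\vlambda)^{-p}\right),\qquad \mC(\vlambda)=\left[\left(\boldsymbol{I}_q\,\,\boldsymbol{0}\right)\left((\mS^T)^{-1}\mI(\xi^{\vlambda})\mS^{-1}\right)^{-1}\left(\boldsymbol{I}_q\,\,\boldsymbol{0}\right)^T\right]^{-1},$$
so the criterion becomes the functional $\mC\mapsto\left[\tr\left(\mC^{-p}\right)\right]^{1/p}$, which is nonincreasing and convex for \emph{every} $p>0$ (\cite{fedorov1997model}, page 22), evaluated at the parameter-subsystem information mapping $\mC(\vlambda)$, which is matrix concave in the affine-in-$\vlambda$ argument $(\mS^T)^{-1}\mI(\xi^{\vlambda})\mS^{-1}$ (\cite{93puk}, Section 3.13). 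A nonincreasing convex function of a concave matrix map is convex, uniformly in $p$. To repair your proof, either reproduce this decomposition---view $\mF M^{-1}\mF^T$ as the inverse of the matrix-concave map $M\mapsto(\mF M^{-1}\mF^T)^{-1}$ and apply the nonincreasing convex functional $\left[\tr\left(\,\cdot\,^{-p}\right)\right]^{1/p}$ to that concave map---or restrict your self-contained argument to $p\geq 1$.
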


Consider another weight vector $\Delta{\vlambda} = [\Delta{\lambda}_1,...,\Delta{\lambda}_n]^T$ with $0\leq \Delta{\lambda}_i\leq 1, i=1,...,n$ and $\sum\limits_{i=1}^n\Delta{\lambda}_i = 1$ .
Then the convex combination $\tilde{\vlambda} = (1-\alpha)\vlambda+\alpha\Delta{\vlambda}$ with $0\leq\alpha\leq 1$ is also a feasible weight vector.
Define the directional derivative of $\Phi_p(\xi^{\vlambda})$ in the direction of a new weight vector $\Delta{\vlambda} = [\Delta{\lambda}_1,...,\Delta{\lambda}_n]$ as:
$$\psi(\Delta{\vlambda},\vlambda) = \lim\limits_{\alpha\rightarrow 0^+}\frac{\Phi_p(\xi^{\tilde{\vlambda}})-\Phi_p(\xi^{\vlambda})}{\alpha}.$$

\begin{lem}\label{lem:DerWeight}
The directional derivative of $\Phi_p(\xi^{\vlambda})$ in the direction of a new weight vector $\Delta{\vlambda} = [\Delta{\lambda}_1,...,\Delta{\lambda}_n]^T$ can be calculated as,
$$\psi(\Delta{\vlambda},\vlambda) = \Phi_p(\xi^{\vlambda})-q^{-1/p}\left[\tr\left(\mI(\xi^{\vlambda})^{-1}\mB\right)^p\right]^{1/p-1}\tr\left[\left(\mI(\xi^{\vlambda})^{-1}\mB\right)^{p-1} \mI(\xi^{\vlambda})^{-1}\mI(\xi^{\Delta\vlambda})\mI(\xi^{\vlambda})^{-1}\mB\right],$$
where $\mB = \left(\frac{\partial \vf(\vbeta)}{\partial \vbeta^T}\right)^T\left(\frac{\partial \vf(\vbeta)}{\partial \vbeta^T}\right)$ .
\end{lem}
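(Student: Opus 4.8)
The plan is to reduce everything to an ordinary one-variable derivative of the scalar function $\phi(\alpha) := \Phi_p(\xi^{\tilde{\vlambda}})$ at $\alpha = 0$, exploiting the fact that the Fisher information matrix is \emph{linear} in the weights. From the definition in \eqref{eqn:fisher}, $\mI(\xi^{\tilde{\vlambda}}) = (1-\alpha)\mI(\xi^{\vlambda}) + \alpha\,\mI(\xi^{\Delta\vlambda})$, so $\alpha \mapsto \mI(\xi^{\tilde{\vlambda}})$ is an affine matrix path whose derivative at $\alpha = 0$ is $\mI(\xi^{\Delta\vlambda}) - \mI(\xi^{\vlambda})$. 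Since $\mI(\xi^{\vlambda})$ is nonsingular, the path stays nonsingular for small $\alpha$, so $\phi$ is smooth near $0$ and the one-sided limit defining $\psi(\Delta\vlambda,\vlambda)$ is simply $\phi'(0)$.

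Next I would substitute the closed form from Lemma \ref{lem:ConWeight}. Writing $T(\alpha) := \tr[\mI(\xi^{\tilde{\vlambda}})^{-1}\mB]^p$ and $G(\alpha) := \mI(\xi^{\tilde{\vlambda}})^{-1}\mB$, we have $\phi(\alpha) = (q^{-1}T(\alpha))^{1/p}$, and the chain rule gives $\phi'(0) = \tfrac{1}{p}q^{-1/p}T(0)^{1/p-1}T'(0)$. Two matrix-calculus facts then produce $T'(0)$: first, $\tfrac{d}{d\alpha}\tr[G(\alpha)^p] = p\,\tr[G(\alpha)^{p-1}G'(\alpha)]$; second, the derivative of a matrix inverse yields $G'(\alpha) = -\mI(\xi^{\tilde{\vlambda}})^{-1}[\mI(\xi^{\Delta\vlambda})-\mI(\xi^{\vlambda})]\mI(\xi^{\tilde{\vlambda}})^{-1}\mB$.

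Evaluating at $\alpha = 0$ and expanding $G'(0)$ splits $T'(0)$ into two traces. The term coming from the $+\mI(\xi^{\vlambda})$ piece of $G'(0)$ collapses, via $(\mI(\xi^{\vlambda})^{-1}\mB)^{p-1}\mI(\xi^{\vlambda})^{-1}\mB = (\mI(\xi^{\vlambda})^{-1}\mB)^p$, back into $p\,T(0)$; the term coming from the $\mI(\xi^{\Delta\vlambda})$ piece is exactly $-p\,\tr[(\mI(\xi^{\vlambda})^{-1}\mB)^{p-1}\mI(\xi^{\vlambda})^{-1}\mI(\xi^{\Delta\vlambda})\mI(\xi^{\vlambda})^{-1}\mB]$. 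Substituting these into $\phi'(0)$, the $p\,T(0)$ contribution recombines as $q^{-1/p}T(0)^{1/p} = \Phi_p(\xi^{\vlambda})$, while the remaining contribution is precisely the asserted second term, giving the claimed formula.

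I expect the only delicate point to be justifying the trace-power derivative identity $\tfrac{d}{d\alpha}\tr[G^p] = p\,\tr[G^{p-1}G']$ for non-integer $p$, where $G = \mI(\xi^{\tilde{\vlambda}})^{-1}\mB$ need not be symmetric. For integer $p$ this is just the cyclic invariance of the trace applied to the $p$ terms of the product rule; for general $p$ I would dispatch it by passing to the symmetrized matrix $\mI(\xi^{\vlambda})^{-1/2}\mB\,\mI(\xi^{\vlambda})^{-1/2}$, which shares the spectrum of $G$ and is positive semidefinite, so $G$ is diagonalizable with nonnegative real eigenvalues and the ordinary spectral calculus for the scalar power function $x \mapsto x^p$ applies. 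Everything else is routine differentiation and bookkeeping with the cyclic property of the trace.
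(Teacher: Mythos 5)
Your proposal is correct and follows essentially the same route as the paper's proof: both exploit the linearity of $\mI(\xi^{\tilde{\vlambda}})$ in $\alpha$, the closed form $\Phi_p(\xi^{\vlambda}) = \left(q^{-1}\tr\left[\mI(\xi^{\vlambda})^{-1}\mB\right]^{p}\right)^{1/p}$ from Lemma \ref{lem:ConWeight}, the derivative-of-inverse formula, and the chain rule for $\tr\left[G^p\right]$, then recombine the $\left(\mI(\xi^{\vlambda})^{-1}\mB\right)^p$ term back into $\Phi_p(\xi^{\vlambda})$. Your extra justification of the identity $\frac{\dif}{\dif\alpha}\tr\left[G^p\right] = p\,\tr\left[G^{p-1}G'\right]$ for non-integer $p$ via the symmetrization $\mI(\xi^{\vlambda})^{-1/2}\mB\,\mI(\xi^{\vlambda})^{-1/2}$ addresses a point the paper silently assumes, so your write-up is, if anything, more careful.
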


\begin{rmk}\label{rmk:Ic}
Note that $\mA = \int_{\Omega}\vg(\vx)\vg^T(\vx)\left[\frac{\dif h^{-1}}{\dif \eta}\right]^2\dif F_{\IMSE}(\vx)$ is an $l\times l$ positive semi-definite matrix. Although EI-optimality is not a member of $\Phi_p$-optimality, since it has the same mathematical structure as $\Phi_1(\xi)$ with $\mB = q\mA$ defined in Lemma \ref{lem:IMSE}, the mathematical properties of $\Phi_p$-optimality can be applied to EI-optimality.
\end{rmk}

\begin{cor} \label{cor:I-weight_der}
For EI-optimality, the directional derivative in the direction of a new weight vector $\Delta{\vlambda}$ is
$$\psi(\Delta{\vlambda},\vlambda) = \tr\left(\mI(\xi^{\vlambda})^{-1}\mA\right)-\sum\limits_{i=1}^n\Delta{\lambda}_iw(\vx_i)\vg(\vx_i)^T\mI(\xi^{\vlambda})^{-1}\mA\mI(\xi^{\vlambda})^{-1}\vg(\vx_i).$$
\end{cor}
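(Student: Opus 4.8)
The plan is to obtain the corollary as a direct specialization of the general directional-derivative formula in Lemma \ref{lem:DerWeight}, using the identification made in Remark \ref{rmk:Ic}. Since EI-optimality shares the mathematical structure of $\Phi_1(\xi)$ with $\mB = q\mA$, I would set $p = 1$ and replace $\mB$ by $q\mA$ throughout the expression for $\psi(\Delta\vlambda,\vlambda)$, and then simplify the two terms separately.

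First I would handle the leading term. With $p = 1$ and $\mB = q\mA$, Lemma \ref{lem:ConWeight} gives $\Phi_1(\xi^{\vlambda}) = q^{-1}\tr(\mI(\xi^{\vlambda})^{-1}\,q\mA) = \tr(\mI(\xi^{\vlambda})^{-1}\mA)$, which is precisely the first term of the claimed expression. Next I would simplify the second term. At $p = 1$ the scalar prefactor $[\tr(\mI(\xi^{\vlambda})^{-1}\mB)^p]^{1/p-1}$ has exponent $1/p - 1 = 0$ and hence equals $1$, the matrix power $(\mI(\xi^{\vlambda})^{-1}\mB)^{p-1}$ reduces to the identity, and $q^{-1/p} = q^{-1}$. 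The second term therefore collapses to $q^{-1}\tr[\mI(\xi^{\vlambda})^{-1}\mI(\xi^{\Delta\vlambda})\mI(\xi^{\vlambda})^{-1}\mB]$, and substituting $\mB = q\mA$ cancels the factor $q^{-1}$, leaving $\tr[\mI(\xi^{\vlambda})^{-1}\mI(\xi^{\Delta\vlambda})\mI(\xi^{\vlambda})^{-1}\mA]$.

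The remaining step is to convert this trace into the stated sum of quadratic forms. I would expand $\mI(\xi^{\Delta\vlambda}) = \sum_{i=1}^n \Delta\lambda_i\, w(\vx_i)\,\vg(\vx_i)\vg^T(\vx_i)$ using the definition \eqref{eqn:fisher} of the Fisher information matrix, then pull the finite sum outside the trace by linearity. Each summand is a trace of a matrix containing the rank-one factor $\vg(\vx_i)\vg^T(\vx_i)$, so the cyclic property of the trace turns it into the scalar $\vg(\vx_i)^T\mI(\xi^{\vlambda})^{-1}\mA\,\mI(\xi^{\vlambda})^{-1}\vg(\vx_i)$. Collecting the terms yields exactly the claimed formula. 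This argument is essentially a substitution, so I do not anticipate a genuine obstacle; the only point deserving care is the degenerate behavior of the prefactor and matrix power at $p = 1$ (exponent $0$ and zeroth matrix power), which I would if necessary cross-check by differentiating $\tr(\mI(\xi^{\tilde{\vlambda}})^{-1}\mA)$ directly in $\alpha$ at $\alpha = 0$, since the structural heavy lifting has already been done in Lemmas \ref{lem:ConWeight} and \ref{lem:DerWeight}.
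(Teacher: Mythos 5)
Your proposal is correct and follows essentially the same route as the paper's proof: both specialize Lemma \ref{lem:DerWeight} with $p=1$ and $\mB = q\mA$ (per Remark \ref{rmk:Ic}), then expand $\mI(\xi^{\Delta\vlambda}) = \sum_{i=1}^n \Delta\lambda_i\, w(\vx_i)\vg(\vx_i)\vg^T(\vx_i)$ and use linearity together with the cyclic property of the trace to obtain the sum of quadratic forms. Your extra care with the degenerate exponents at $p=1$ is sound but does not change the argument.
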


The following theorem provides a necessary and sufficient condition of optimal weight vector that minimizes $\Phi_p(\xi)$ when the design points are fixed.
\begin{thm}\label{thm:OptWeights}
Given a fixed set of design points $\vx_1,...,\vx_n$, the weight vector $\vlambda^* = [\lambda^*_1,...,\lambda^*_n]^T$ minimizes $\Phi_p(\xi^{\vlambda})$ if and only if,
$$\Phi_p(\xi^{\vlambda^*})=q^{-1/p}\left[\tr\left(\mI(\xi^{\vlambda^*})^{-1}\mB\right)^p\right]^{1/p-1}w(\vx_i)\vg(\vx_i)^T\mI(\xi^{\vlambda^*})^{-1}\mB\left(\mI(\xi^{\vlambda^*})^{-1}\mB\right)^{p-1}\mI(\xi^{\vlambda^*})^{-1}\vg(\vx_i),$$ for all design points $\vx_i$ with $\lambda_i^*>0$;
and $$\Phi_p(\xi^{\vlambda^*})\geq q^{-1/p}\left[\tr\left(\mI(\xi^{\vlambda^*})^{-1}\mB\right)^p\right]^{1/p-1}w(\vx_j)\vg(\vx_j)^T\mI(\xi^{\vlambda^*})^{-1}\mB\left(\mI(\xi^{\vlambda^*})^{-1}\mB\right)^{p-1}\mI(\xi^{\vlambda^*})^{-1}\vg(\vx_j),$$ for all design points $\vx_j$ with $\lambda_j^*=0$.
\end{thm}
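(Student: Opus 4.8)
The plan is to exploit the convexity of $\Phi_p(\xi^{\vlambda})$ established in Lemma \ref{lem:ConWeight} together with the directional-derivative formula in Lemma \ref{lem:DerWeight}. Since the feasible set of weight vectors (the probability simplex $\{\vlambda:\lambda_i\geq 0,\ \sum_i\lambda_i=1\}$) is convex and $\Phi_p(\xi^{\vlambda})$ is convex on it, a point $\vlambda^*$ is a global minimizer if and only if every feasible directional derivative at $\vlambda^*$ is nonnegative, i.e.\ $\psi(\Delta\vlambda,\vlambda^*)\geq 0$ for every probability vector $\Delta\vlambda$. This reduces the theorem to analyzing the sign of $\psi$, and I would cite this equivalence from Lemma \ref{lem:ConWeight} and the standard first-order optimality condition for a convex function over a convex constraint set.

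First I would rewrite $\psi$ in a per-point form. Substituting $\mI(\xi^{\Delta\vlambda})=\sum_i\Delta\lambda_i\,w(\vx_i)\vg(\vx_i)\vg(\vx_i)^T$ into the formula of Lemma \ref{lem:DerWeight} and using the cyclic invariance of the trace (so that $\tr[\cdots\vg(\vx_i)\vg(\vx_i)^T]=\vg(\vx_i)^T[\cdots]\vg(\vx_i)$ is a scalar), I obtain
$$\psi(\Delta\vlambda,\vlambda^*)=\Phi_p(\xi^{\vlambda^*})-\sum_{i=1}^n\Delta\lambda_i\,d_i(\vlambda^*),$$
where $d_i(\vlambda^*)=q^{-1/p}[\tr(\mI(\xi^{\vlambda^*})^{-1}\mB)^p]^{1/p-1}w(\vx_i)\vg(\vx_i)^T\mI(\xi^{\vlambda^*})^{-1}\mB(\mI(\xi^{\vlambda^*})^{-1}\mB)^{p-1}\mI(\xi^{\vlambda^*})^{-1}\vg(\vx_i)$ is exactly the right-hand side appearing in the theorem. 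Because $\psi(\Delta\vlambda,\vlambda^*)$ is affine in $\Delta\vlambda$ and $\Delta\vlambda$ ranges over the simplex, its minimum is attained at a vertex $\boldsymbol{e}_i$; hence $\psi(\Delta\vlambda,\vlambda^*)\geq 0$ for all $\Delta\vlambda$ is equivalent to the pointwise bound $d_i(\vlambda^*)\leq\Phi_p(\xi^{\vlambda^*})$ for every $i$.

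The remaining ingredient is a normalization identity obtained by feeding $\Delta\vlambda=\vlambda^*$ back into the same computation: since then $\mI(\xi^{\Delta\vlambda})=\mI(\xi^{\vlambda^*})$, the bracketed trace collapses to $\tr(\mI(\xi^{\vlambda^*})^{-1}\mB)^p$, and one finds the mean-value identity $\sum_{i=1}^n\lambda_i^*\,d_i(\vlambda^*)=\Phi_p(\xi^{\vlambda^*})$, valid at any feasible $\vlambda^*$. Combining this identity with the optimality bound $d_i(\vlambda^*)\leq\Phi_p(\xi^{\vlambda^*})$ gives $\sum_i\lambda_i^*\,[\Phi_p(\xi^{\vlambda^*})-d_i(\vlambda^*)]=0$ with every summand nonnegative; hence each summand vanishes, forcing $d_i(\vlambda^*)=\Phi_p(\xi^{\vlambda^*})$ whenever $\lambda_i^*>0$ and leaving $d_j(\vlambda^*)\leq\Phi_p(\xi^{\vlambda^*})$ when $\lambda_j^*=0$. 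These are precisely the two displayed conditions, and the argument is reversible: assuming both conditions yields $d_i(\vlambda^*)\leq\Phi_p(\xi^{\vlambda^*})$ for all $i$, hence $\psi(\Delta\vlambda,\vlambda^*)\geq 0$ for every $\Delta\vlambda$, hence global optimality by convexity.

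I expect the main obstacle to be less the algebra than the two structural steps that make it work: (i) the convexity-based equivalence between ``all feasible directional derivatives nonnegative'' and global optimality, supplied by Lemma \ref{lem:ConWeight}; and (ii) the self-substitution identity $\sum_i\lambda_i^*\,d_i(\vlambda^*)=\Phi_p(\xi^{\vlambda^*})$, which is the crucial step converting a one-sided inequality into the complementary-slackness equalities. Throughout I would assume $\mI(\xi^{\vlambda^*})$ is nonsingular so that every inverse and the derivative formula of Lemma \ref{lem:DerWeight} are well defined; degenerate supports for which the information matrix is singular would be excluded by the standing regularity assumptions.
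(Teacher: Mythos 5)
Your proof is correct and takes essentially the same route as the paper's: it rests on the convexity from Lemma \ref{lem:ConWeight}, the directional-derivative formula of Lemma \ref{lem:DerWeight} rewritten per point via the cyclic trace identity, and the normalization identity $\sum_{i}\lambda_i^*\,d_i(\vlambda^*)=\Phi_p(\xi^{\vlambda^*})$, with your complementary-slackness step being just the direct (non-contradiction) form of the paper's contradiction argument at the support points. The only substantive difference is in your favor: you explicitly close the sufficiency (``if'') direction by combining $d_i(\vlambda^*)\leq\Phi_p(\xi^{\vlambda^*})$ with the convexity inequality, whereas the paper's written proof establishes only the necessity direction and leaves the converse implicit.
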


Note that in Theorem \ref{thm:OptWeights}, for an optimal weight vector, it is required to hold the equality for the nonzero weights.
The following results provides sufficient conditions that a weight vector minimizes $\Phi_p(\xi^{\vlambda})$ and $\EI(\xi^{\vlambda},\vbeta,F_{\IMSE})$, respectively.

\begin{cor}\label{cor:suf_weight}
(i.) If a weight vector $\vlambda^* = [\lambda^*_1,...,\lambda^*_n]^T$ satisfies
$$\Phi_p(\xi^{\vlambda^*})=q^{-1/p}\left[\tr\left(\mI(\xi^{\vlambda^*})^{-1}\mB\right)^p\right]^{1/p-1}w(\vx_i)\vg(\vx_i)^T\mI(\xi^{\vlambda^*})^{-1}\mB\left(\mI(\xi^{\vlambda^*})^{-1}\mB\right)^{p-1}\mI(\xi^{\vlambda^*})^{-1}\vg(\vx_i),$$ for all design points $\vx_1,...,\vx_n$, then $\vlambda^*$ minimizes $\Phi_p(\xi^{\vlambda})$. \\
(ii.) For EI-optimality, a sufficient condition that $\vlambda^*$ minimizes $EI(\xi^{\vlambda},\vbeta,F_{\IMSE})$ is
$$\tr(\mI(\xi^{\vlambda^*})^{-1}\mA) = w(\vx_i)\vg(\vx_i)^T\mI(\xi^{\vlambda^*})^{-1}\mA\mI(\xi^{\vlambda^*})^{-1}\vg(\vx_i),\,\,\,\text{for},\,\,\,\, i=1,...,n.$$
\end{cor}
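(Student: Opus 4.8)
The plan is to obtain both parts as immediate consequences of Theorem \ref{thm:OptWeights}, since the hypotheses stated here are nothing more than the \emph{equality} half of that optimality characterization imposed at \emph{every} design point rather than only at the positive-weight points. The entire variational content has already been established, so the work is one of specialization and bookkeeping.

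For part (i), I would argue directly. The hypothesis asserts that the equality displayed in Theorem \ref{thm:OptWeights} holds for all indices $i=1,\dots,n$. In particular it holds at every $\vx_i$ with $\lambda_i^*>0$, which is exactly the first condition of Theorem \ref{thm:OptWeights}. At every $\vx_j$ with $\lambda_j^*=0$ the equality likewise holds, and an equality trivially implies the corresponding weak inequality $\geq$, which is the second condition of the theorem. Both conditions being met, the sufficiency direction of Theorem \ref{thm:OptWeights} gives that $\vlambda^*$ minimizes $\Phi_p(\xi^{\vlambda})$. An equivalent and perhaps more transparent route is to invoke the directional-derivative identity underlying that theorem, namely that $\psi(\Delta\vlambda,\vlambda^*)=\sum_{i=1}^n \Delta\lambda_i\,T_i$ is a convex combination of the bracketed terms $T_i=\Phi_p(\xi^{\vlambda^*})-q^{-1/p}[\cdots]\,w(\vx_i)\vg(\vx_i)^T[\cdots]\vg(\vx_i)$. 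The hypothesis makes every $T_i$ vanish, so $\psi(\Delta\vlambda,\vlambda^*)=0\geq 0$ for every feasible direction $\Delta\vlambda$; since $\Phi_p(\xi^{\vlambda})$ is convex in $\vlambda$ by Lemma \ref{lem:ConWeight}, nonnegativity of all directional derivatives is precisely the first-order condition for a global minimum.

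For part (ii), I would specialize part (i) to the EI-optimality setting. By Remark \ref{rmk:Ic}, $\EI$ has the mathematical structure of $\Phi_1$ with $\mB=q\mA$, so I set $p=1$ and substitute $\mB=q\mA$ into the condition of part (i). Taking $p=1$ collapses the scalar prefactors, since $[\tr(\mI^{-1}\mB)^p]^{1/p-1}=1$ and $(\mI^{-1}\mB)^{p-1}$ reduces to the identity, leaving $q^{-1}w(\vx_i)\vg(\vx_i)^T\mI^{-1}\mB\mI^{-1}\vg(\vx_i)$ on the right-hand side; the factor $q$ carried by $\mB=q\mA$ then cancels the $q^{-1}$. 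On the left, $\Phi_1(\xi^{\vlambda^*})=q^{-1}\tr(\mI(\xi^{\vlambda^*})^{-1}\mB)=\tr(\mI(\xi^{\vlambda^*})^{-1}\mA)=\EI(\xi^{\vlambda^*},\vbeta,F_{\IMSE})$, again using $\mB=q\mA$. The condition therefore reduces exactly to $\tr(\mI(\xi^{\vlambda^*})^{-1}\mA)=w(\vx_i)\vg(\vx_i)^T\mI(\xi^{\vlambda^*})^{-1}\mA\mI(\xi^{\vlambda^*})^{-1}\vg(\vx_i)$ for all $i$, as claimed.

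I do not anticipate a genuine conceptual obstacle, as the variational machinery is supplied entirely by Theorem \ref{thm:OptWeights} and the convexity in Lemma \ref{lem:ConWeight}. The only care required lies in part (ii): one must correctly track all three $p$-dependent factors ($q^{-1/p}$, the $[\tr(\cdots)^p]^{1/p-1}$ term, and $(\mI^{-1}\mB)^{p-1}$) as they degenerate at $p=1$, and fold the scalar $q$ arising from $\mB=q\mA$ through both sides so that it cancels consistently. It is worth a closing sentence to emphasize that these conditions are sufficient but not necessary: the theorem permits strict inequality at zero-weight points, whereas the corollary demands equality everywhere, so the corollary isolates the convenient subclass of optimal-weight certificates toward which the multiplicative algorithm of Section \ref{sec: Algorithm} is driven.
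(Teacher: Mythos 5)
Your proposal is correct and follows essentially the same route as the paper, which disposes of part (i) by citing Theorem \ref{thm:OptWeights} and of part (ii) by the same specialization $p=1$, $\mB=q\mA$ used in Corollary \ref{cor:I-weight_der}. Your version is in fact slightly more complete, since you spell out the sufficiency argument (vanishing directional derivatives plus convexity from Lemma \ref{lem:ConWeight}) that the paper's one-line proof leaves implicit.
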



The results in Theorem \ref{thm:OptWeights} and Corollary \ref{cor:suf_weight} provides a useful gateway to design an effective algorithm for finding the optimal weights given the design points.

\subsection{Multiplicative Algorithm for Optimal Weights}\label{subsec:mulalg}
As shown in Corollary \ref{cor:suf_weight}, the solution of a system of equations
\begin{equation}\label{eqn:OptWeights}
\Phi_p(\xi^{\vlambda^*})=q^{-1/p}\left[\tr\left(\mI(\xi^{\vlambda^*})^{-1}\mB\right)^p\right]^{1/p-1}w(\vx_i)\vg(\vx_i)^T\mI(\xi^{\vlambda^*})^{-1}\mB\left(\mI(\xi^{\vlambda^*})^{-1}\mB\right)^{p-1}\mI(\xi^{\vlambda^*})^{-1}\vg(\vx_i),
\end{equation}
$i=1,...,n$, is a set of optimal weights that minimizes $\Phi_p(\xi^{\vlambda})$.

The weight of a design point $\vx_i$ should be adjusted according to the values of the two sides of equation (\ref{eqn:OptWeights}). Note that for any weight vector $\vlambda = [\lambda_1,...,\lambda_n]$, we have
$$\Phi_p(\xi^{\vlambda}) = \sum\limits_{i=1}^n\lambda_iq^{-1/p}\left[\tr\left(\mI(\xi^{\vlambda})^{-1}\mB\right)^p\right]^{1/p-1}w(\vx_i)\vg(\vx_i)^T\mI(\xi^{\vlambda})^{-1}\mB\left(\mI(\xi^{\vlambda})^{-1}\mB\right)^{p-1}\mI(\xi^{\vlambda})^{-1}\vg(\vx_i).$$
Based on the above observation, the weight $\lambda_i$ of a design point $\vx_i$ should be adjusted according to the value of
$q^{-1/p}\left[\tr\left(\mI(\xi^{\vlambda})^{-1}\mB\right)^p\right]^{1/p-1}w(\vx_i)\vg(\vx_i)^T\mI(\xi^{\vlambda})^{-1}\mB\left(\mI(\xi^{\vlambda})^{-1}\mB\right)^{p-1}\\\mI(\xi^{\vlambda})^{-1}\vg(\vx_i).$ Our strategy of finding optimal weights would be: if
$$\Phi_p(\xi^{\vlambda}) < q^{-1/p}\left[\tr\left(\mI(\xi^{\vlambda})^{-1}\mB\right)^p\right]^{1/p-1}w(\vx_i)\\\vg(\vx_i)^T\mI(\xi^{\vlambda})^{-1}\mB\left(\mI(\xi^{\vlambda})^{-1}\mB\right)^{p-1}\mI(\xi^{\vlambda})^{-1}\vg(\vx_i),$$ the weight $\lambda_i$ of $\vx_i$ should be increased. On the other hand, if
$$\Phi_p(\xi^{\vlambda}) > q^{-1/p}\big[\tr\left(\mI(\xi^{\vlambda})^{-1}\mB\right)^p\big]^{1/p-1}w(\vx_i)\vg(\vx_i)^T\mI(\xi^{\vlambda})^{-1}\mB\left(\mI(\xi^{\vlambda})^{-1}\mB\right)^{p-1}\mI(\xi^{\vlambda})^{-1}\vg(\vx_i),$$ the weight $\lambda_i$ of $\vx_i$ should be decreased. Thus, the ratio
$$\frac{q^{-1/p}\left[\tr\left(\mI(\xi^{\vlambda})^{-1}\mB\right)^p\right]^{1/p-1}w(\vx_i)\vg(\vx_i)^T\mI(\xi^{\vlambda})^{-1}\mB\left(\mI(\xi^{\vlambda})^{-1}\mB\right)^{p-1}\mI(\xi^{\vlambda})^{-1}\vg(\vx_i)}{\Phi_p(\xi^{\vlambda})}$$ would indicate a good adjustment for the current weight of a design point $\vx_i$.

The detailed algorithm is described in \textbf{Algorithm 1} as follows.

\begin{algorithm}[H]\label{alg-1}
  \caption{}
  \begin{algorithmic}[1]
  \State Assign a random weight vector $\vlambda^0 = [\lambda_1^0,...,\lambda_n^0]^T$, and $k=0$.
  \While {$change>1e-15$ and $k<maxrun$}
  \For {$i = 1, \ldots, n$}
 \State Update the weight of design point $\vx_i$:
\begin{eqnarray}\label{for:multialg}
\lambda_i^{k+1} &=& \lambda_i^k \frac{\left(\frac{ q^{-1/p}\left[\tr\left(\mI(\xi^{\vlambda^k})^{-1}\mB\right)^p\right]^{1/p-1}w(\vx_i)\vg(\vx_i)^T\mI(\xi^{\vlambda^k})^{-1}\mB\left(\mI(\xi^{\vlambda^k})^{-1}\mB\right)^{p-1}\mI(\xi^{\vlambda^k})^{-1}\vg(\vx_i)}{ \Phi_p(\xi^{\vlambda})}\right)^\delta}{\sum\limits_{i=1}^n \lambda_i^k\left(\frac{q^{-1/p}\left[\tr\left(\mI(\xi^{\vlambda^k})^{-1}\mB\right)^p\right]^{1/p-1}w(\vx_i)\vg(\vx_i)^T\mI(\xi^{\vlambda^k})^{-1}\mB\left(\mI(\xi^{\vlambda^k})^{-1}\mB\right)^{p-1}\mI(\xi^{\vlambda^k})^{-1}\vg(\vx_i)}{ \Phi_p(\xi^{\vlambda})}\right)^\delta}\nonumber\\
& =& \lambda_i^k \frac{\left[w(\vx_i)\vg(\vx_i)^T\mI(\xi^{\vlambda^k})^{-1}\mB\left(\mI(\xi^{\vlambda^k})^{-1}\mB\right)^{p-1}\mI(\xi^{\vlambda^k})^{-1}\vg(\vx_i)\right]^\delta}{\sum\limits_{i=1}^n \lambda_i^k\left[ w(\vx_i)\vg(\vx_i)^T\mI(\xi^{\vlambda^k})^{-1}\mB\left(\mI(\xi^{\vlambda^k})^{-1}\mB\right)^{p-1}\mI(\xi^{\vlambda^k})^{-1}\vg(\vx_i)\right]^\delta},\footnotemark
\end{eqnarray}
\State $change=\max\limits_{i=1,\cdots,n}(|\lambda_i^{k+1}-\lambda_i^{k}|)$.
\State $k=k+1$.
\EndFor
 \EndWhile
\end{algorithmic}
\end{algorithm}
\footnotetext{The formula applies to D-optimality with $p=0$.}

We would like to remark that the algorithm turns out to be the well-known multiplicative algorithm proposed by \citeI{78titterington} and \citeI{78silvey}. Originally, a heuristic explanation for multiplicative algorithm is that $\lambda_i^{k+1} \propto \lambda_i^{k}\left(\left.\frac{\partial \Phi_p(\xi^{\lambda})}{\partial \lambda_i}\right|_{\vlambda = \vlambda^k}\right)^{\delta}$.
Here, we obtain the same algorithm based on a sufficient condition of optimal weights given in Corollary \ref{cor:suf_weight}. This may also explain why the multiplicative algorithm tends to converge slowly and result in many support points when there is a large candidate set. When the multiplicative algorithm is applied to a large candidate set, a very strong sufficient but not necessary condition is imposed on all the candidate points. This condition should be only imposed on the points with non-zero weights. However, if the design points are appropriately chosen so that most of them have nonzero weight, then the sufficient condition in Corollary \ref{cor:suf_weight} becomes almost necessary condition for the optimal weights.

In \textbf{Algorithm 1}, there are two user chosen parameters. One is convergence parameter $\delta\in (0,1)$ and the other one $maxrun$ is the maximum number of iterations allowed. Following the suggestion by \citeI{74fellman} and \citeI{83tor}, convergence parameter $\delta=\frac{1}{2}$ is usually used for A-optimality. Since EI-optimality has similar mathematical structure to A-optimality, we choose $\delta=\frac{1}{2}$ in our algorithm. It is also observed that the computational performance of the proposed algorithm is regarding $\delta$ in \emph{Example 2} of Section \ref{section:Example}.  We will show later in proposition \ref{pro:convAlg1} that \textbf{Algorithm 1} converges to optimal weights monotonically. Thus, the algorithm is not very sensitive to the choice of maximum iterations allowed, $maxrun$. In all our numerical examples, we choose $maxrun=100$, while the same parameter in \citeI{13yang} for Newton's method is 40.
The following theorem provides a sufficient condition when the iterative formula in \textbf{Algorithm 1} is feasible, and Corollary \ref{cor: feasibility} shows that, for EI-optimality, one can ensure that every iteration in \textbf{Algorithm 1} is always feasible by choosing appropriate basis functions.
\begin{thm}\label{thm:feasibility}
Suppose $\mI(\xi^{\vlambda^k})^{-1}\mB\neq 0$ in iteration $k$, then  the iteration in equation (\ref{for:multialg}) is feasible, i.e., the denominator
$$\sum\limits_{i=1}^n \lambda_i^k\left[w(\vx_i)\vg(\vx_i)^T\mI(\xi^{\vlambda^k})^{-1}\mB\left(\mI(\xi^{\vlambda^k})^{-1}\mB\right)^{p-1}\mI(\xi^{\vlambda^k})^{-1}\vg(\vx_i)\right]^\delta$$
is positive.
\end{thm}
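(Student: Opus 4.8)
The plan is to reduce the whole statement to one trace identity together with a symmetrization argument. Throughout, abbreviate $\mM := \mI(\xi^{\vlambda^k})$, which is positive definite (being an invertible Fisher information matrix), and for each $i$ set
$$T_i := w(\vx_i)\,\vg(\vx_i)^T\mM^{-1}\mB\left(\mM^{-1}\mB\right)^{p-1}\mM^{-1}\vg(\vx_i) = w(\vx_i)\,\vg(\vx_i)^T\left(\mM^{-1}\mB\right)^{p}\mM^{-1}\vg(\vx_i),$$
so that the denominator in question is exactly $\sum_{i=1}^n\lambda_i^k T_i^{\delta}$. The first thing I would establish is that every $T_i$ is nonnegative, so that $T_i^{\delta}$ is real and nonnegative for $0<\delta\le 1$.

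For nonnegativity I would introduce the symmetric positive-semidefinite matrix $\mathsf{N}:=\mM^{-1/2}\mB\mM^{-1/2}$ via the symmetric square root of $\mM$. Since $\mM^{1/2}(\mM^{-1}\mB)\mM^{-1/2}=\mathsf{N}$, the matrix $\mM^{-1}\mB$ is similar to $\mathsf{N}$, and a direct computation gives $(\mM^{-1}\mB)^{p}\mM^{-1}=\mM^{-1/2}\mathsf{N}^{p}\mM^{-1/2}$. Hence
$$T_i = w(\vx_i)\,\bigl\lVert \mathsf{N}^{p/2}\mM^{-1/2}\vg(\vx_i)\bigr\rVert^2 \ge 0,$$
because $w(\vx_i)=1/\{\var(Y(\vx_i))[h'(\mu(\vx_i))]^2\}>0$ and $\mathsf{N}^{p}$ is symmetric positive-semidefinite. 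This is the step I expect to require the most care: the product $(\mM^{-1}\mB)^{p}\mM^{-1}$ is not manifestly symmetric, so one must expose its positive-semidefinite structure through the similarity transformation before it can be read as a genuine quadratic form.

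Next I would show that the plain weighted sum (the $\delta=1$ case) is strictly positive, which will force at least one summand to be genuinely positive. Using $\mM=\sum_i\lambda_i^k w(\vx_i)\vg(\vx_i)\vg^T(\vx_i)$ and the cyclic invariance of the trace,
$$\sum_{i=1}^n\lambda_i^k T_i = \tr\Big[\left(\mM^{-1}\mB\right)^{p}\mM^{-1}\sum_{i=1}^n\lambda_i^k w(\vx_i)\vg(\vx_i)\vg^T(\vx_i)\Big]=\tr\big[(\mM^{-1}\mB)^{p}\big]=\tr(\mathsf{N}^{p}).$$
Because $\mathsf{N}$ is positive-semidefinite and, by the hypothesis $\mM^{-1}\mB\neq 0$ together with the similarity above, nonzero, it has at least one strictly positive eigenvalue; thus $\tr(\mathsf{N}^{p})=\sum_j\mu_j^{p}>0$.

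Finally I would combine the two facts. Since $\sum_i\lambda_i^k T_i>0$ with every $\lambda_i^k\ge 0$, there must exist an index $i_0$ with $\lambda_{i_0}^k>0$ and $T_{i_0}>0$; for that index $T_{i_0}^{\delta}>0$, while every remaining term $\lambda_i^k T_i^{\delta}$ is nonnegative by the first step. Therefore $\sum_{i=1}^n\lambda_i^k T_i^{\delta}\ge \lambda_{i_0}^k T_{i_0}^{\delta}>0$, which is precisely the claimed feasibility of the denominator. The only genuine obstacle is the symmetrization in the second paragraph; once $T_i\ge 0$ is secured, the trace identity and the fact that $t\mapsto t^{\delta}$ is strictly positive on $(0,\infty)$ close the argument routinely.
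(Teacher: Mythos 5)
Your proposal is correct and follows essentially the same route as the paper's own proof: both rest on the identity $\sum_{i=1}^n\lambda_i^k T_i=\tr\left[\left(\mI(\xi^{\vlambda^k})^{-1}\mB\right)^p\right]>0$, obtained by writing $\mI(\xi^{\vlambda^k})$ as the weighted sum of rank-one matrices and cycling the trace, followed by extracting one index whose summand is strictly positive. The only place you go beyond the paper is the explicit symmetrization $\mathsf{N}=\mM^{-1/2}\mB\mM^{-1/2}$ establishing $T_i\geq 0$ for every $i$; the paper leaves this implicit, even though it is needed both for $T_i^{\delta}$ to be real and for the final sum not to be undercut by negative terms, so your write-up is in fact the more complete version of the same argument.
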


For the EI-optimality, the iterative formula in \textbf{Algorithm 1} becomes
\begin{eqnarray}\label{for:multialgIc}
\lambda_i^{k+1} &=&  \lambda_i^k \frac{\left[w(\vx_i)\vg(\vx_i)^T\mI(\xi^{\vlambda^k})^{-1}\mA\mI(\xi^{\vlambda^k})^{-1}\vg(\vx_i)\right]^\delta}{\sum\limits_{i=1}^n \lambda_i^k\left[ w(\vx_i)\vg(\vx_i)^T\mI(\xi^{\vlambda^k})^{-1}\mA\mI(\xi^{\vlambda^k})^{-1}\vg(\vx_i)\right]^\delta},
\end{eqnarray}
with $\mA = \int_{\Omega}\vg(\vx)\vg^T(\vx)\left[\frac{\dif h^{-1}}{\dif \eta}\right]^2\dif F_{\IMSE}(\vx)$.

\begin{cor}\label{cor: feasibility}
For the EI-optimality, a positive definite $\mA = \int_{\Omega}\vg(\vx)\vg^T(\vx)\left[\frac{\dif h^{-1}}{\dif \eta}\right]^2\dif F_{\IMSE}(\vx)$ would ensure all the iterations in equation (\ref{for:multialg}) to be always feasible, i.e., the denominator
$$\sum\limits_{i=1}^n \lambda_i^k\left[w(\vx_i)\vg^T(\vx_i)\mI(\xi^{\vlambda^k})^{-1}\mA\mI(\xi^{\vlambda^k})^{-1}\vg(\vx_i)\right]^\delta$$
is always positive.
\end{cor}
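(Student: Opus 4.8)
The plan is to obtain the corollary as a direct specialization of Theorem \ref{thm:feasibility}, so that the only thing I need to establish is that the hypothesis $\mI(\xi^{\vlambda^k})^{-1}\mB \neq 0$ is met in the EI setting. By Remark \ref{rmk:Ic}, EI-optimality shares the mathematical structure of $\Phi_1(\xi)$ with $p = 1$ and $\mB = q\mA$, where $q$ is the positive rank/count parameter. First I would observe that if $\mA$ is positive definite, then $\mB = q\mA$ is positive definite (a positive scalar multiple of a positive definite matrix), hence nonsingular. Since the Fisher information matrix $\mI(\xi^{\vlambda^k})$ is positive definite, its inverse $\mI(\xi^{\vlambda^k})^{-1}$ exists and is nonsingular, so the product $\mI(\xi^{\vlambda^k})^{-1}\mB$ is a product of two nonsingular matrices and is therefore itself nonsingular; in particular $\mI(\xi^{\vlambda^k})^{-1}\mB \neq 0$.

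With the hypothesis of Theorem \ref{thm:feasibility} verified, I would then apply the theorem to conclude that the general denominator is positive, and specialize. Setting $p = 1$ makes the factor $(\mI(\xi^{\vlambda^k})^{-1}\mB)^{p-1}$ the identity, and substituting $\mB = q\mA$ reduces the denominator in (\ref{for:multialg}) to $q^{\delta}\sum_{i=1}^n \lambda_i^k\left[w(\vx_i)\vg^T(\vx_i)\mI(\xi^{\vlambda^k})^{-1}\mA\mI(\xi^{\vlambda^k})^{-1}\vg(\vx_i)\right]^{\delta}$. Because $q^{\delta} > 0$, dividing through shows that the EI denominator $\sum_{i=1}^n \lambda_i^k\left[w(\vx_i)\vg^T(\vx_i)\mI(\xi^{\vlambda^k})^{-1}\mA\mI(\xi^{\vlambda^k})^{-1}\vg(\vx_i)\right]^{\delta}$ is itself positive, which is exactly the claim.

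As an alternative, self-contained route I could argue directly without invoking Theorem \ref{thm:feasibility}: since $\mI(\xi^{\vlambda^k})$ is symmetric and positive definite, $\mI(\xi^{\vlambda^k})^{-1}\mA\mI(\xi^{\vlambda^k})^{-1}$ is a congruence transform of the positive definite matrix $\mA$ and hence positive definite, so each summand $w(\vx_i)\vg^T(\vx_i)\mI(\xi^{\vlambda^k})^{-1}\mA\mI(\xi^{\vlambda^k})^{-1}\vg(\vx_i)$ is nonnegative (as $w(\vx_i) > 0$) and vanishes only when $\vg(\vx_i) = 0$. Because $\mI(\xi^{\vlambda^k}) = \sum_i \lambda_i^k w(\vx_i)\vg(\vx_i)\vg^T(\vx_i)$ is nonsingular, at least one index $i$ must satisfy both $\lambda_i^k > 0$ and $\vg(\vx_i) \neq 0$; that summand is then strictly positive while all others are nonnegative, forcing the sum to be positive.

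The main obstacle here is essentially bookkeeping rather than genuine difficulty, since the substantive positivity argument already lives in Theorem \ref{thm:feasibility}. The only points requiring care are that the \emph{positive definiteness} of $\mA$ (as opposed to mere positive semidefiniteness, which is all that is guaranteed in Remark \ref{rmk:Ic}) is precisely what upgrades $\mI(\xi^{\vlambda^k})^{-1}\mB$ from possibly zero to nonsingular, and that the scalar $q^{\delta}$ is strictly positive so that it does not affect the sign when passing from the $\mB = q\mA$ form back to the $\mA$ form stated in the corollary.
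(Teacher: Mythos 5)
Your primary argument is exactly the paper's own proof: the paper disposes of Corollary~\ref{cor: feasibility} in one sentence by noting that a positive definite $\mA$ guarantees $\mI(\xi^{\vlambda^k})^{-1}\mB\neq 0$ in Theorem~\ref{thm:feasibility} with $\mB=q\mA$ and $p=1$, which is precisely your first route, with the two details the paper leaves implicit (the product of nonsingular matrices being nonsingular, and the harmless factor $q^{\delta}>0$ when translating the $\mB$-form denominator into the $\mA$-form) now written out. Your self-contained alternative is also correct and is worth noting: by observing that $\mI(\xi^{\vlambda^k})^{-1}\mA\,\mI(\xi^{\vlambda^k})^{-1}$ is a congruence transform of the positive definite $\mA$, you get the strictly stronger conclusion that \emph{every} summand with $\lambda_i^k>0$ and $\vg(\vx_i)\neq 0$ is positive, rather than merely ``there exists one,'' and it bypasses Theorem~\ref{thm:feasibility} entirely, whereas the paper's route has the advantage of reusing the general $\Phi_p$ machinery unchanged.
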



For the $\Phi_p$-optimality, the matrix $\mB = \left(\frac{\partial \vf(\vbeta)}{\partial \vbeta^T}\right)^T\left(\frac{\partial \vf(\vbeta)}{\partial \vbeta^T}\right)$ is determined by $\vf(\vbeta)$, the functions of $\vbeta$ that are of interest, which are determined by the purpose of the experiment. But, for EI-optimality, the matrix $\mA = \int_{\Omega}\vg(\vx)\vg^T(\vx)\left[\frac{\dif h^{-1}}{\dif \eta}\right]^2\dif F_{\IMSE}(\vx)$  is always positive semi-definite and pre-determined before the optimal design is constructed.
When a singular $\mA$ is observed in the first step, one can choose the basis functions $g_1,...,g_l$ carefully so that they are (nearly) orthogonal functions such that
$$\int_{\Omega} g_i(\vx)g_j(\vx)\left[\frac{\dif h^{-1}}{\dif \eta}\right]^2\dif F_{\IMSE}(\vx) \approx 0\,\,\,\,\text{for}\,\,\,\,i\neq j.$$
By doing this, one can ensure the matrix $\mA$ to be positive definite.
In general, the Gram-Schmidt orthogonalization (\citeO{pur91}) can be used to construct uni/multi-variate orthogonal basis functions.

We can provide the convergence property of Algorithm 1, which generalizes Yu's work (\citeO{10yu}) to a broader class of optimality criteria, $\Phi_p$-optimality.
\begin{pro}[\textbf{Convergence of Algorithm 1}]\label{pro:convAlg1}
Given the design points $\vx_1,...,\vx_n$ fixed, the weight vector $\vlambda^k$ obtained from Algorithm 1 monotonically converges to the optimal weight vector $\vlambda^*$ that minimizes $\Phi_p$-optimality, as $k\rightarrow \infty$.
\end{pro}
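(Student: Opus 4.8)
The plan is to prove the statement in three stages: monotone decrease of the criterion along the iterations, compactness to extract a limit, and identification of that limit as a global minimizer. For brevity write
$$d_i(\vlambda) = w(\vx_i)\vg(\vx_i)^T\mI(\xi^{\vlambda})^{-1}\mB\left(\mI(\xi^{\vlambda})^{-1}\mB\right)^{p-1}\mI(\xi^{\vlambda})^{-1}\vg(\vx_i),$$
so that the update of \eqref{for:multialg} reads $\lambda_i^{k+1}=\lambda_i^k\,d_i(\vlambda^k)^{\delta}\big/\sum_{j=1}^n\lambda_j^k\,d_j(\vlambda^k)^{\delta}$; denote this update map by $\mathcal{T}$. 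Expanding $\tr(\mI(\xi^{\vlambda})^{-1}\mB)^p$ with $\mI(\xi^{\vlambda})=\sum_i\lambda_i w(\vx_i)\vg(\vx_i)\vg(\vx_i)^T$ gives the identity $\sum_{i=1}^n\lambda_i\,d_i(\vlambda)=\tr(\mI(\xi^{\vlambda})^{-1}\mB)^p=q\,[\Phi_p(\xi^{\vlambda})]^p$ (this is the computation underlying Theorem \ref{thm:feasibility}). Consequently a feasible $\vlambda$ is a fixed point of $\mathcal{T}$ precisely when $d_i(\vlambda)$ is constant over the support $\{i:\lambda_i>0\}$, and by the identity this common value must equal $q[\Phi_p(\xi^{\vlambda})]^p$; after simplifying the constant $q^{-1/p}[\tr(\mI^{-1}\mB)^p]^{1/p-1}$ this is exactly the equality characterization of Theorem \ref{thm:OptWeights}. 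Thus interior fixed points of $\mathcal{T}$ coincide with $\Phi_p$-optimal weight vectors.

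The central and hardest step is the monotonicity $\Phi_p(\xi^{\mathcal{T}(\vlambda)})\le\Phi_p(\xi^{\vlambda})$, strict unless $\vlambda$ is a fixed point. I would prove it by a majorize--minimize argument: build a surrogate $Q(\cdot\,;\vlambda^k)$ on the simplex that majorizes $\Phi_p(\xi^{\cdot})$, agrees with it at $\vlambda^k$, and is minimized exactly at $\vlambda^{k+1}=\mathcal{T}(\vlambda^k)$, so that $\Phi_p(\xi^{\vlambda^{k+1}})\le Q(\vlambda^{k+1};\vlambda^k)\le Q(\vlambda^k;\vlambda^k)=\Phi_p(\xi^{\vlambda^k})$. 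The surrogate is assembled from the convexity of $\Phi_p$ in $\vlambda$ (Lemma \ref{lem:ConWeight}) together with a weighted power-mean inequality applied to the ratios $\lambda_i^{k+1}/\lambda_i^k$, where the exponent $\delta\in(0,1]$ is what makes the majorization point the right way. This is precisely where the argument must extend Yu's \cite{10yu} matrix inequality from his criteria to arbitrary $\Phi_p$ (general $p$) carrying the GLM weights $w(\vx_i)$; establishing that inequality, and checking that the prescribed $\delta=\tfrac12$ (for the EI-optimality case $p=1$) lies in the admissible range, is the main obstacle.

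Granting monotonicity, the remainder is standard. Since $\Phi_p(\xi^{\vlambda})\ge 0$, the nonincreasing sequence $\Phi_p(\xi^{\vlambda^k})$ converges, so $\Phi_p(\xi^{\vlambda^{k+1}})-\Phi_p(\xi^{\vlambda^k})\to 0$; the iterates lie in the compact simplex, so some subsequence $\vlambda^{k_j}\to\vlambda^\infty$. Starting from strictly positive weights, the positivity of $d_i$ guaranteed for EI-optimality by a positive definite $\mA$ (Corollary \ref{cor: feasibility}) keeps all iterates in the open simplex, where $\mathcal{T}$ is continuous; passing to the limit along the subsequence gives $\Phi_p(\xi^{\mathcal{T}(\vlambda^\infty)})=\Phi_p(\xi^{\vlambda^\infty})$, and strictness of the decrease off fixed points forces $\mathcal{T}(\vlambda^\infty)=\vlambda^\infty$. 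On its support $\vlambda^\infty$ then satisfies the equality of Theorem \ref{thm:OptWeights}, while for any coordinate driven to zero the complementary inequality holds because a coordinate whose sensitivity $d_i$ exceeded the common support level would be increased, not driven to zero, by $\mathcal{T}$; hence $\vlambda^\infty$ obeys the full optimality characterization and, by convexity (Lemma \ref{lem:ConWeight}), is a global minimizer. Finally, because the criterion values converge and every accumulation point attains the global minimum, the whole sequence $\vlambda^k$ converges to an optimal $\vlambda^*$ (unique when $\Phi_p$ is strictly convex on the simplex; otherwise the accumulation set lies in the convex, constant-criterion set of minimizers).
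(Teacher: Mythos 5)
Your overall architecture (monotone descent, compactness, identification of limit points as optimal fixed points) is the right one, and your fixed-point bookkeeping is correct: the identity $\sum_{i}\lambda_i d_i(\vlambda)=\tr\left(\mI(\xi^{\vlambda})^{-1}\mB\right)^p=q\left[\Phi_p(\xi^{\vlambda})\right]^p$ does follow from the computation behind Theorem \ref{thm:feasibility}, and interior fixed points of your map $\mathcal{T}$ are optimal by Corollary \ref{cor:suf_weight}(i). The genuine gap is at exactly the point you yourself flag as ``the central and hardest step'': the monotonicity $\Phi_p(\xi^{\mathcal{T}(\vlambda)})\leq\Phi_p(\xi^{\vlambda})$ is never established. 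You describe the surrogate one would want (an MM majorization combining Lemma \ref{lem:ConWeight} with a weighted power-mean inequality in the ratios $\lambda_i^{k+1}/\lambda_i^k$) and then explicitly defer ``establishing that inequality'' as the main obstacle. Since every later step of your argument begins with ``granting monotonicity,'' what you have is a proof plan, not a proof. The paper closes this step without re-deriving any inequality: it proves (Lemma \ref{lem:ConvexVarphi}) that $\varphi(\mI)=\Phi_p$, viewed as a function of the information matrix, is convex and decreasing in $\mI$, checks that $\mI(\xi^{\vlambda^k})$ is positive definite and that $\mI(\xi^{\vlambda^k})^{-1}\mB\left(\mI(\xi^{\vlambda^k})^{-1}\mB\right)^{p-1}\mI(\xi^{\vlambda^k})^{-1}\neq 0$, and then invokes Theorem 1 of Yu \cite{10yu}, whose framework already covers general $p$, the GLM weights $w(\vx_i)$, and exponents $0<\delta<1$. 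The shortest correct completion of your outline is the same: verify Yu's hypotheses and cite his result, rather than reconstruct his MM inequality.

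A secondary weak point is your treatment of coordinates driven to zero. The claim that a coordinate whose sensitivity $d_i$ exceeds the common support level ``would be increased, not driven to zero, by $\mathcal{T}$'' is only heuristic: along a subsequence converging to $\vlambda^{\infty}$, a weight can still decay if the iterates leave the relevant neighborhood between visits, so turning this into a proof requires a real argument about the dynamics. This is precisely the content of Theorem 2 of Yu \cite{10yu}, and the paper's proof consists of verifying its hypotheses (nonsingular limit points, nonvanishing of $w(\vx_i)\vg(\vx_i)\vg^T(\vx_i)\,\partial\varphi(\mI)/\partial\mI$, and the convexity/monotonicity from Lemma \ref{lem:ConvexVarphi}) and quoting its conclusion. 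Note also that the conclusion obtained this way --- criterion values decrease monotonically to $\inf_{\vlambda}\Phi_p(\xi^{\vlambda})$ and every limit point of $\vlambda^k$ is a global minimizer --- is the same qualified form of convergence you end with, so your closing caveat about uniqueness is consistent with what the paper actually proves; but neither your outline nor the citation-based route yields single-point convergence of $\vlambda^k$ without strict convexity.
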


\section{Proposed Sequential Algorithm of Constructing EI-Optimal Design}\label{sec: Algorithm}
In this section, we will describe the proposed efficient algorithm for constructing EI-optimal design for GLMs. General equivalence theorem is one of the main theoretical tools to develop algorithms to construct optimal designs.  Many Wynn-Federov type algorithms are developed based on general equivalence theorem (See \citeO{70wynn, 72wynn, 73whi, 13yang, martin2015combined}). \citeI{yang2016optimal} developed an efficient algorithm to construct $2^k$ D-optimal factorial design with binary response based on a specialized version of general equivalence theorem on a pre-determined finite set of design points.
We will first establish the general equivalence theorem for EI-optimality of GLMs in Section \ref{section:GET}, which provides an intuitive way for choosing the support points to construct an EI-optimal design in a sequential fashion.  Section \ref{section: Algorithm 2} details the proposed algorithm and develops the convergence property of the proposed algorithm.

\subsection{General Equivalence Theorem of EI-Optimality}\label{section:GET}
As shown in Remark \ref{rmk:Ic}, the EI-optimality has a similar mathematical structure as the $\Phi_1$-optimatliy, but the two criteria have different practical interpretations.
The General Equivalence Theorem of $\Phi_p$-optimality for generalized linear models has been established (\citeO{stufken2011optimal}), and it could be extended to EI-optimality for GLMs easily. We state the General Equivalence Theorem of EI-optimality for GLMs in the following Theorem \ref{thm:GET}, and the standard proof is omitted.
The extended theoretical results facilitate the sequential choice of support points as that of Fedorov-Wynn algorithm (\citeO{70wynn,72fed}).

Given two designs $\xi$ and $\xi'$, let the design $\tilde{\xi}$ be constructed as
$$\tilde{\xi} = (1-\alpha)\xi+\alpha\xi'.$$
Then, the derivative of $\EI(\xi,\vbeta,F_{\IMSE})$ in the direction of design $\xi'$ is
\begin{equation}\label{eqn:dirder}
\phi(\xi',\xi) = \lim\limits_{\alpha\rightarrow 0^+}\frac{\EI(\tilde{\xi},\vbeta,F_{\IMSE})-\EI(\xi,\vbeta,F_{\IMSE})}{\alpha}.
\end{equation}

\begin{lem}\label{lem:TrConvex}
The $\EI(\cdot,\vbeta,F_{\IMSE})$ is a convex function of the design $\xi$.
\end{lem}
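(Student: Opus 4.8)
The plan is to reduce the asserted convexity in the design $\xi$ to the convexity of the scalar map $\mM\mapsto\tr(\mA\mM^{-1})$ on the cone of positive definite matrices, and then to establish the latter via operator convexity of the matrix inverse. First I would record that the Fisher information matrix is affine under convex combination of designs: directly from \eqref{eqn:fisher}, regarding both $\xi$ and $\xi'$ as designs supported on the union of their support points, the mixture $\tilde{\xi}=(1-\alpha)\xi+\alpha\xi'$ satisfies
$$\mI(\tilde{\xi},\vbeta)=(1-\alpha)\mI(\xi,\vbeta)+\alpha\mI(\xi',\vbeta).$$
By Lemma \ref{lem:IMSE} we have $\EI(\xi,\vbeta,F_{\IMSE})=\tr(\mA\mI(\xi)^{-1})$ with $\mA$ fixed, so it suffices to prove that $\varphi(\mM):=\tr(\mA\mM^{-1})$ is convex in the positive definite matrix $\mM$. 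Composing the convex $\varphi$ with the affine map $\alpha\mapsto\mI(\tilde{\xi})$ would then yield convexity of $\alpha\mapsto\EI(\tilde{\xi},\vbeta,F_{\IMSE})$, which is exactly the asserted convexity of $\EI$ in the design.

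For the crux I would use that $\mA\succeq 0$ admits a symmetric square root $\mA^{1/2}$, so that $\varphi(\mM)=\tr(\mA^{1/2}\mM^{-1}\mA^{1/2})$. The matrix inverse is operator convex on the positive definite cone, i.e. for $0\le a\le 1$ and positive definite $\mM_1,\mM_2$,
$$\bigl((1-a)\mM_1+a\mM_2\bigr)^{-1}\preceq(1-a)\mM_1^{-1}+a\mM_2^{-1}.$$
Conjugating both sides by $\mA^{1/2}$ preserves the Loewner order, and the trace is monotone with respect to that order, so taking traces gives
$$\varphi\bigl((1-a)\mM_1+a\mM_2\bigr)\le(1-a)\varphi(\mM_1)+a\varphi(\mM_2),$$
establishing convexity of $\varphi$ and hence of $\EI$.

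The hard part will be the operator convexity of the matrix inverse; the remaining steps (linearity of $\mI$ in the design, existence of $\mA^{1/2}$, and trace monotonicity under the Loewner order) are routine. As an alternative that sidesteps invoking operator convexity directly, I note that Remark \ref{rmk:Ic} identifies EI-optimality with the $\Phi_1$ criterion at $\mB=q\mA$; since mixing two arbitrary designs reduces, on the union of their support points, to a convex combination of weight vectors, the convexity already established in Lemma \ref{lem:ConWeight} applies verbatim and delivers the claim.
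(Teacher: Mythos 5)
Your proof is correct and follows essentially the same route as the paper's own argument: both exploit the affine dependence of $\mI(\xi)$ on the design mixture, the operator convexity of the matrix inverse on the positive definite cone, conjugation by a symmetric square root of $\mA$ (which preserves the Loewner order), and finally trace monotonicity. The alternative you sketch via Remark \ref{rmk:Ic} and Lemma \ref{lem:ConWeight} is also sound, but your main argument already matches the paper's proof step for step.
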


With some algebra, we can obtain the directional derivative of $EI(\xi,\vbeta,\Omega_c)$ in the direction of any design $\xi'$ as,
$$\phi(\xi',\xi) = \lim\limits_{\alpha\rightarrow 0^+}\frac{\EI(\tilde{\xi},\vbeta,F_{\IMSE})-\EI(\xi,\vbeta,F_{\IMSE})}{\alpha} = \tr(\mI(\xi)^{-1}\mA)-\tr(\mI(\xi')\mI(\xi)^{-1}\mA\mI(\xi)^{-1}),$$
where $\mA =\int_{\Omega}\vg(\vx)\vg^T(\vx)\left[\frac{\dif h^{-1}}{\dif \eta}\right]^2\dif F_{\IMSE}(\vx)$.
Moreover, we can also get
the directional derivative of $\EI(\xi,\vbeta,F_{\IMSE})$ in the direction of a single point $\vx$ as,
$$\phi(\vx,\xi) = \tr(\mI(\xi)^{-1}\mA)-w(\vx)\vg(\vx)^T\mI(\xi)^{-1}\mA\mI(\xi)^{-1}\vg(\vx).$$


\begin{thm}[\textbf{General Equivalence Theorem}]\label{thm:GET}
The following three conditions of $\xi^*$ are equivalent:
\begin{enumerate}
\item
The design $\xi^{*}$ minimizes
$$\EI(\xi,\vbeta,\Omega_c) = \tr(\mA\mI(\xi)^{-1}).$$
\item
The design $\xi^*$ minimizes
$$\sup\limits_{\vx\in\Omega} w(\vx)\vg^T(\vx)\mI(\xi)^{-1}\mA\mI(\xi)^{-1}\vg(\vx).$$

\item
$w(\vx)\vg^T(\vx)\mI(\xi^*)^{-1}\mA\mI(\xi^*)^{-1}\vg(\vx)\leq \tr(\mI(\xi^*)^{-1}\mA)$ holds over the experimental region $\Omega$, and the equality holds only at the points of support of the design $\xi^*$.
\end{enumerate}
\end{thm}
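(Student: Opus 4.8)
The plan is to drive everything from the convexity of $\EI(\xi,\vbeta,F_{\IMSE})$ (Lemma~\ref{lem:TrConvex}) together with the two directional-derivative formulas recorded just before the theorem, proving the pair of equivalences $(1)\Leftrightarrow(3)$ and $(1)\Leftrightarrow(2)$, which suffices. The single computation underlying everything is that the design-direction derivative is \emph{linear in its target argument}. Writing $d(\vx,\xi)=w(\vx)\vg^T(\vx)\mI(\xi)^{-1}\mA\mI(\xi)^{-1}\vg(\vx)$ and using $\mI(\xi')=\int_\Omega w(\vx)\vg(\vx)\vg^T(\vx)\,\dif\xi'(\vx)$, the trace formula for $\phi(\xi',\xi)$ factors through the single-point formula:
\[
\phi(\xi',\xi)=\int_\Omega \phi(\vx,\xi)\,\dif\xi'(\vx),\qquad \phi(\vx,\xi)=\tr(\mI(\xi)^{-1}\mA)-d(\vx,\xi).
\]
Because $\phi(\xi',\xi)$ is thus an average of $\phi(\cdot,\xi)$, requiring it to be nonnegative for \emph{all} probability measures $\xi'$ is equivalent to requiring $\phi(\vx,\xi)\ge 0$ for all $\vx\in\Omega$: it suffices to test directions toward single-point (Dirac) designs.

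For $(1)\Leftrightarrow(3)$, convexity of $\EI$ means $\xi^*$ is a global minimizer if and only if every directional derivative $\phi(\xi',\xi^*)$ is nonnegative, which by the reduction above is equivalent to $\phi(\vx,\xi^*)\ge 0$ for all $\vx$, i.e. to the inequality $d(\vx,\xi^*)\le \tr(\mI(\xi^*)^{-1}\mA)$ of condition~(3). For the equality part I would integrate $\phi(\vx,\xi^*)$ against $\xi^*$ itself: since $\int_\Omega w(\vx)\vg(\vx)\vg^T(\vx)\,\dif\xi^*(\vx)=\mI(\xi^*)$ gives $\int_\Omega d(\vx,\xi^*)\,\dif\xi^*(\vx)=\tr(\mI(\xi^*)^{-1}\mA)$, we obtain $\int_\Omega \phi(\vx,\xi^*)\,\dif\xi^*(\vx)=0$. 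A nonnegative integrand with zero integral must vanish $\xi^*$-almost everywhere, so $\phi(\vx,\xi^*)=0$ (equality) precisely on the support of $\xi^*$.

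For $(1)\Leftrightarrow(2)$, the same averaging identity yields the universal comparison $\sup_{\vx\in\Omega} d(\vx,\xi)\ge \int_\Omega d(\vx,\xi)\,\dif\xi(\vx)=\EI(\xi)$ for every design. If $\xi^*$ satisfies~(1) (hence~(3)), then $d(\cdot,\xi^*)\le\EI(\xi^*)$ forces $\sup_{\vx} d(\vx,\xi^*)=\EI(\xi^*)=\min_\xi \EI(\xi)$, and for any competitor $\sup_{\vx} d(\vx,\xi)\ge \EI(\xi)\ge \EI(\xi^*)=\sup_{\vx} d(\vx,\xi^*)$, so $\xi^*$ also minimizes the supremum, giving~(2). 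Conversely, since the EI-optimal design already realizes $\min_\xi\sup_{\vx} d(\vx,\xi)=\min_\xi\EI(\xi)$, any minimizer $\xi^*$ of the supremum obeys $\EI(\xi^*)\le\sup_{\vx} d(\vx,\xi^*)=\min_\xi\EI(\xi)$, which forces $\EI(\xi^*)=\min_\xi\EI(\xi)$, i.e. condition~(1).

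The genuinely delicate steps are the reduction $\inf_{\xi'}\phi(\xi',\xi)=\inf_{\vx}\phi(\vx,\xi)$ and the passage from ``all directional derivatives nonnegative'' to ``global minimizer.'' The first requires the linearity identity above together with continuity of $d(\cdot,\xi)$ on a compact $\Omega$, so that the extremal directions are attained at Dirac measures; the second is exactly the first-order optimality characterization of a convex functional on the convex set of probability measures, which must be invoked through the subgradient/directional-derivative inequality $\EI(\xi')-\EI(\xi^*)\ge\phi(\xi',\xi^*)$ rather than assumed. Everything else — the trace identities and the ``supremum $\ge$ average'' comparison — is routine linear algebra given $\mI(\xi)=\int_\Omega w(\vx)\vg(\vx)\vg^T(\vx)\,\dif\xi(\vx)$ and the positive semidefiniteness of $\mA$.
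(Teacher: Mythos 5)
Your proposal is correct, and it reaches the theorem by a genuinely different organization than the paper. The paper proves the four implications $1\rightarrow 3$, $1\rightarrow 2$, $2\rightarrow 1$, $3\rightarrow 1$, handling both implications involving condition~2 by contradiction: it posits a competitor $\xi^{**}$ that strictly beats $\xi^*$ on one criterion and then uses the comparison $\sup_{\vx\in\Omega} d(\vx,\xi)\geq \sum_j \lambda_j d(\vx_j,\xi)=\tr(\mA\mI(\xi)^{-1})$, where $d(\vx,\xi)=w(\vx)\vg^T(\vx)\mI(\xi)^{-1}\mA\mI(\xi)^{-1}\vg(\vx)$, to contradict optimality on the other criterion. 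You prove $(1)\Leftrightarrow(3)$ and $(1)\Leftrightarrow(2)$ directly, from the same two ingredients (convexity, Lemma~\ref{lem:TrConvex}, and the identity that the $\xi$-average of $d(\cdot,\xi)$ equals $\tr(\mA\mI(\xi)^{-1})$), but you make explicit two steps the paper leaves implicit: the averaging identity $\phi(\xi',\xi)=\int_\Omega\phi(\vx,\xi)\,\dif\xi'(\vx)$, which justifies reducing arbitrary directions to Dirac directions, and the subgradient inequality $\EI(\xi',\vbeta,F_{\IMSE})-\EI(\xi^*,\vbeta,F_{\IMSE})\geq\phi(\xi',\xi^*)$, which is what actually converts ``all directional derivatives nonnegative'' into global optimality (the paper's $3\rightarrow 1$ simply appeals to convexity). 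Your zero-integral argument --- $\int_\Omega\phi(\vx,\xi^*)\,\dif\xi^*(\vx)=0$ with a nonnegative integrand forces $\phi(\cdot,\xi^*)=0$ on the support --- is an improvement: the paper's $1\rightarrow 3$ merely asserts the equality clause of condition~3 without proof. What the paper's contradiction route buys is that it never needs to name the common optimal value; your minimax chain for $(2)\Rightarrow(1)$ anchors everything to $\min_\xi\sup_{\vx} d(\vx,\xi)=\min_\xi\EI(\xi,\vbeta,F_{\IMSE})$, which requires invoking an existing EI-minimizer --- but the paper's $2\rightarrow 1$ also summons ``a design $\xi^{**}$ that minimizes'' the criterion, so the existence assumption is shared and nothing is lost.

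Two caveats apply equally to both arguments. First, both establish equality of $d(\cdot,\xi^*)$ with $\tr(\mI(\xi^*)^{-1}\mA)$ \emph{at} the support points, but neither proves the ``only at'' in condition~3; in degenerate cases the bound can be attained off the support, so that clause is loose as stated in the theorem itself. Second, existence of a minimizer is assumed rather than proved (it follows from compactness of $\Omega$ and continuity); since the paper does the same, this is not a gap specific to your argument.
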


According to Theorem \ref{thm:GET}, for an optimal design $\xi^*$, the directional derivative $\phi(\vx,\xi^*)$ is non-negative for any $\vx\in \Omega$.
It implies that for any non-optimal design, there will be some directions in which the directional derivative $\phi(\vx,\xi)<0$.
Given a current design $\xi$, to gain the maximal decrease in the EI-optimality criterion,
we would choose a new support point $\vx^*$ to be added into the design, if $\phi(\vx^*,\xi) = \min\limits_{\vx\in\Omega} \phi(\vx,\xi)<0$.
Then one can optimize the weights of all support points in the updated design, which is described in \textbf{Algorithm 1} in Section \ref{section:optweights}. With this greedy search of the design points, we hope that most/all optimal weights in each iteration are nonzero, and the multiplicative algorithm converges quickly.
By iterating the selection of support point and the weight update of all support points, this two-step iterative procedure
can be continued until $\min\limits_{\vx\in\Omega} \phi(\vx,\xi)\geq 0$ for all $\vx\in\Omega$, which means the updated design is EI-optimal.
Such a sequential algorithm of constructing optimal designs, as described in \textbf{Algorithm 2} in Section \ref{section: Algorithm 2},
follows similar spirits in the widely-used Fedorov-Wynn algorithm (\citeO{70wynn};\citeO{72wynn}), the multi-stage algorithm proposed by \citeI{13yang}, and the combined algorithm proposed by \citeI{martin2015combined}. It is worth pointing out that when the region of explanatory variable $\vx$ we are interested in, $\Omega_C$ is a subset of original experimental region $\Omega$, the optimal design $\xi^*$ is still defined and searched on the original experimental region $\Omega$. As a result, the support points in the optimal design may locate outside $\Omega_C$.

\subsection{The Proposed Sequential Algorithm}\label{section: Algorithm 2}
As discussed in Section 3.2, the multiplicative algorithm tends to converge slowly under a large candidate set.
The theoretical results in Theorem \ref{thm:GET} provide insightful guidelines on sequential selection of design points.
In combination with Algorithm 1 to find optimal weights for fixed design points in Section \ref{section:optweights},
we propose an efficient sequential algorithm to construct the EI-optimal design for generalized linear models.
The details of the proposed sequential algorithm is summarized in \textbf{Algorithm 2}.

\begin{algorithm}[H]\label{alg:alg2}
  \caption{}
  \begin{algorithmic}[1]
  \State Calculate matrix $\mA = \int_{\Omega}\vg(\vx)\vg^T(\vx)\left[\frac{\dif h^{-1}}{\dif \eta}\right]^2\dif F_{\IMSE}(\vx)$.
  \If{cond($\mA>1e16$)}
  \State Construct orthogonal basis $\vg$ using Gram-Schmidt orthogonalization.
  \State Calculate matrix $\mA$ using the new orthogonal basis $\vg$.
  \EndIf
  \State Generate an $N$ points candidate pool $\mathcal{C}$ using grid or Sobol sequence from experimental region $\Omega$.
  \State Choose an initial design points set $\mathcal{X}^{0} = \left\{\vx_1,\cdots,\vx_{l+1}\right\}$ containing $l+1$ points.
  \State Obtain optimal weights $\vlambda^0$ of initial design points set $\mathcal{X}^{0}$ using {\bf Algorithm 1} and form the initial design $\xi^0 = \left\{\begin{array}{cc} \mathcal{X}^{0}\\ \vlambda^0
  \end{array}\right\}$.
  \State Calculate the lower bound of EI-efficiency of $\xi^0$, $$\LEff(\xi^0|\xi^*) = \frac{\tr(\mI^{-1}(\xi^0)\mA)}{\max\limits_{\vx\in\mathcal{C}}w(\vx)\vg(\vx)^T\mI(\xi^0)^{-1}\mA\mI(\xi^0)^{-1}\vg(\vx)}.$$
  \State Set $r=1$.
  \While {$\LEff_{\text{EI}}(\xi^{r-1}|\xi^*)<reqeff$ and $r<maxiter$}
  \State
 Add the point $\vx_r^* = \argmin \limits_{\vx \in \mathcal{C}} \phi(\vx,\xi^{r-1})$ to the current design points set, i.e., $\mathcal{X}^{r} = \mathcal{X}^{r-1}\cup \vx_r^*$,
  where $\phi(\vx,\xi^{r-1})$ is the directional derivative expressed as
  $$\phi(\vx,\xi^{r-1}) = \tr(\mI(\xi^{r-1})^{-1}\mA)-w(\vx)\vg(\vx)^T\mI(\xi^{r-1})^{-1}\mA\mI(\xi^{r-1})^{-1}\vg(\vx). $$
\State Obtain optimal weights $\vlambda^r$ of the current design points set $\mathcal{X}^{r}$ using {\bf Algorithm 1} and form the current design $\xi^r = \left\{\begin{array}{cc} \mathcal{X}^{r}\\ \vlambda^r
  \end{array}\right\}$.
  \State Calculate the lower bound of EI-efficiency of $\xi^r$, $$\LEff_{\text{EI}}(\xi^r|\xi^*) = \frac{\tr(\mI^{-1}(\xi^r)\mA)}{\max\limits_{\vx\in\mathcal{C}}w(\vx)\vg(\vx)^T\mI(\xi^r)^{-1}\mA\mI(\xi^r)^{-1}\vg(\vx)}.$$
\State $r=r+1$.
 \EndWhile
\end{algorithmic}
\end{algorithm}

The efficiency of a design $\xi$ relative to another design $\xi'$ under a $\Phi_p$-optimality is defined as (\citeO{pukelsheim2006optimal})
$$\Eff_{\Phi_p}(\xi|\xi') = \frac{\Phi_p(\xi')}{\Phi_p(\xi)}.$$
Asymptotically, the efficiency is essentially  the ratio between the number of trials required to obtain the same amount of information measured by a specific $\Phi_p$-optimality using design $\xi'$ and $\xi$.
The following proposition provides a lower bound $\LEff_{\Phi_p}(\xi|\xi^*)$ of $\Phi_p$-efficiency of a design $\xi$ relative to the corresponding $\Phi_p$-optimal design $\xi^*$, and this lower bound is used in the stopping rule in \textbf{Algorithm 2}.

\begin{thm}\label{thm:efflowerbound}
With candidate pool $\mathcal{C}$, the $\Phi_p$-efficiency of any design $\xi$ relative to $\Phi_p$-optimal design $\xi^*$, $\Eff_{\Phi_p}(\xi|
\xi^*)$, is lower bounded by
\begin{eqnarray}\label{eqn:efflowerbound}
\LEff_{\Phi_p}(\xi|\xi^*)  = \frac{\Phi_p(\xi)}{\max\limits_{\vx\in\mathcal{C}}q^{-1/p}\left[\tr\left(\mI(\xi)^{-1}\mB\right)^p\right]^{1/p-1}w(\vx_i)\vg(\vx_i)^T\mI(\xi)^{-1}\mB\left(\mI(\xi)^{-1}\mB\right)^{p-1}\mI(\xi)^{-1}\vg(\vx_i)}. \footnotemark
\end{eqnarray}
\end{thm}
\footnotetext{For D-optimality, $\LEff_{D}(\xi|\xi^*) = \frac{l}{\max\limits_{\vx\in\mathcal{C}}w(\vx)\vg(\vx)^T\mI(\xi)^{-1}\vg(\vx)}.$}

Specifically, for EI-optimality, the lower bound of a design $\xi$ relative to the corresponding EI-optimal design $\xi^*$ is
$$\LEff_{\text{EI}}(\xi|\xi^*) = \frac{\tr(\mI^{-1}(\xi)\mA)}{\max\limits_{\vx\in\mathcal{C}}w(\vx)\vg(\vx)^T\mI(\xi)^{-1}\mA\mI(\xi)^{-1}\vg(\vx)}.$$ Following the stopping criterion used in (\citeO{harman2019randomized}), when the lower bound $\LEff_{\text{EI}}(\xi^r|\xi^*)$ of the efficiency $\Eff_{\text{EI}}(\xi^r|\xi^*)$ reaches a user specified threshold $reqeff$, there is no practical reason to continue the search. We choose $reqeff=0.99$ in all the numerical examples. Another user specified parameter is the maximum iterations allowed, $maxiter$. We choose $maxiter = 100$, and the proposed algorithm converges within 100 iterations in all the numerical examples conducted, while Newton's method fails to converge within 100 iterations for one case in \emph{Example 2}. More details could be found in Section \ref{section:Example}.

Note that the proposed sequential algorithm of constructing EI-optimal design for GLMs does not require the computation of Hessian matrix inversion as in Newton-Raphson methods.
Thus it can avoid issue of singular Hessian matrix and can be computationally efficient than the conventional methods.
The sequential nature of the proposed algorithm also enables efficient search of optimal weights without updating weights for all candidates points. \citeI{martin2015combined} considered a similar sequential algorithm that combines Whittle's method (\citeO{73whi}) with one iteration of multiplicative algorithm to update the weights after a new design point is added.
As a result, the weights are not necessarily optimized in each iteration of the combined algorithm, and the proof of Theorem \ref{thm:cong-algo2} can not be applied to the combined algorithm. 
The convergence property of the combined algorithm for general $\Phi_p$-optimality is not studied in  \citeI{martin2015combined}.

Note that the sequential algorithm (\textbf{Algorithm 2}) can also be easily modified to achieve $\phi_p$-optimal designs, where the analytic formula of directional derivative $\phi(\vx,\xi)$ for $\phi_p$-optimality has been studied in several existing literature (\citeO{06atk}; \citeO{stufken2011optimal}; \citeO{13yang}, etc.).
Moreover, we also establish the convergence property of the proposed sequential algorithm as follows.

\begin{thm}[\textbf{Convergence of Algorithm 2}]\label{thm:cong-algo2}
With a discrete design region $\mathcal{C}$, the design constructed by \textbf{Algorithm  2} converges to EI-optimal design $\xi^*$ that minimizes $EI(\xi,\vbeta,F_{\IMSE})$, as $r\rightarrow\infty$, i.e.,
$$\lim\limits_{r\rightarrow\infty} \EI(\xi^r,\vbeta,F_{\IMSE}) = \EI(\xi^*,\vbeta,F_{\IMSE}).$$
\end{thm}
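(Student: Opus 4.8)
The plan is to run the standard monotone-convergence argument for vertex-direction (Fedorov--Wynn) schemes, but to supply the two quantitative ingredients that the combination with the multiplicative weight step of Algorithm~1 makes available. Throughout I write $\xi^r$ for the design produced at the end of iteration $r$, i.e.\ after the weights on the current support have been optimized in Step~3, let $S_r$ be the support of $\xi^r$, and abbreviate $g_r := \min_{\vx\in\mathcal{C}}\phi(\vx,\xi^r) = \phi(\vx_r^*,\xi^r)$.

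First I would establish monotonicity and a finite limit. Since only points are added, $S_r\subseteq S_{r+1}$, and by Proposition~\ref{pro:convAlg1} the design $\xi^{r+1}$ carries the EI-optimal weights on $S_{r+1}$. As the weight configuration of $\xi^r$ (with a zero weight on the newly added point) is feasible on $S_{r+1}$, we get $\EI(\xi^{r+1})\le\EI(\xi^r)$, strictly when $g_r<0$ because assigning a small positive weight to $\vx_r^*$ moves $\xi^r$ in a strict descent direction of the convex function $\EI$ (Lemma~\ref{lem:TrConvex}). Since $\EI(\xi^r)=\tr(\mA\,\mI(\xi^r)^{-1})\ge 0$, the sequence is non-increasing and bounded below, hence converges to some limit $L\ge\EI(\xi^*)$. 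Next I would bound the optimality gap by the observed directional derivative. Convexity applied to $\tilde\xi=(1-\alpha)\xi^r+\alpha\xi^*$ gives $\phi(\xi^*,\xi^r)\le \EI(\xi^*)-\EI(\xi^r)$, i.e.\ $\EI(\xi^r)-\EI(\xi^*)\le-\phi(\xi^*,\xi^r)$. Because $\phi(\cdot,\xi^r)$ is linear in the information matrix and hence in the weights, $\phi(\xi^*,\xi^r)=\sum_j\lambda_j^*\,\phi(\vx_j^*,\xi^r)$ over the support $\{\vx_j^*\}$ of $\xi^*$. Under the hypothesis that $\mathcal{C}$ contains the support of $\xi^*$ (or $\mathcal{C}\to\Omega$), each $\vx_j^*\in\mathcal{C}$, so $\phi(\xi^*,\xi^r)\ge g_r$ and therefore
$$0\le \EI(\xi^r)-\EI(\xi^*)\le -g_r;$$
it thus suffices to prove $g_r\to0$.

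To force $g_r\to0$ I would bound the per-step decrease from below, and this is the step I expect to be the main obstacle. Restricting $\EI$ to the segment from $\xi^r$ towards the one-point design $\xi_{\vx_r^*}$ placing unit mass at $\vx_r^*$ yields a convex scalar function $f(\alpha)=\EI\big((1-\alpha)\xi^r+\alpha\,\xi_{\vx_r^*}\big)$ with $f'(0)=g_r$, and a second-order expansion gives $\min_{\alpha}f(\alpha)\le f(0)-g_r^2/(2M)$ for any uniform upper bound $M$ on $f''$; since Step~3 reoptimizes over the enlarged support, $\EI(\xi^{r+1})\le\min_\alpha f(\alpha)$. The crux is that $M$ must be chosen uniformly in $r$, which amounts to keeping $\mI(\xi^r)$ bounded and bounded away from singularity. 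The upper bound is automatic, as $\mI(\xi^r)$ is a convex combination of the bounded matrices $w(\vx)\vg(\vx)\vg^T(\vx)$ over compact $\Omega$. The lower bound comes from the monotonicity above: with $\mA$ positive definite (guaranteed by the orthogonalized basis of Corollary~\ref{cor: feasibility}) one has $\EI(\xi^r)\ge \lambda_{\min}(\mA)/\lambda_{\min}(\mI(\xi^r))$, so the boundedness of $\EI(\xi^r)$ forces $\lambda_{\min}(\mI(\xi^r))$ to stay bounded away from $0$. Securing this uniform nondegeneracy rigorously is exactly where care is needed.

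With the per-step bound in hand, telescoping gives $\sum_r g_r^2/(2M)\le \EI(\xi^0)-L<\infty$, whence $g_r\to0$. Combining with the gap bound yields $\EI(\xi^r)-\EI(\xi^*)\le-g_r\to0$, so $L=\EI(\xi^*)$ and $\lim_{r\to\infty}\EI(\xi^r,\vbeta,F_{\IMSE})=\EI(\xi^*,\vbeta,F_{\IMSE})$, as claimed. Finally I would remark on the role of the threshold: for a fixed $\epsilon>0$ the stopping rule $g_r\ge-\epsilon$ is reached in finitely many steps and already delivers $\EI(\xi^r)-\EI(\xi^*)\le\epsilon$, so the exact limit is recovered by letting $\epsilon\downarrow0$.
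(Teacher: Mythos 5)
Your proposal is correct and follows essentially the same route as the paper's proof: a Fedorov--Wynn descent argument built from (i) the convexity bound $\phi(\xi^*,\xi^r)\le \EI(\xi^*,\vbeta,F_{\IMSE})-\EI(\xi^r,\vbeta,F_{\IMSE})$, (ii) linearity of the directional derivative in the weights to deduce $\phi(\xi^*,\xi^r)\ge\phi(\vx_r^*,\xi^r)$ from the candidate-pool assumption, and (iii) a second-order expansion along the segment toward $\vx_r^*$, dominated by the weight-reoptimization in Step 3. The only differences are organizational: the paper argues by contradiction with a fixed gap $a>0$ and a constant per-step decrease (leading to $\EI\to-\infty$), whereas you telescope a sufficient decrease $g_r^2/(2M)$ to get $g_r\to 0$ directly, and your explicit lower bound on $\lambda_{\min}(\mI(\xi^r))$ via positive definiteness of $\mA$ substantiates the uniform bound on the second derivative that the paper merely asserts (though note that neither argument fully handles the degeneracy of the information matrix as the line-search parameter approaches $1$, which is easily repaired by restricting the step to, say, $[0,1/2]$).
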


When the experimental region $\Omega$ is continuous, a discretization would be needed to form the candidate pool $\mathcal{C}$. We would like to point out that the choice of candidate pool $\mathcal{C}$ would affect the efficiency of \textbf{Algorithm 2}, and the computational time increases dramatically as the candidate pool gets large.
We suggest using grid as the candidate pool when the dimension of explanatory variables is small, and choosing the Sobol sequence (\citeO{67sobol}) as the candidate pool when the explanatory variable dimension is large.
The Sobol sequence is a space-filling design that covers the experimental domain $\Omega$ well and is efficient when the dimension of the explanatory variable is high.
To further improve the efficiency of the algorithm, a search strategy inspired by \citeI{12yang} could be employed.
One can start with a more sparse Sobol sequence, and achieve the current best design using Algorithm 2.
Then, we can further create denser and denser candidate pool in the neighborhood of the support points in the current best design until there is no further improvement under the EI-optimality criterion.

\section{Numerical Examples}\label{section:Example}
In this section, we will conduct numerical studies to evaluate the performance of the proposed sequential algorithm (Algorithm 2). As noted in Section \ref{section: Algorithm 2}, Algorithm 2 can be applied to construct $\Phi_p$-optimal designs such as A-optimal design.
Based on our best knowledge, there is no existing algorithm that can directly construct the EI-optimal design for GLMs. However, the Newton-type method (\citeO{13yang}) could be revised to construct EI-optimal design. The proposed algorithm (Algorithm 2) will be compared with the Newton-type method in \citeI{13yang}, which adopts Newton's method to update the weights of design points and is an efficient algorithm in the literature. The comparison will be conducted under different generalized linear models with various settings of variable dimensions. Both algorithms are implemented in \textsc{Matlab}, and the code for Newton's method is converted from the SAS code kindly provided by the authors of \citeI{13yang}. All codes were run on a MacBook Pro with 2.4 GHz Intel Core i5 processor. The Newton-type method requires a well-conditioned Hessian matrix to update the weights, but the Hessian matrix could be numerically singular in a certain iteration and results in inaccurate weight updates. In this numerical example, the generalized inverse of Hessian matrix is used for the Newton-type method. The proposed algorithm always returns nonnegative weights, only eliminates the design points with almost zero weight and does not require Hessian matrix inversion. Grid candidate pool of size $N = (s+1)^d$ with $s+1$ equally spaced points for each explanatory variable $x_i$, $i=1,\cdots,d$ is used in all numerical examples.
Note that our \textbf{Algorithm 2} does not require the candidate pool to be a grid. We would suggest using space-filling design such as Sobol sequence as the candidate pool for high-dimensional explanatory variable.
The target efficiency lower bound $reqeff$ in the stopping criterion is chosen to be 0.99, and we consider two designs are both good enough if their efficiency lower bounds exceed 0.99.
Thus, it is fair to compare the efficiency of the algorithms based on the computational time.

\emph{Example 1}.  The setting of this example  follows the Example 2 in (\citeO{13yang}), which
considers the linear model
\begin{eqnarray*}
Y &\sim & \theta_1+\theta_2x_1+\theta_3x_1^2+\theta_4x_2+\theta_5x_1x_2+N(0,\sigma^2),\\
\boldsymbol{\theta} &=& (\theta_1,\theta_2,\theta_3,\theta_4,\theta_5),
\end{eqnarray*}
where $\Omega = \{(2i/s-1, j/s)\}$, $i=0,1,...,s, j = 0,1,...,s\}$, where $s$ is the number of grid points in each variable and the total number of points in $\Omega$ is $N=(s+1)^2$. Note that since the experimental region is discrete, it is not necessary to discretize the experimental region.
The proposed algorithm and the Newton-type method are compared under A- and EI-optimality. For EI-optimality, we choose $F_{\IMSE} = F_{\unif}$.

Here we consider the same initial design, consisting of five randomly chosen points in $\Omega$, for both Newton-type method and the proposed algorithm.
For different values of $s$, Figure \ref{fig:LMA} reports the average computational time (in seconds) of the two algorithms and average efficiency $\Eff(\xi^*_{\text{Proposed}}|\xi^*_{\text{Newton}})$ based on 10 randomly chosen initial designs, where $\xi^*_{\text{Proposed}}$ and $\xi^*_{\text{Newton}}$ are the optimal designs achieved using the proposed algorithm and Newton's method, respectively.
\begin{figure}[htpb]
\centering
\subfloat[EI-Optimality, Candidate Pool Size $N = (s+1)^2$]
{{\includegraphics[width=6cm]{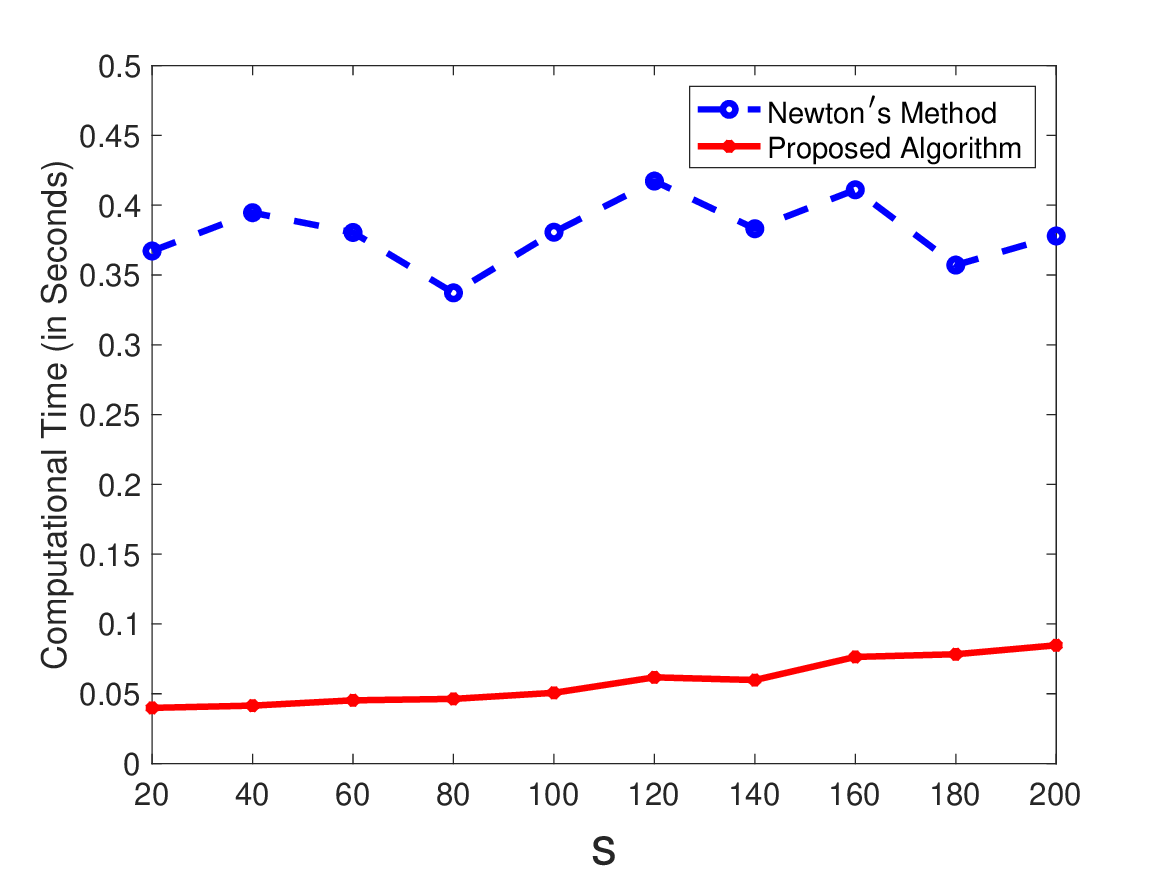}}
\hfill
{\includegraphics[width=6cm]{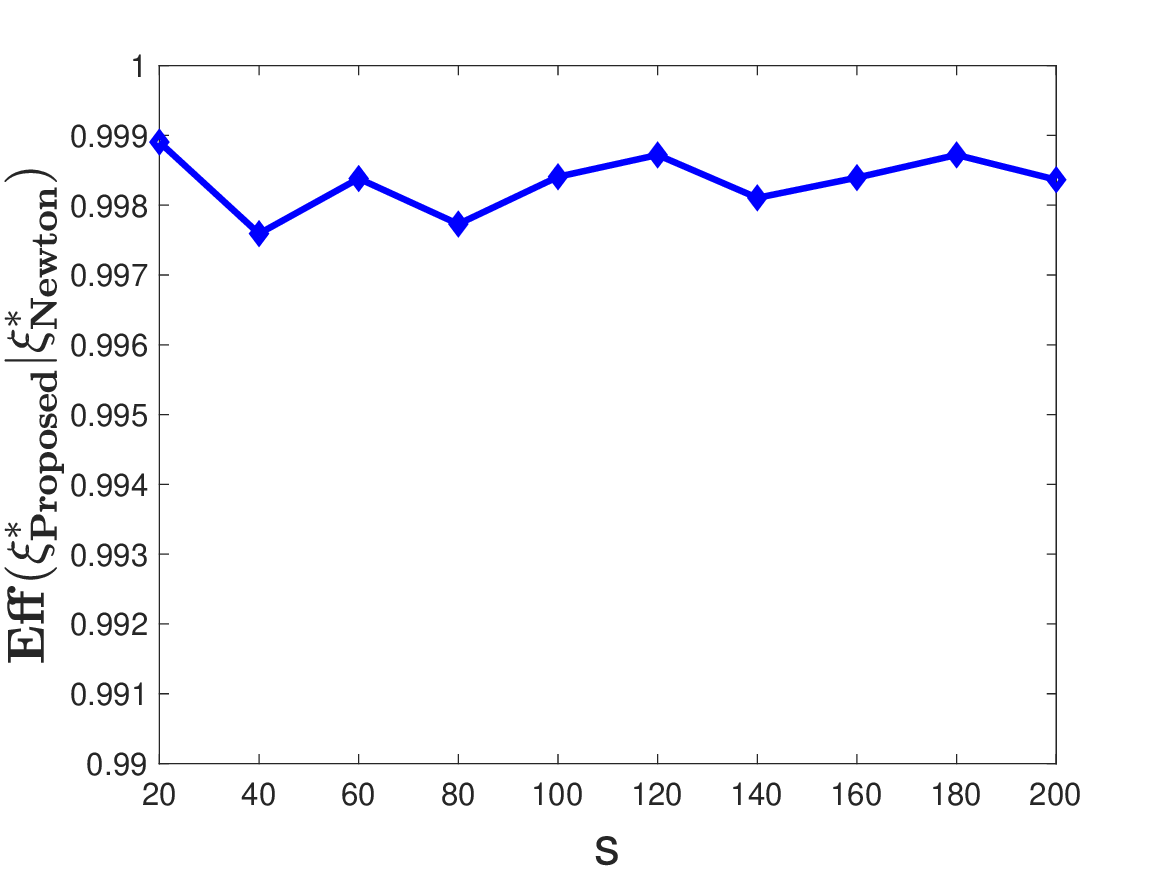}}}
\qquad
\subfloat[A-Optimality, Candidate Pool Size $N = (s+1)^2$]
{{\includegraphics[width=6cm]{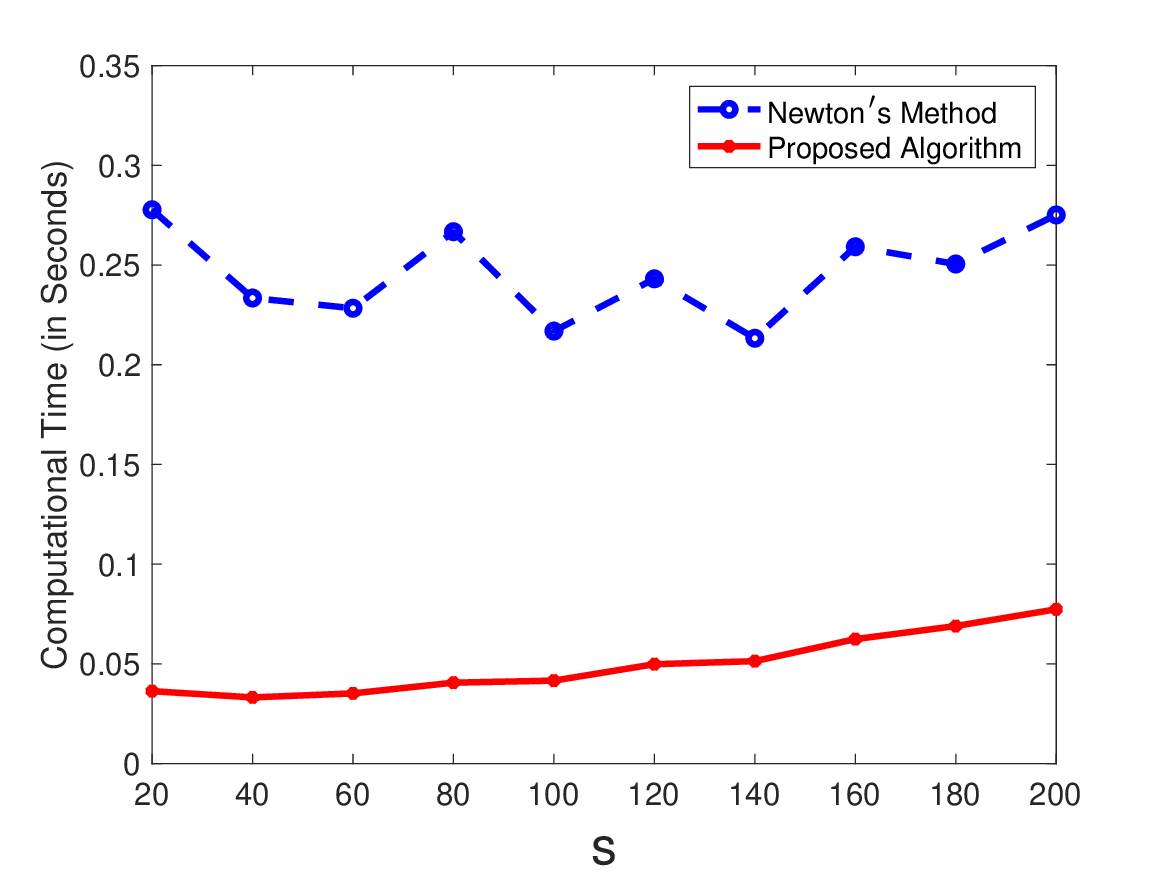}}
\hfill
{\includegraphics[width=6cm]{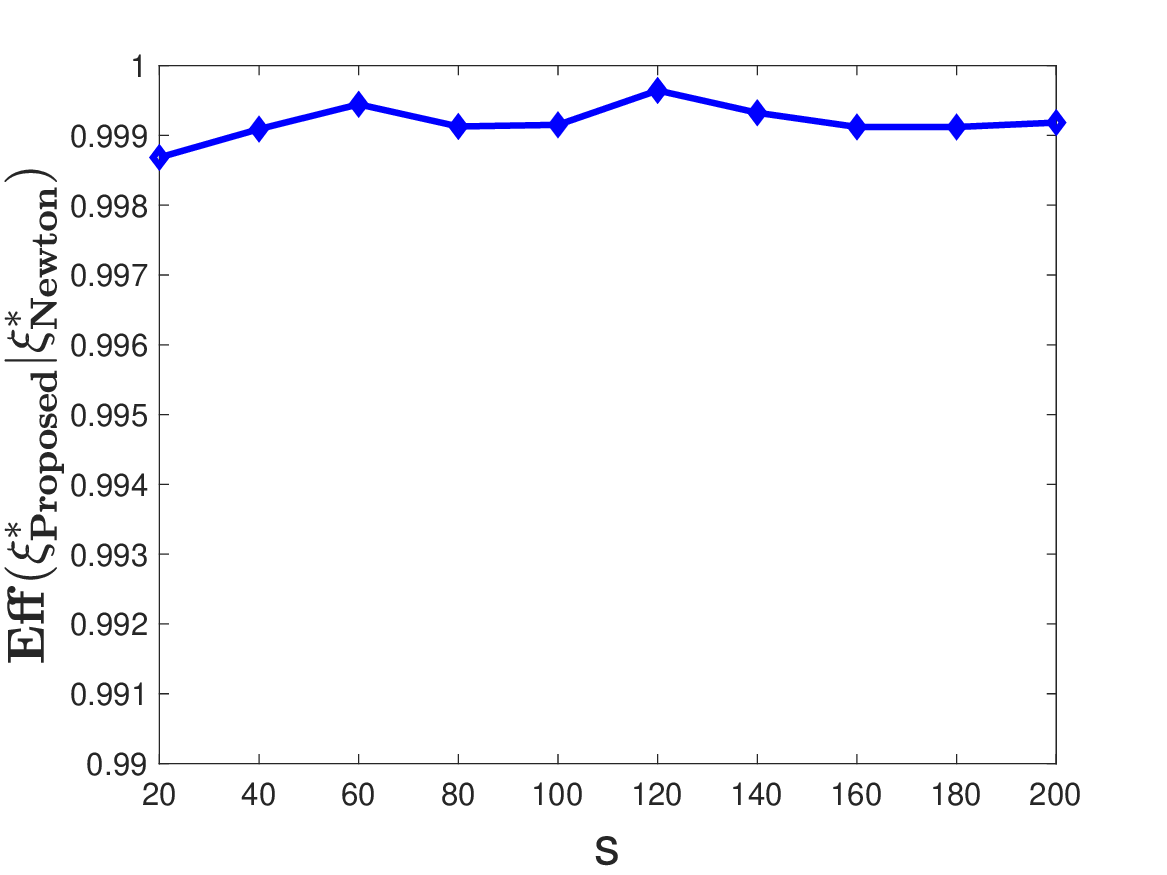}}}
\caption{Computational Time and Relative Efficiency $\Eff(\xi^*_{\text{Proposed}}|\xi^*_{\text{Newton}})$ for A- and EI-optimal Designs with Linear Regression Model}
\label{fig:LMA}
\end{figure}

Figure \ref{fig:LMA} shows that the proposed algorithm is more efficient than the Newton-type method since it does not requires additional remedy for negative weights, and the multiplicative algorithm is only performed on a small set of design points. It is found that the computational time of Newton's method largely varies and significantly depends on the initial design points. In contrast,
the proposed algorithm is very robust to initial design points and has relatively stable computational performance. Although the efficiencies of both designs exceed 0.99, the design achieved by Newton's method has a slightly smaller optimality criterion value, and this could be due to the quadratic convergence rate of Newton's method. As shown in Figure \ref{fig:LMA}, the efficiency of the design achieved by the proposed algorithm relative to that achieved by Newton's method is always above 99.7\%, but the proposed algorithm is about 4 times faster than the Newton's method.

We further compare the above EI-optimal design with $F_{\IMSE}(\vx) = F_{\unif}(\vx)$ to the EI-optimal design with a different $F_{\IMSE}(\vx) = F^{(1)}_{\asine}(x_1)F^{(2)}_{\asine}(x_2)$, where $F_{\asine}^{(1)}(x_1)$ and $F_{\asine}^{(2)}(x_2)$ are arcsine distributions on $[-1,1]$ and $[0,1]$, respectively. Different from the uniform distribution, arcsine distribution on bounded support $[a,b]$ has probability density function $\rho(x) = \frac{1}{\pi\sqrt{(x-a)(x-b)}}$, $x\in[a,b]$, which puts more weight towards the interval ends. Figure \ref{fig:unifVsarcsine} shows the support points of EI-optimal designs when $F_{\IMSE}$ is chosen to be uniform distribution and arcsine distribution.
\begin{figure}[htpb]
\centering
{\includegraphics[width=7cm]{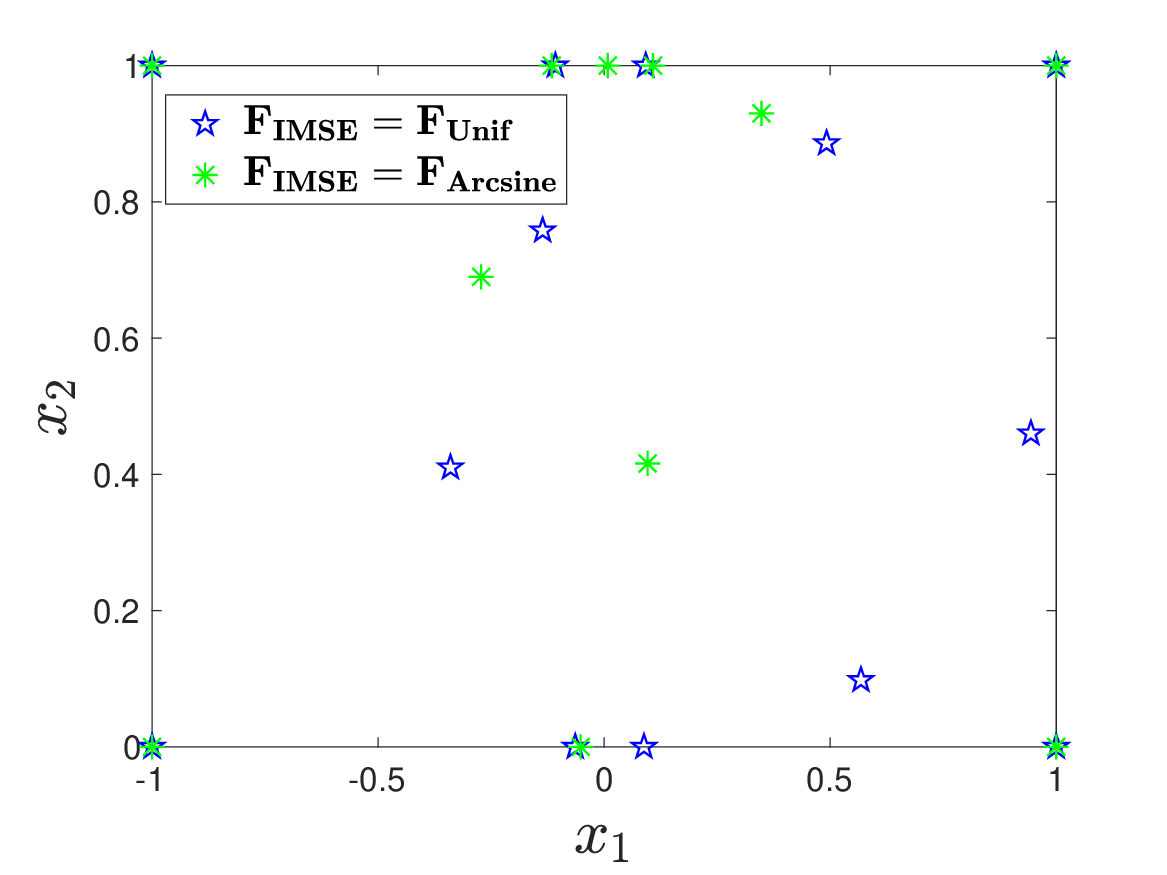}}
\caption{Support Points of EI-Optimal Designs with $F_{\IMSE} = F_{\unif}$ and $F_{\IMSE} = F_{\asine}$}
\label{fig:unifVsarcsine}
\end{figure}
Under $F_{\IMSE} = F_{\unif}$, The efficiency of EI-optimal design achieved using arcsine distribution relative to that achieved using uniform distribution is 0.9564. On the other hand, under $F_{\IMSE} = F_{\asine}$, The  efficiency of EI-optimal design achieved using uniform distribution relative to that achieved using arcsine distribution is 0.9595.

\emph{Example 2.} This example considers the logistic regression model with binary response.
Assume that the domain of $d-$dimensional explanatory variable $\vx = [x_1,...,x_d]$ is standardized to be a unit hypercube $[-1,1]^d$.
With $l$ basis functions $\vg(\vx) = [g_1(\vx),...g_l(\vx)]^T$ and regression coefficients $\vbeta=[\beta_1,...,\beta_l]^T$, the logistic regression model with binary response $Y \in \{0, 1\}$ is defined as:
$$Prob(Y=1|\vx) = \frac{e^{\vbeta^T\vg(\vx)}}{1+e^{\vbeta^T\vg(\vx)}}.$$
Usually the basis functions $\vg(\vx)$ are low degree polynomials of explanatory variable $\vx$, and in this example we consider linear predictors, i.e., $\vg(\vx) = (1, \vx^T)^T = (1, x_1,...,x_d)^T$.
Given a design $\xi = \left\{\begin{array}{ccc}
\vx_1,&...,&\vx_n\\
\lambda_1,&...,&\lambda_n
\end{array}\right \}$, the EI-optimality criterion with some probability distribution $F_{\IMSE}$ is
$$\EI(\xi,\vbeta,F_{\IMSE}) = \tr\left(\mA \mI(\xi)^{-1}\right),$$
where $\mA = \int_{\Omega}(1, \vx^T)^T(1, \vx^T) v(\vx)\dif F_{\IMSE}(\vx)$ with $v(\vx) = e^{2(\beta_{0}+\vbeta^{T} \vx)} / (1 + e^{\beta_{0}+\vbeta^{T} \vx})^{4}$,
and
$\mI(\xi) = \sum\limits_{i=1}^n \lambda_i w(\vx_{i})(1, \vx_{i}^T)^T(1, \vx_{i}^T)$ with
$w(\vx_{i}) = e^{\beta_{0}+\vbeta^{T} \vx_{i}} / (1 + e^{\beta_{0}+\vbeta^{T} \vx_{i}})^{2}$.
In this example, we consider ({\it case i}) the classical I-optimality with $F_{\IMSE}$ being a uniform distribution on $\Omega_c = \Omega = [-1,1]^d$, i.e., $F'_{\IMSE}(\vx) = \frac{1}{2^d}, \vx\in\Omega$ and $\mA = \int_{[-1,1]^d}\frac{1}{2^d}(1, \vx^T)^T(1, \vx^T)v(\vx)\dif\vx$;
and ({\it case ii}) $F_{\IMSE}$ being a uniform distribution on positive half hypercube $\Omega_c = [0,1]^d\subset\Omega = [-1,1]^d$, i.e., $F'_{\IMSE}(\vx) = \left\{\begin{array}{ll} 1, & \vx\in\Omega_c,\\ 0, & \vx\notin\Omega_c\end{array}\right.$ and $\mA = \int_{[0,1]^d}(1, \vx^T)^T(1, \vx^T)v(\vx)\dif\vx$. To investigate the performance of the proposed algorithm when the explanatory variable dimension $d$ gets large, we compute the EI-optimal design under the following scenarios:
\begin{enumerate}[(a)~]
\item
$d = 1$, $\vbeta = [0.2, 1.6]^T$
\item
$d=2$, $\vbeta = [2, 1, -2.5]^T$.
\item
$d=3$, $\vbeta = [0.5, 1.6, -2.5, 2]^T$.
\end{enumerate}
We explore three properties of the proposed algorithm: (1) efficiency compared to Newton's method; (2) choice of convergence rate $\delta$ in Algorithm 1; and (3) choice of candidate pool size.
\begin{itemize}
\item
\emph{Efficiency compared to Newton's method.}

The computational time comparison between the proposed algorithm and Newton's method is summarized in Figures \ref{fig:multiI-eg2a} and \ref{fig:multiI-eg2b}.
\begin{figure}[htpb]
\centering
\subfloat[$d=1$, case i, Candidate Pool Size $N = s+1$]
{{\includegraphics[width=6cm]{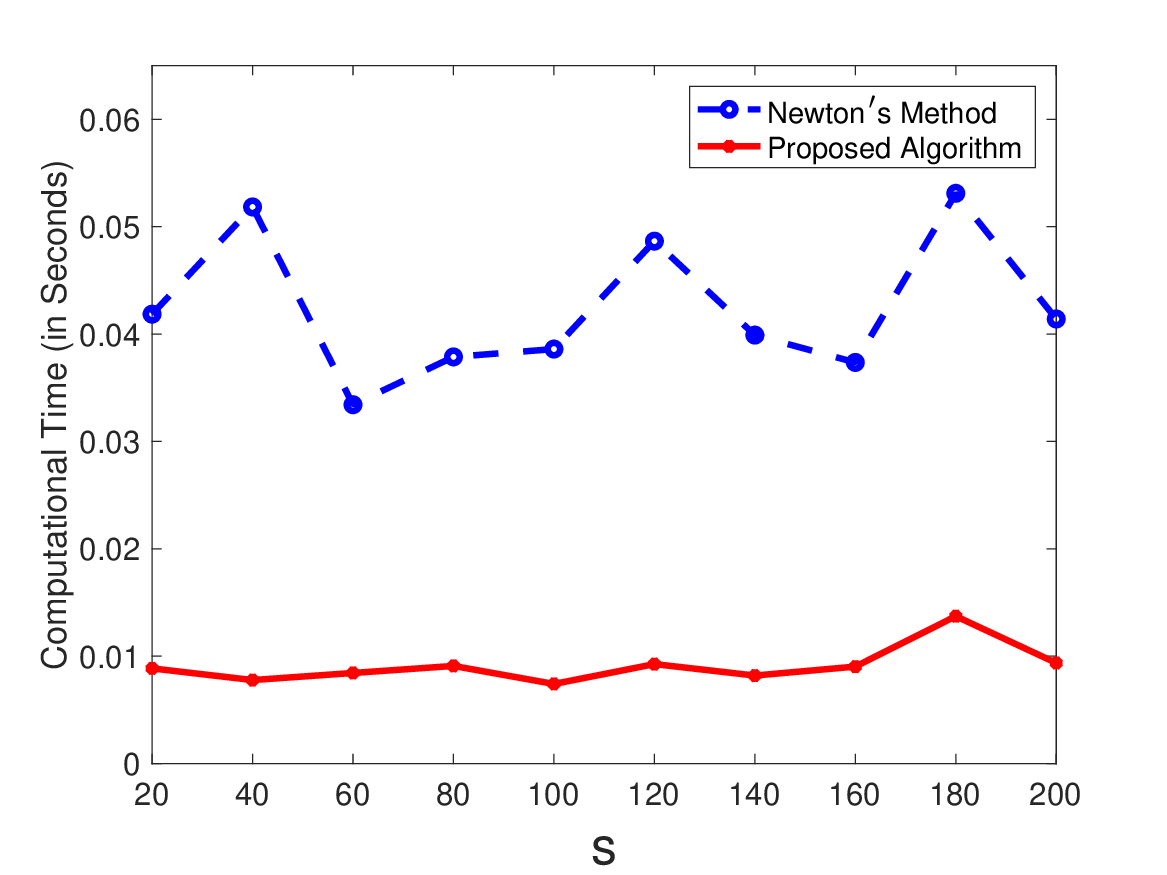}}
\hfill
{\includegraphics[width=6cm]{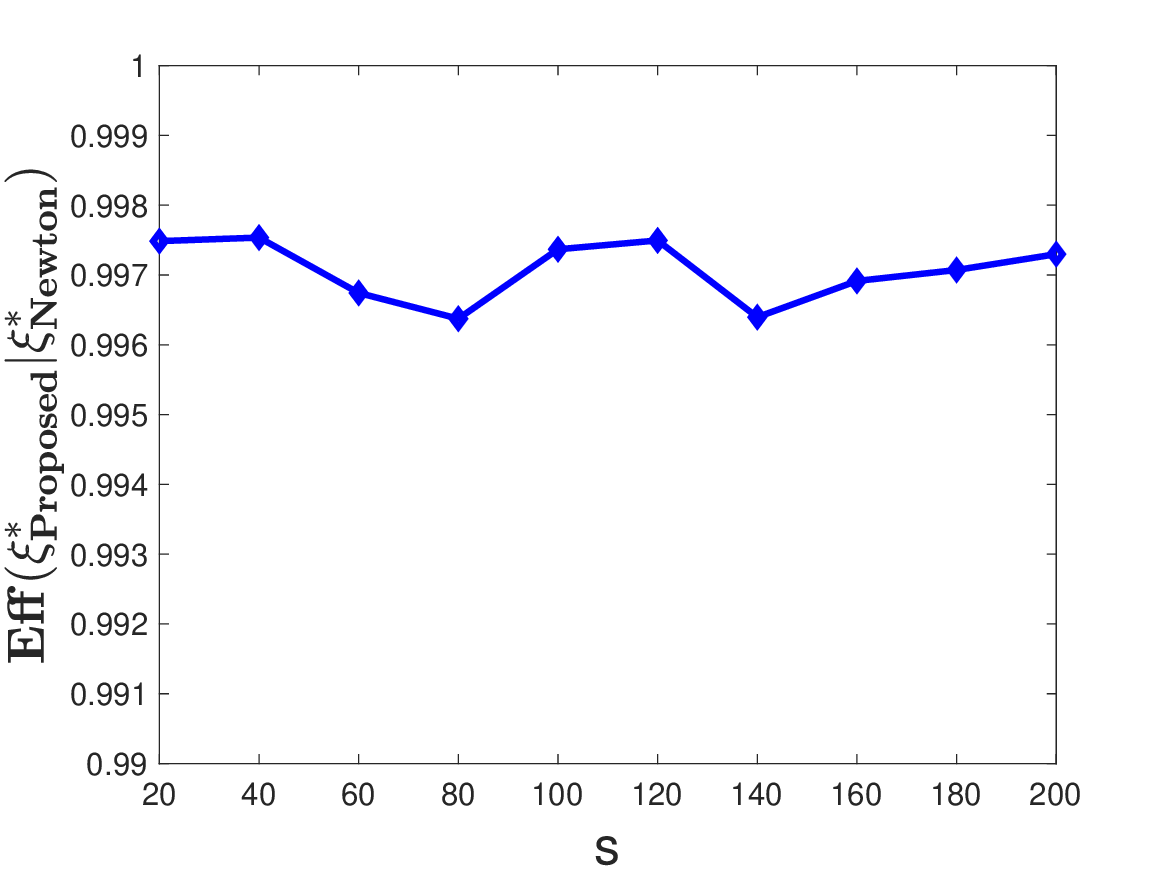}}}
\qquad
\subfloat[$d=1$, case ii, Candidate Pool Size $N = s+1$]
{{\includegraphics[width=6cm]{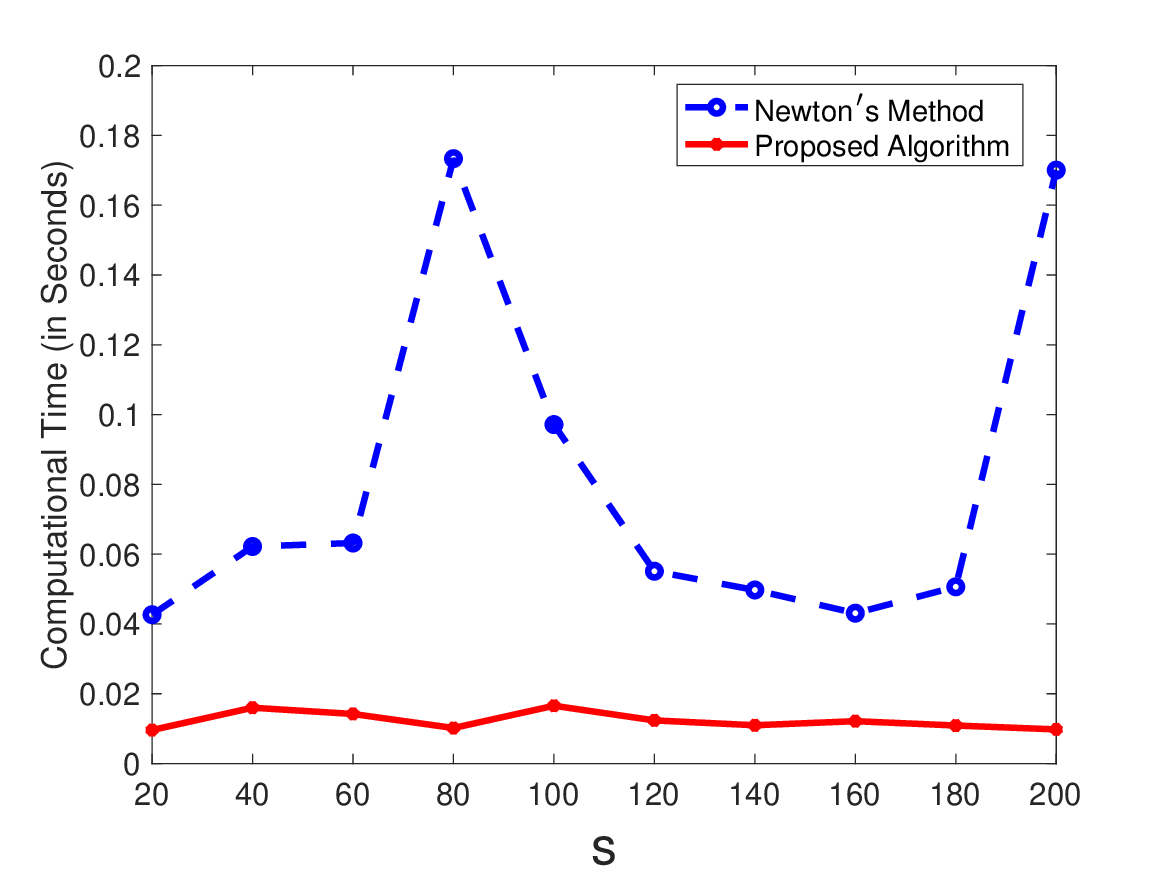}}
\hfill
{\includegraphics[width=6cm]{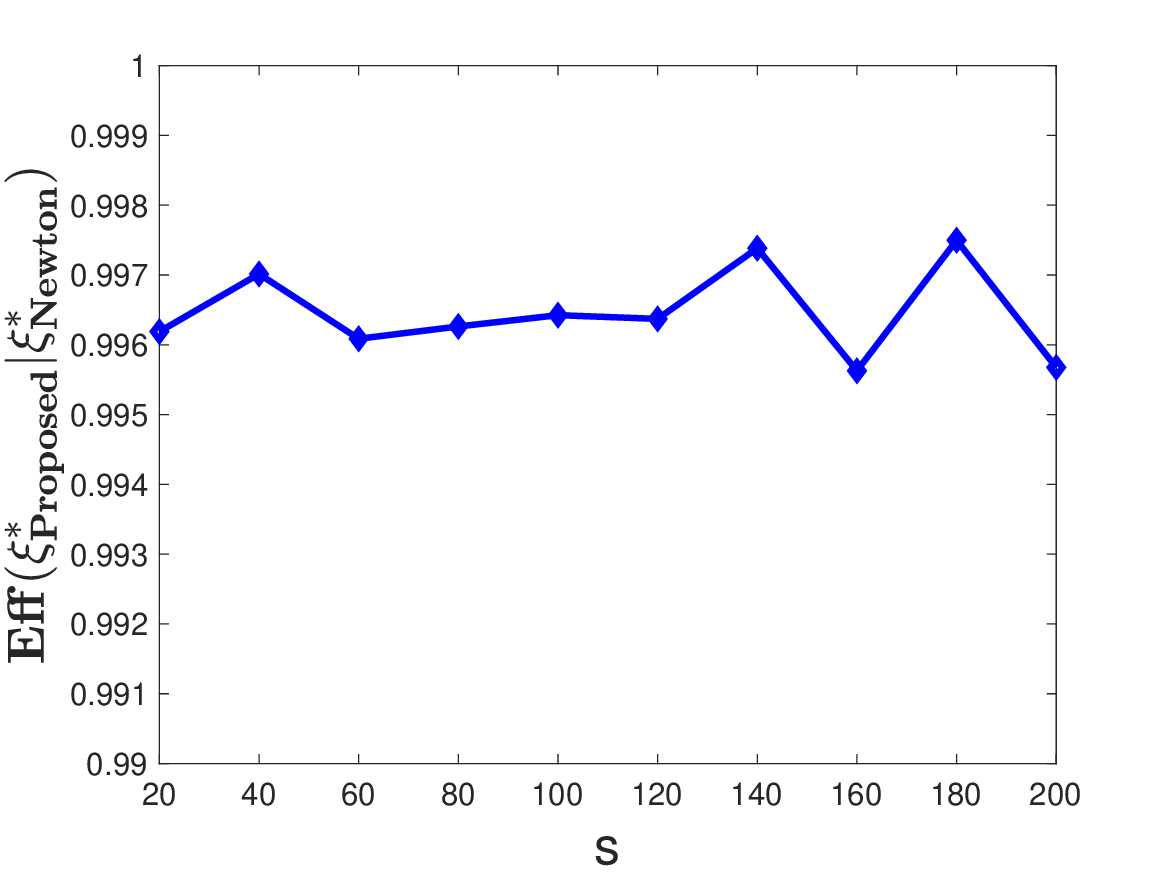}}}
\qquad
\subfloat[$d=2$, case i, Candidate Pool Size $N = (s+1)^2$]
{{\includegraphics[width=6cm]{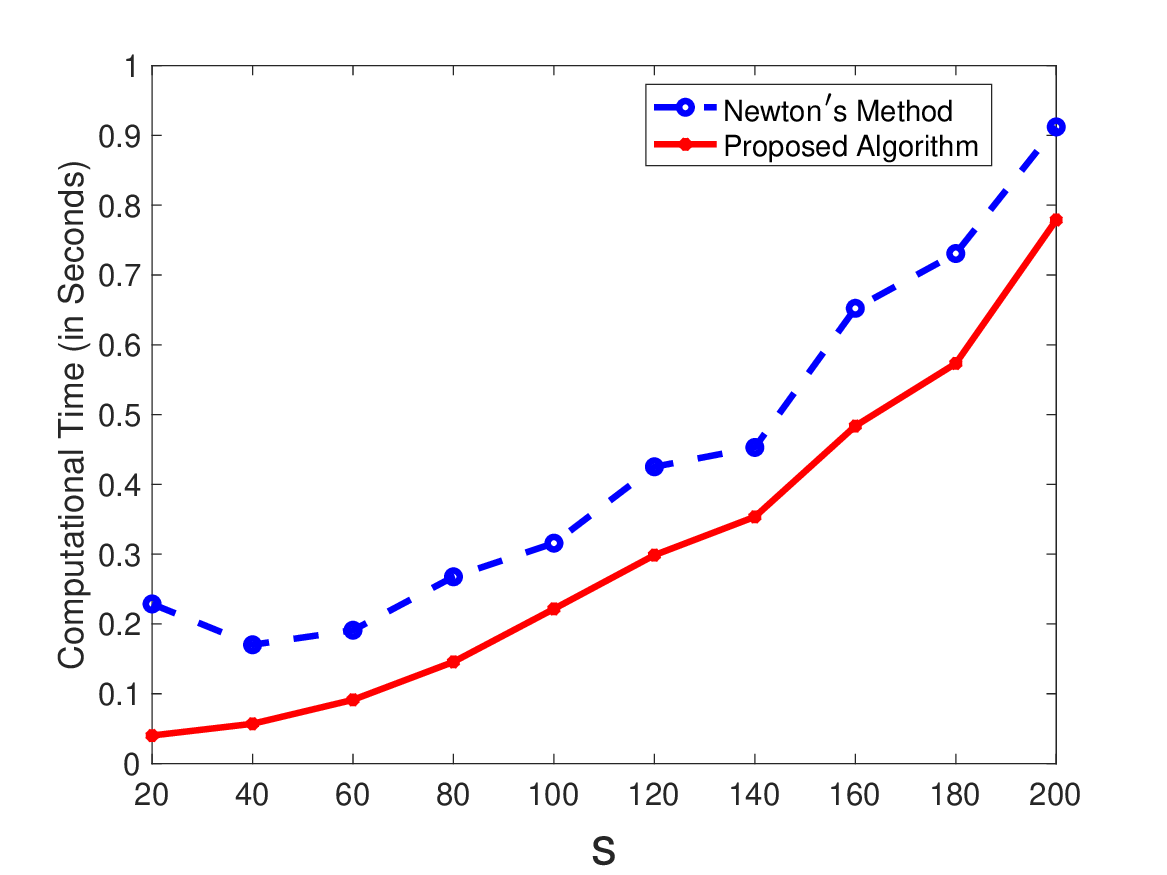}}
\hfill
{\includegraphics[width=6cm]{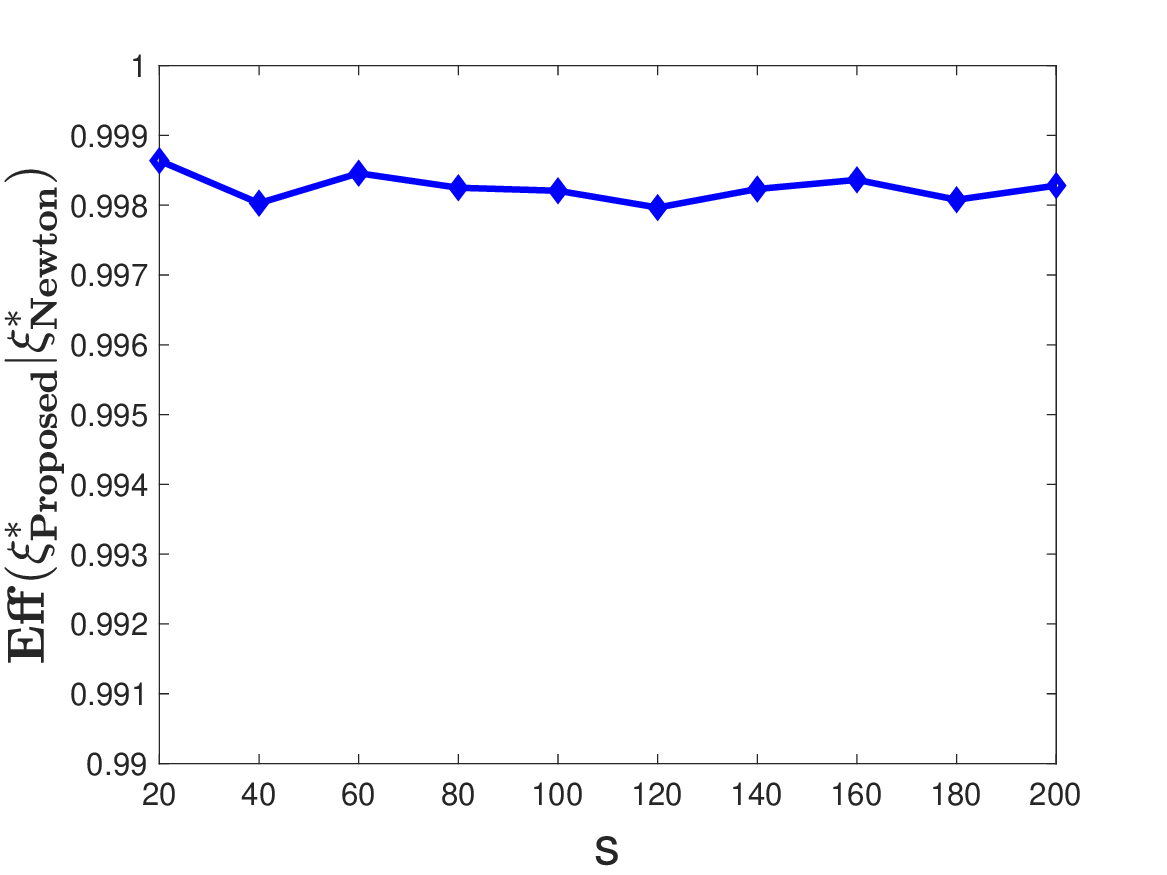}}}
\caption{Computational Time (in Seconds) and Relative Efficiency $\Eff(\xi^*_{\text{Proposed}}|\xi^*_{\text{Newton}})$ for EI-Optimal Designs with $d$-Dimensional Logistic Regression Model}
\label{fig:multiI-eg2a}
\end{figure}
\begin{figure}[htpb]
\centering
\subfloat[$d=2$, case ii, Candidate Pool Size $N = (s+1)^2$]
{{\includegraphics[width=6cm]{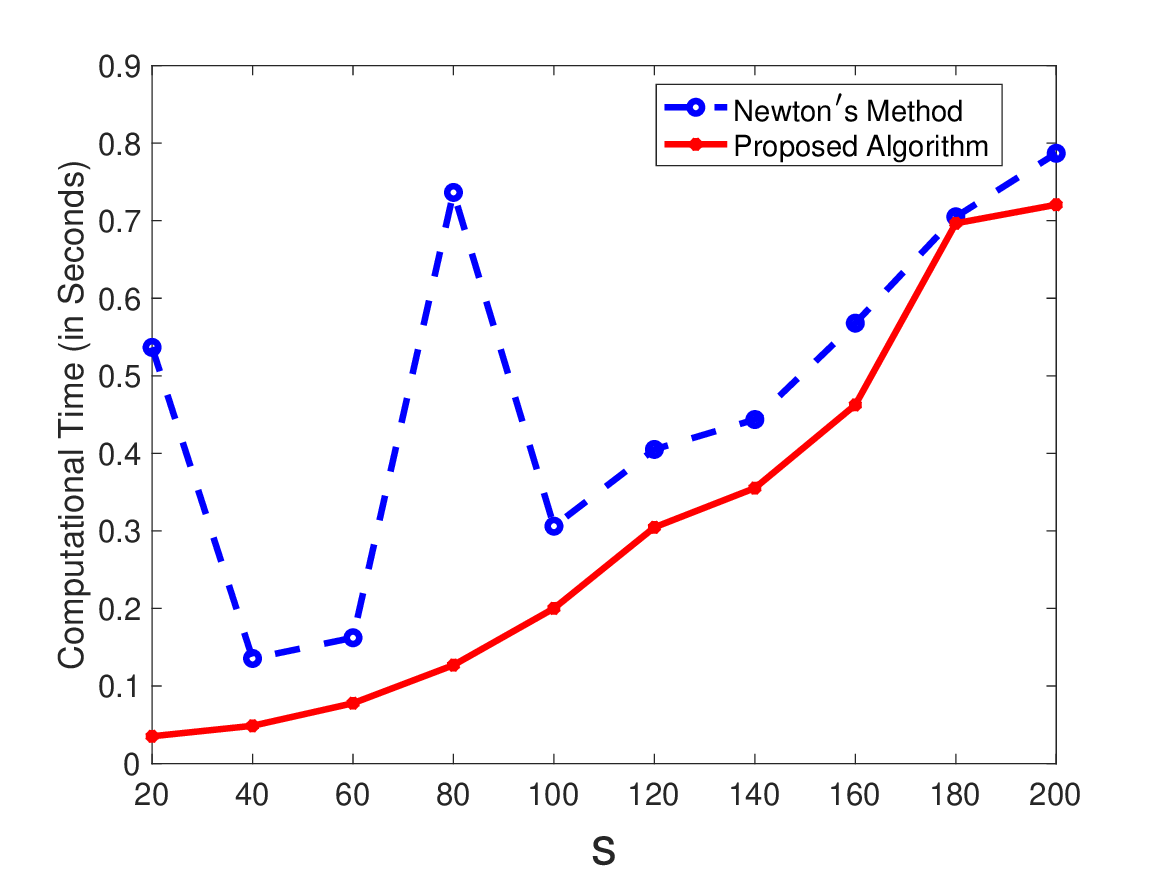}}
\hfill
{\includegraphics[width=6cm]{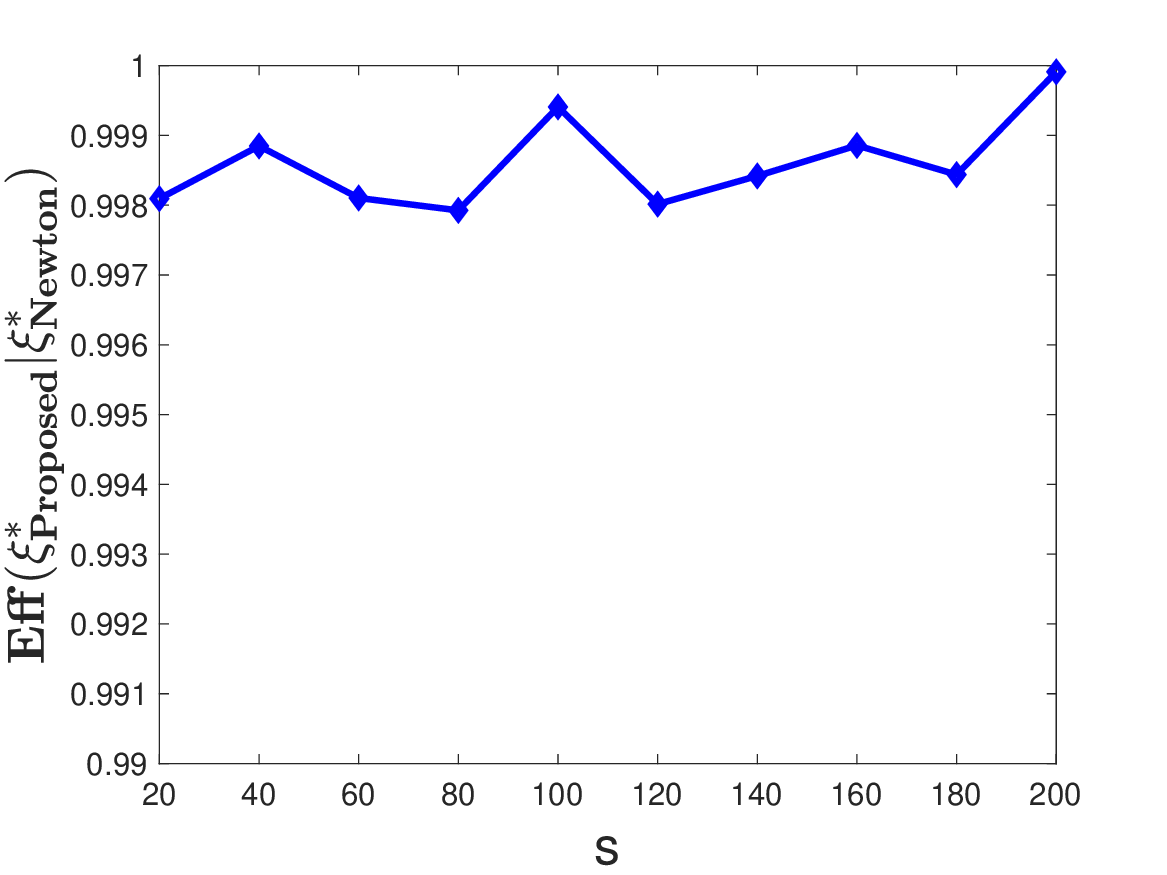}}}
\qquad
\subfloat[$d=3$, case i, Candidate Pool Size $N = (s+1)^3$]
{{\includegraphics[width=6cm]{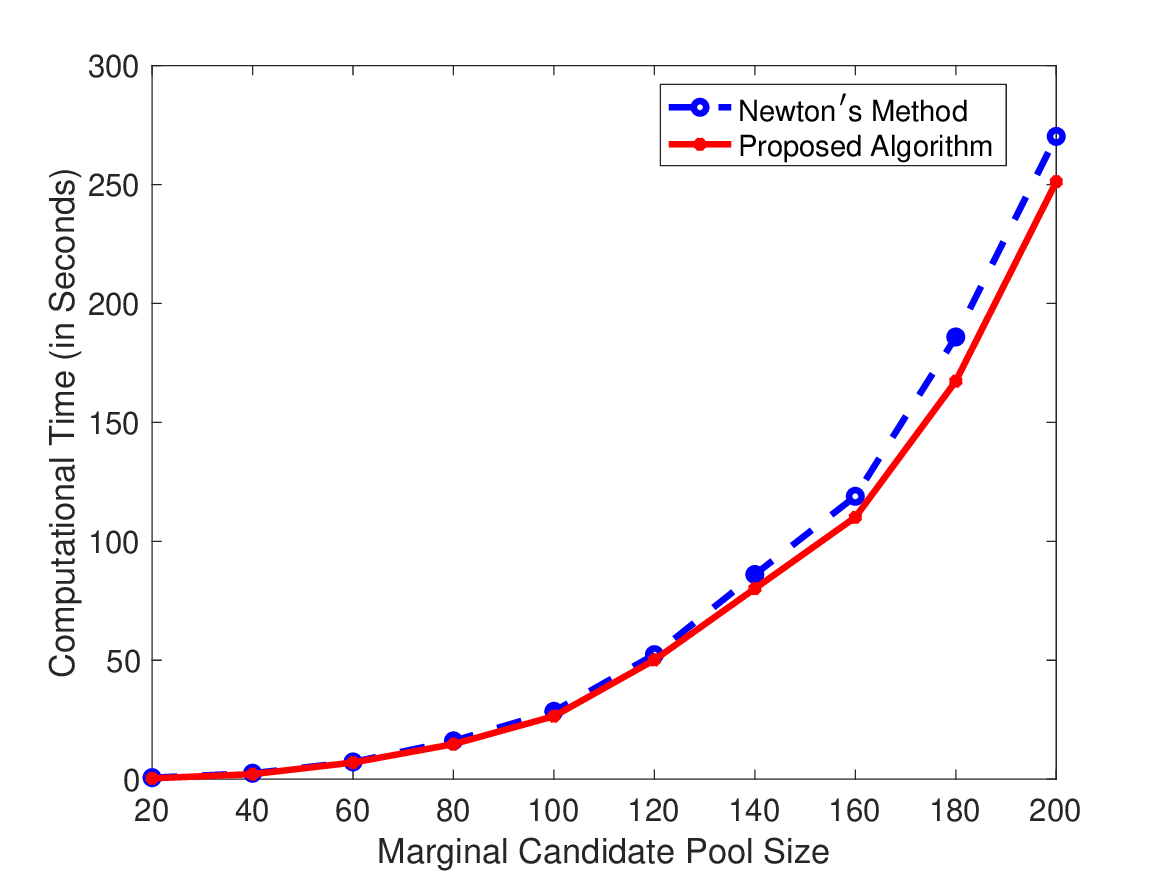}}
\hfill
{\includegraphics[width=6cm]{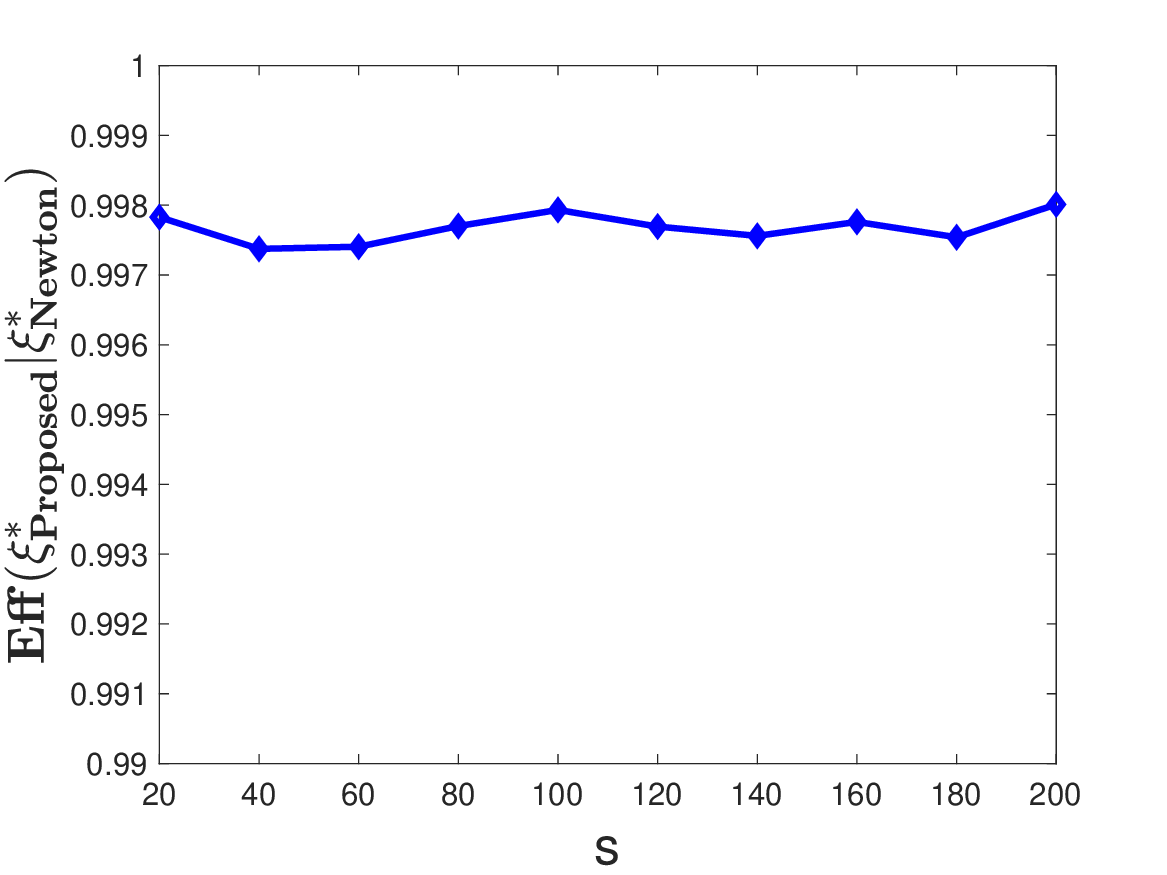}}}
\caption{Computational Time (in Seconds) and Relative Efficiency $\Eff(\xi^*_{\text{Proposed}}|\xi^*_{\text{Newton}})$ for EI-Optimal Designs with $d$-Dimensional Logistic Regression Model (Continued)}
\label{fig:multiI-eg2b}
\end{figure}
For $d=3$ of case ii,
there are situations that Newton's method does not converge in 100 iterations, i.e., the efficiency lower bound does not reach 0.99 in 100 iterations.
In contrast, the proposed method is much more stable and meets the efficiency criterion in all scenarios.
Thus, for $d=3$, we only report the computational time comparison for case i where both algorithms converge within 100 iterations.
Clearly, the proposed method is quite computationally efficient.
When the candidate pool size gets larger, Figure \ref{fig:multiI-eg2b}(c) shows that the computational times of two algorithms become close with the proposed algorithm to be slightly faster.
Note that both algorithms are Wynn-Fedorov type algorithms, and the computational time is dominated by evaluating directional derivative over the large candidate pool.
As the candidate pool gets large, the computational advantage of the proposed algorithm would become less pronounced.
However, it is worth remarking that even with similar computational times,
the proposed algorithm still has other advantages such as guaranteed convergence and simple implementation.

\item
\emph{Choice of convergence rate $\delta$.}

As discussed in Section \ref{subsec:mulalg}, the convergence parameter $\delta\in(0,1)$ in \textbf{Algorithm 1} is user defined and we choose $\delta=0.5$ for EI-optimality. We further explore the computational time of the proposed algorithm with different choices of $\delta$ ranging from 0.05 to 0.95 in $d=2$ of case i. The results are shown in Figure \ref{fig:multiI-eg2d}. It can be seen that the computational time of the proposed algorithm is quite robust to the choice $\delta$ as long as $\delta$ is not too small.
\begin{figure}[htpb]
\centering
{\includegraphics[width=7cm]{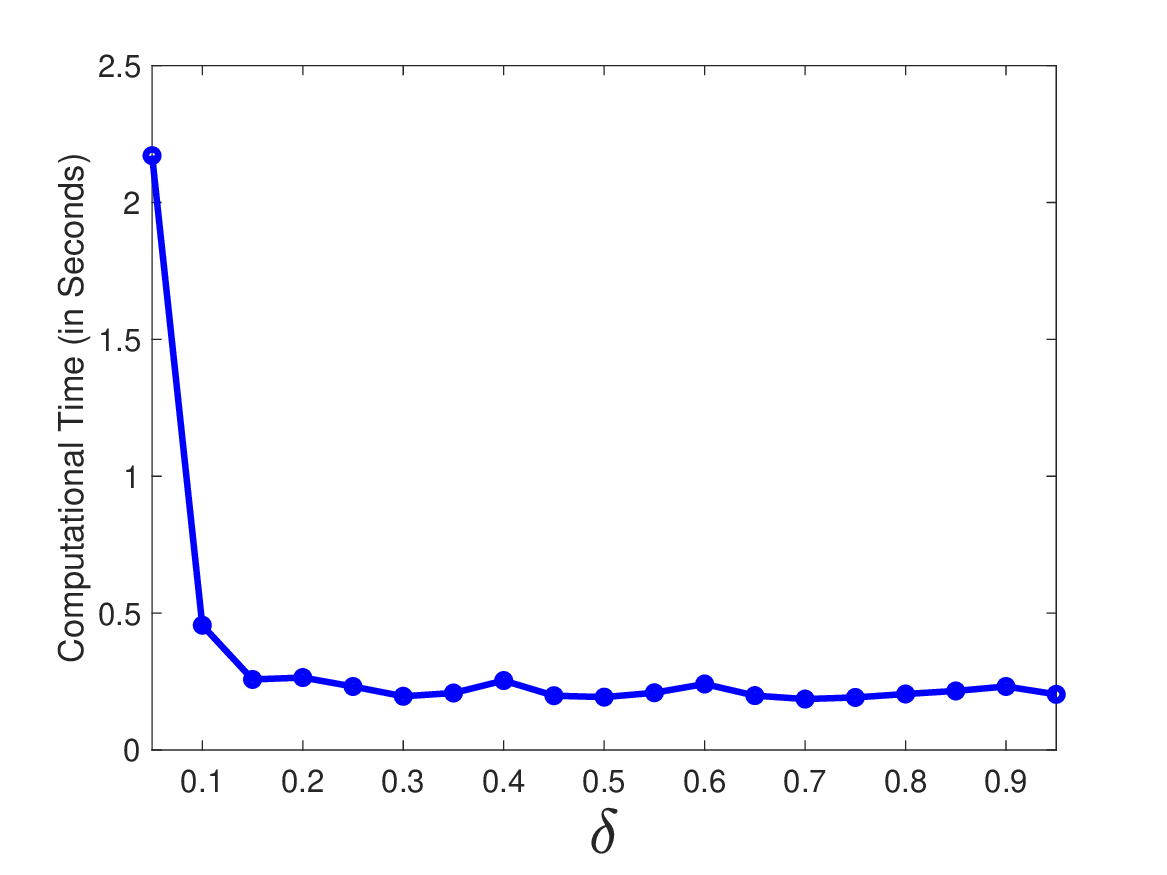}}
\caption{Computational Time (in Seconds) of the Proposed Algorithm with Different Convergence Parameter $\delta$}
\label{fig:multiI-eg2d}
\end{figure}

\item
\emph{Choice of candidate pool size.}

Since a discrete candidate pool $\mathcal{C}$ is required in \textbf{Algorithm 2}, discretization is needed when the experimental region is continuous.
When using grid as the candidate pool, its size can increase exponentially as the dimension $d$ and the number of grid points in each dimension increases,
and correspondingly the computational time of the proposed algorithm also increases dramatically.
We explore how the size of candidate pool would influence the efficiency of the achieved design via $d=2$ case i.
The EI-optimal designs are constructed when the candidate pool size varies from $N=10^2$ to $N=1180^2$.
The efficiencies of the optimal designs relative to the optimal design constructed using $N=1180^2$ are reported in Figure 5.
It is not surprising that the relative efficiency does not increase monotonically as the candidate pool size increases.
One explanation is that the quality of the obtained design depends on whether the candidate pool contains the support points of the optimal design with continuous experimental region, but not the size of the candidate pool.
Although the candidate pool size ranges widely from $N=10^2$ to $N=1180^2$, the relative efficiencies are all very close to 1 (higher than 0.999), which indicates that a very fine grid candidate pool may not be necessary to construct an EI-optimal design.
Based on this observation, we would suggest using Sobol sequence as candidate design points for our proposed method when the dimension of design region $d$ is large.
\begin{figure}[htpb]
\centering
{\includegraphics[width=7cm]{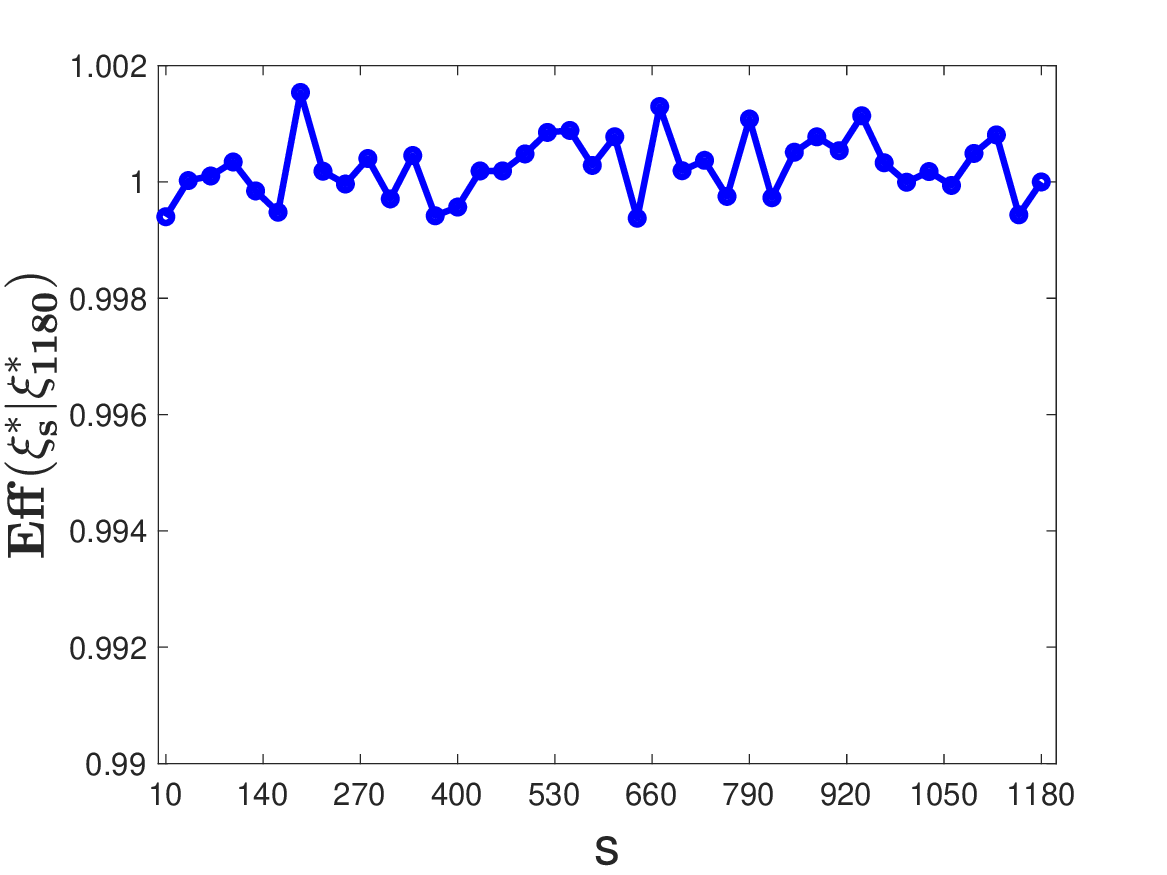}}
\caption{Relative Efficiency of EI-Optimal Designs with Different Candidate Pool Sizes $N = (s+1)^2$}
\label{fig:multiI-eg2c}
\end{figure}
\end{itemize}

%

\emph{Example 3}. This example considers the Poisson regression models, which is a popular statistical tool to model count data in many applications (e.g., \citeO{gart1964analysis}; \citeO{el1987application}).
Like in \emph{Example 2}, we assume that the domain of $d-$dimensional explanatory variable $\vx = [x_1,...,x_d]$ is standardized to be a unit hypercube $\Omega = [-1,1]^d$.
With $l$ basis functions $\vg(\vx) = [g_1(\vx),...g_l(\vx)]^T$ and regression coefficients $\vbeta=[\beta_1,...,\beta_l]^T$, the Poisson regression model with count response $Y \in \{0, 1,...,\}$ has the mean function:
$$\mu(\vx) = \E[Y(\vx)] = e^{\vbeta^T\vg(\vx)}.$$
Similar as \emph{Example 2}, we consider linear predictors, i.e., $\vg(\vx) = (1, \vx^T)^T = (1, x_1,...,x_d)^T$.
Given a design $\xi = \left \{\begin{array}{ccc}
\vx_1,&...,&\vx_n\\
\lambda_1,&...,&\lambda_n
\end{array}\right \}$, the classical I-optimality criterion with $F_{\IMSE} = F_{\unif}$ on $[-1,1]^d$ is
$$\EI(\xi,\vbeta,F_{\unif}) = \tr\left(\mA \mI(\xi)^{-1}\right),$$
with $\mA = \int\frac{1}{2^d}(1, \vx^T)^T(1, \vx^T)e^{2\left(\beta_0+ \vbeta^{T} \vx\right)}\dif \vx$ and $\mI(\xi) = \sum\limits_{i=1}^n \lambda_i e^{\beta_0+ \vbeta^{T} \vx_{i}}(1, \vx_{i}^T)^T(1, \vx_{i}^T)$.
We consider the same scenarios as in \emph{Example 2} for Poisson regression model and compute the corresponding classical I-optimal designs:
\begin{enumerate}[(a)~]
\item
$d = 1$, $\vbeta = [0.2, 1.6]^T$
\item
$d=2$, $\vbeta = [2, 1, -2.5]^T$.
\item
$d=3$, $\vbeta = [0.5, 1.6, -2.5, 2]^T$.
\end{enumerate}
Here we also use the grids to form the set of candidate design points. The computation comparison is shown in Figure \ref{fig:multiI-eg3}.
Similar to the results in \emph{Example 2}, when the candidate pool size is moderate, the proposed algorithm outperforms the Newton's method by a large margin regarding the computational efficiency, and also preserves a high design efficiency.

\begin{figure}[htpb]
\centering
\subfloat[$d=1$,  Candidate Pool Size $N = s+1$]
{{\includegraphics[width=6cm]{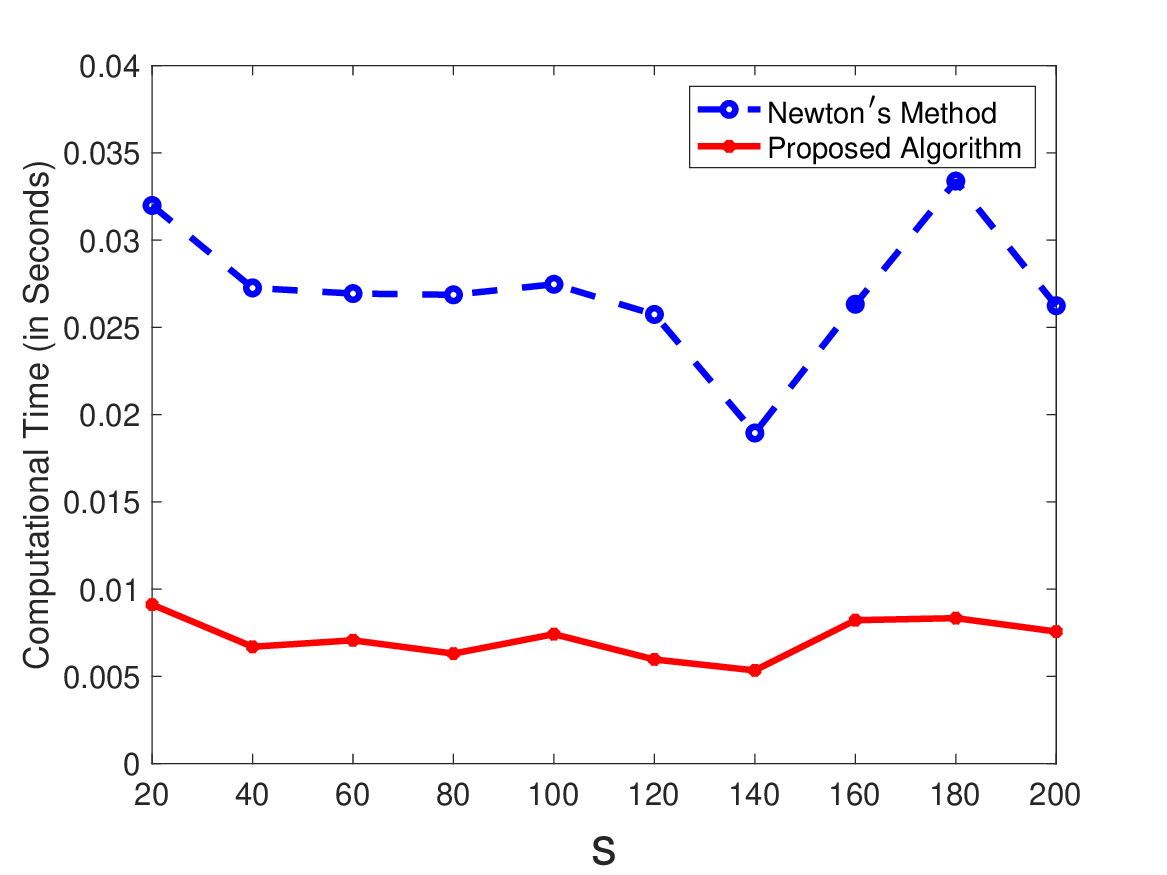}}
\hfill
{\includegraphics[width=6cm]{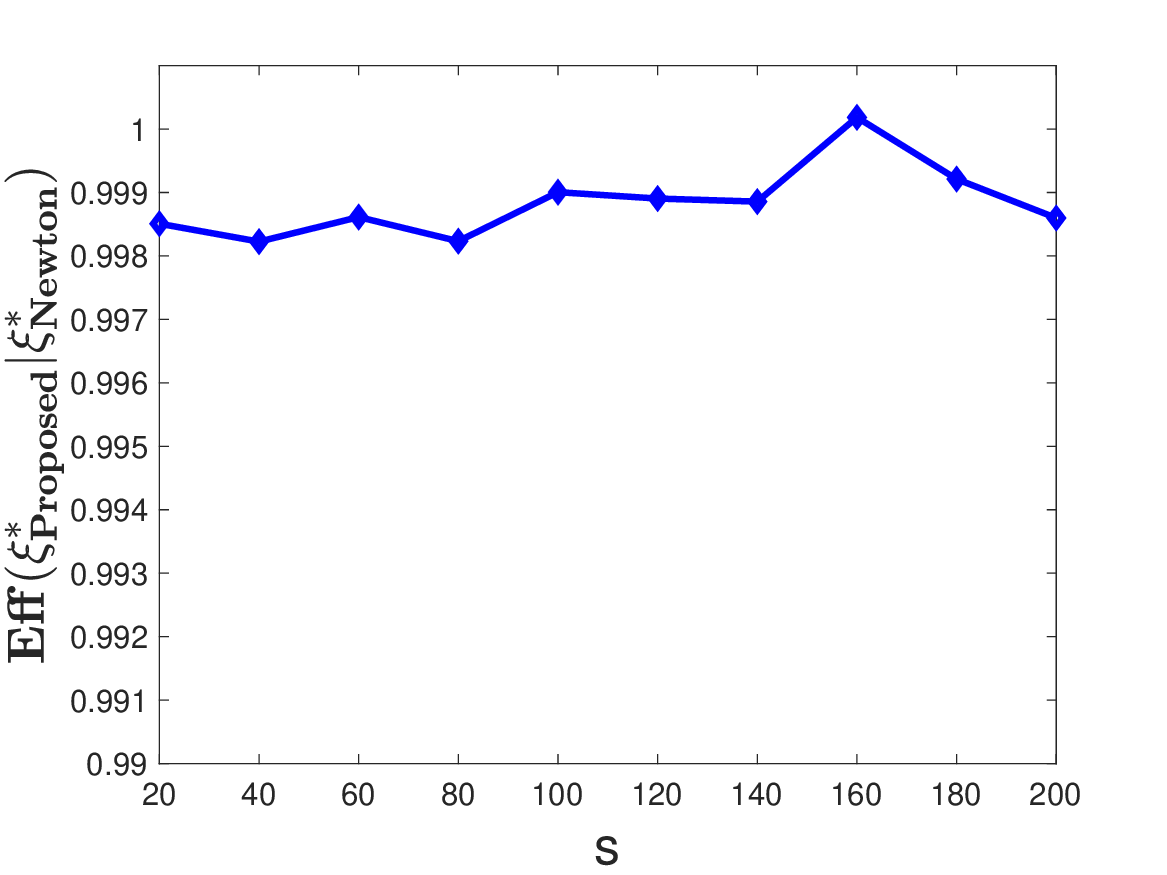}}}
\qquad
\subfloat[$d=2$, Candidate Pool Size $N = (s+1)^2$]
{{\includegraphics[width=6cm]{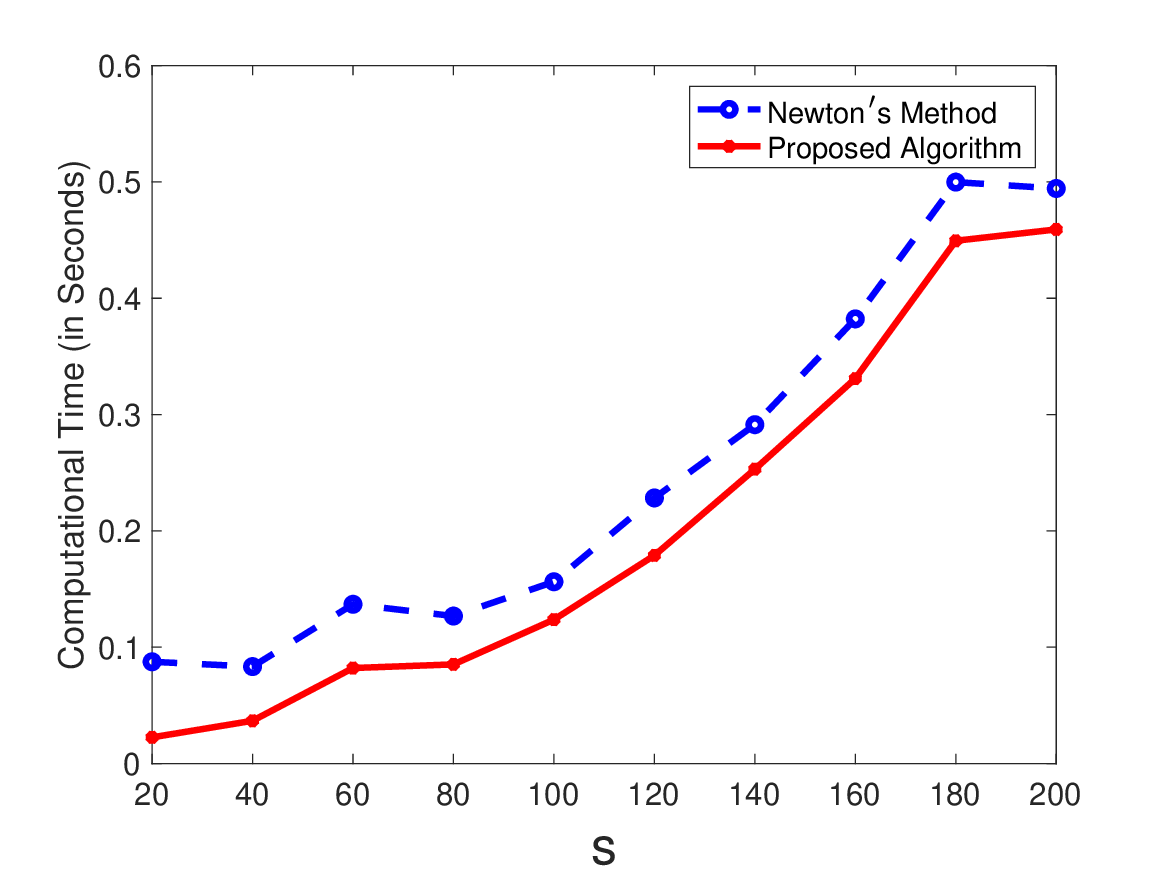}}
\hfill
{\includegraphics[width=6cm]{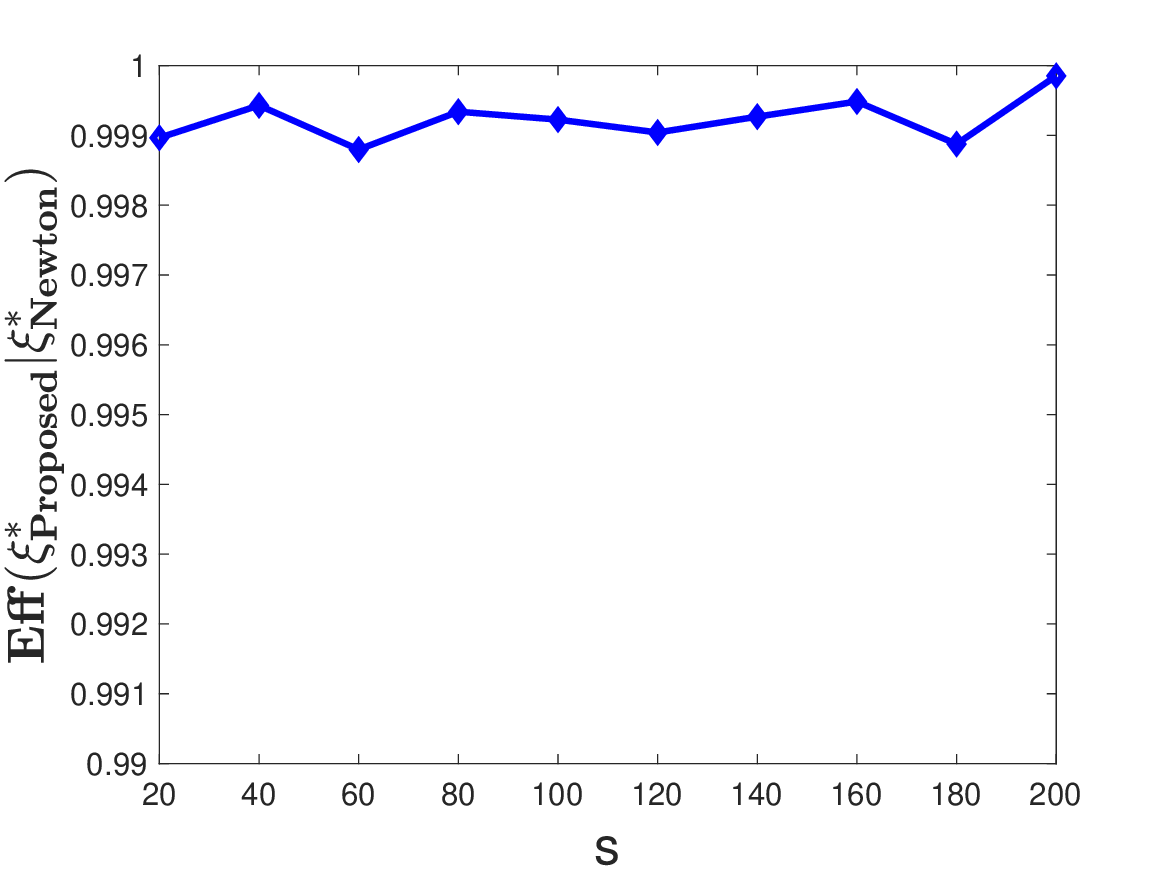}}}
\qquad
\subfloat[$d=3$, Candidate Pool Size $N = (s+1)^3$]
{{\includegraphics[width=6cm]{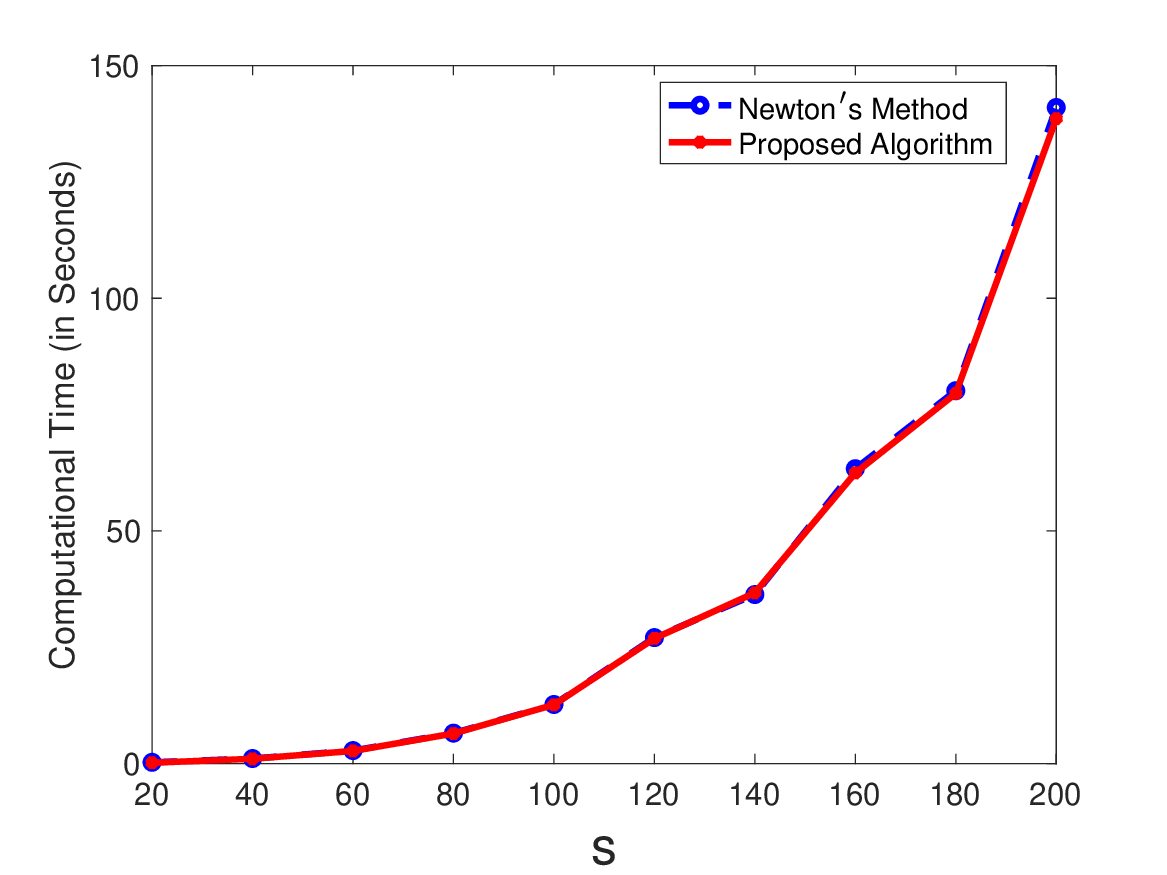}}
\hfill
{\includegraphics[width=6cm]{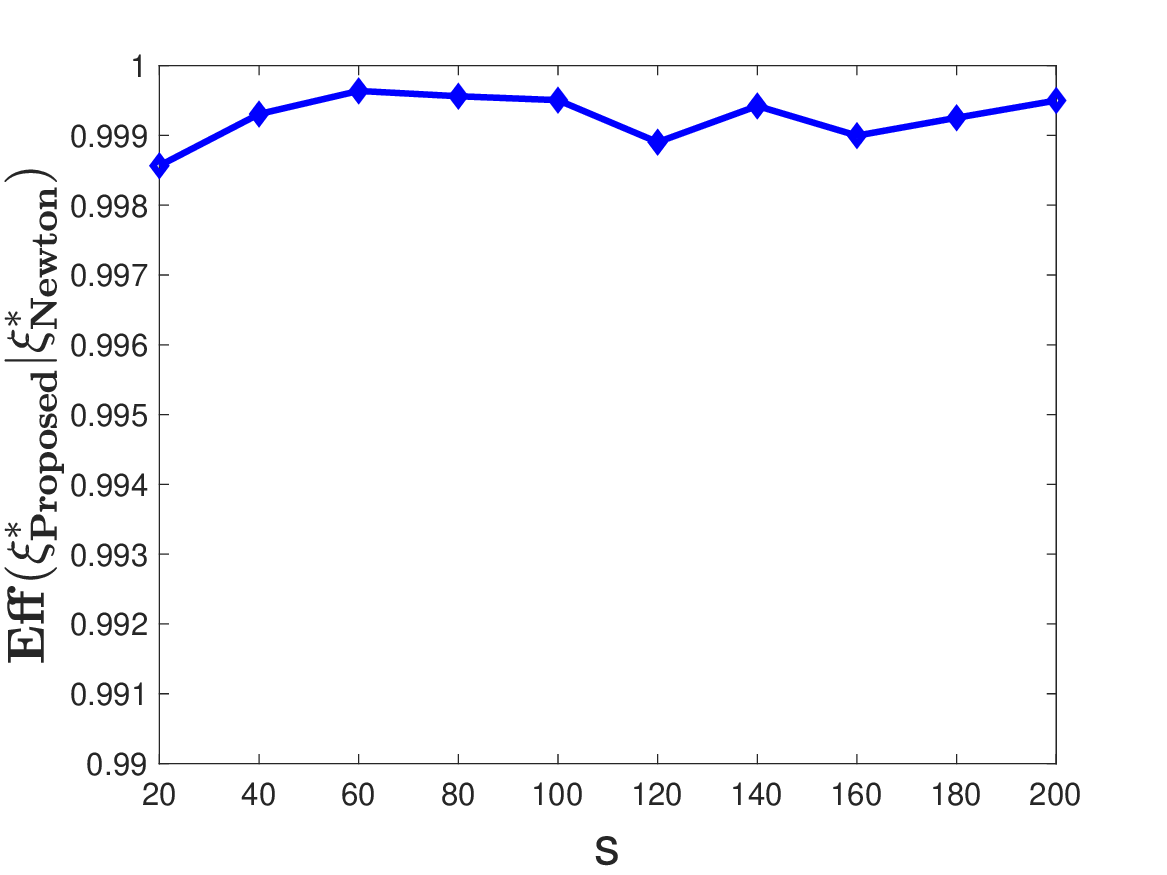}}}
\caption{Computational Time (in Seconds) and Relative Efficiency $\Eff(\xi^*_{\text{Proposed}}|\xi^*_{\text{Newton}})$ for EI-Optimal Designs with $d$-Dimensional Poisson Regression Model}
\label{fig:multiI-eg3}
\end{figure}

\emph{Example 4}. In this example, we would like to provide some comparison between EI-optimal design and other parameter estimation oriented D- and A-optimal designs through a real-world potato packing example in   \citeI{woods2006designs}. The experiment contains $d=3$ quantitative variables - vitamin concentration in the prepackaging dip and levels of two gases in the packing atmosphere.
The response is binary representing the presence or absence of liquid in the pack after 7 days. All explanatory variables are standardized, and the experimental region $\Omega = [-1,1]^3$.
We consider one of the candidate models used in the real study: a logit model with quadratic basis.
The estimates of regression coefficients from preliminary study are given in Table \ref{tab:PotatoPackModel}. $F_{\IMSE} = F_{\unif}$ on $\Omega$ is used to define EI-optimality.

\begin{table}[htbp]
  \centering
  \caption{Logit Model of Potato Packing Example}
    \begin{tabular}{lccccccc}
    \hline
    Term  & Intercept &  $x_2$ & $x_3$ &  $x_2x_3$ &$x_1^2$   & $x_2^2$  &  $x_3^2$\\
    \hline
    Coefficient & -2.93 &  -0.52 & -0.79 & -0.66 & 0.94 & 0.79 & 1.82\\
    \hline
    \end{tabular}%
  \label{tab:PotatoPackModel}%
\end{table}%
Figure \ref{fig:multiI-eg4} shows the support points of D-, A- and EI-Optimal Designs. The relative EI-efficiency of D- and A- optimal designs relative to EI-optimal design are $80.64\%$ and $85.44\%$, respectively. To some extent, this demonstrates the importance and necessity of accurate estimate of regression coefficients to make precise prediction,
but D- or A-optimality may not be the most appropriate criterion to use when the prediction is of interest. The relative A- and D-efficiency of the EI-optimal design relative to the corresponding A- and D-optimal designs are $81.12\%$ and $88.76\%$, respectively. Thus, it is important to choose the appropriate optimality criterion based on the purpose of the experiment.

\begin{figure}[htpb]
\centering
{\includegraphics[width=10cm]{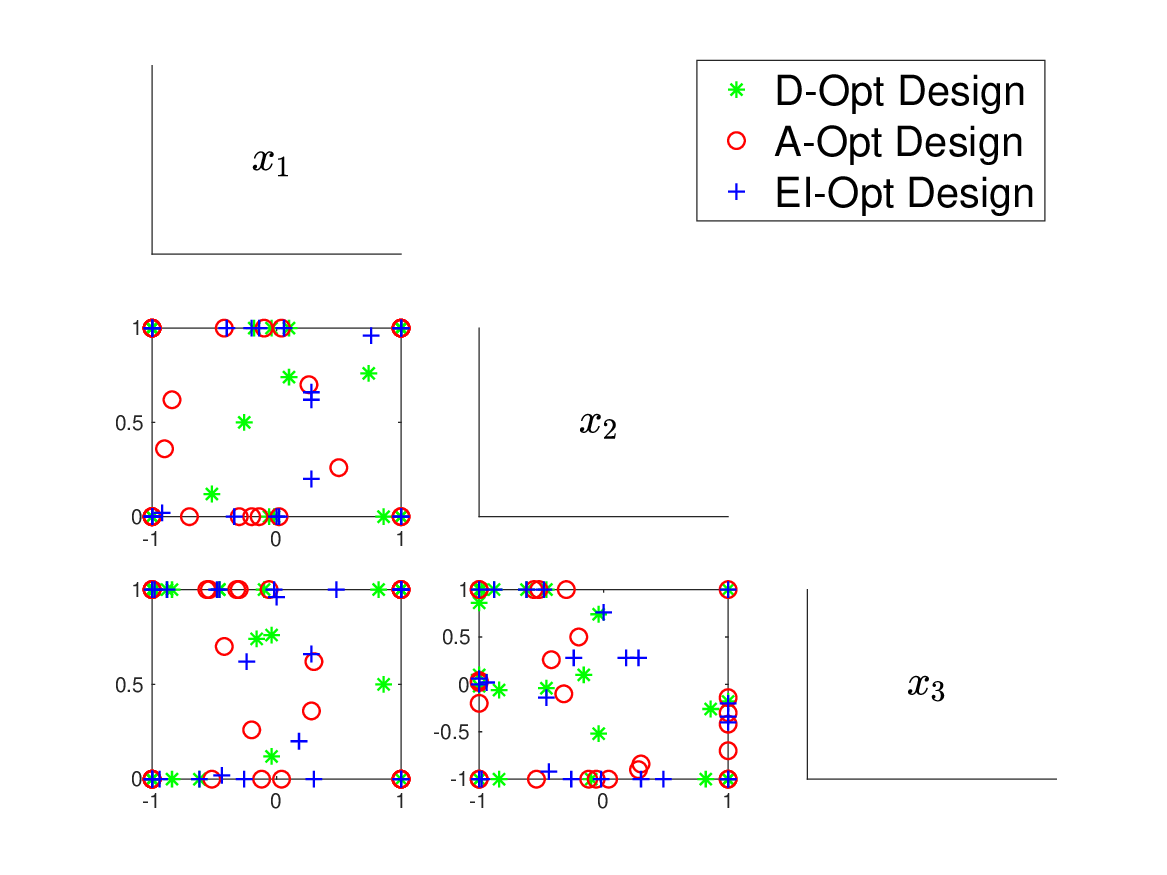}}
\caption{Support Points of D-, A- and EI-Optimal Designs}
\label{fig:multiI-eg4}
\end{figure}

\section{Discussion}\label{sec: discussion}
In this work, we study a general and flexible prediction-oriented criterion EI-optimality for GLMs and advance an efficient sequential algorithm with sound convergence properties for constructing EI-optimal designs.
Through a deep investigation on the theoretical properties of the EI-optimality, we have obtained an insightful understanding for the proposed algorithm on how to sequentially choose the support point and update the weights of support points of the design.
The computational advantages of the proposed algorithm over the Newton's method is demonstrated through numerical examples with moderate size candidate pools for various types of GLMs.
Moreover, all the computations in the proposed algorithm are explicit and simple to implement.
The proposed method exemplifies a good case on the integration of theory and computation to advance the development of new statistical methodology.

It is worth remarking that the proposed sequential algorithm (Algorithm 2) is not restricted to the EI-optimality for constructing optimal designs.
The proposed sequential algorithm can be extended to other optimality criteria when the directional derivative  $\phi(\xi',\xi)$ of optimality criterion in \eqref{eqn:dirder} exists. Although the convergence property of the proposed algorithm (Theorem \ref{thm:cong-algo2}) in Section \ref{section: Algorithm 2} is stated in the context of $\EI$-optimality, a proof on the convergence of the proposed algorithm for general $\Phi_p$-optimality is provided in the appendix. As the $\EI$-optimlity has the same mathematical structure as $\Phi_1$-optimality, the convergence of the algorithm still holds.
Note that this work focuses on the local EI-optimal designs for the generalized linear models, which depends on given regression coefficients.
The parameter dependence problem in the design of GLMs is an important yet challenging issue (\citeO{06khu, woods2006designs}).  For instance, the EI-efficiency of the compromise design proposed by \citeI{woods2006designs} relative to the local EI-optimal design under three model specifications (see \citeO{woods2006designs}) are $49\%$, $80\%$ and $92\%$.
Bayesian or pseudo-Bayesian approaches are generally recognized as an appealing solution to address the parameter dependence issue.
For our proposed method of EI-optimal design, we will further investigate the robust EI-optimal design for the GLMs.
One possibility is to establish a tight upper bound for the integrated mean squared error to relax the dependency on the regression coefficients.
Then we can modify the proposed algorithm in search of optimal designs based on the robust optimality criterion constructed according to the upper bound.
\citeI{hickernell02} developed some theoretical results in a similar direction for linear regression models under model uncertainty, and \citeI{14li} extended the results to linear regression models with gradient information.

We would like to remark that the proposed method may not have great computational efficiency when the candidate pool gets very large.
One direction for tackling this issue is to better take advantage of the information on probability measure in the design region for constructing the design.
For example, one can modify the selection of candidate pool based on the probability measure in the design region.
Another interesting point is that the support points in the optimal design can be outside of the prediction region of interest, which could be a potential limitation of the proposed method.

For future research, it will be interesting to construct EI-optimal designs for the models with both quantitative and qualitative responses (\citeO{deng2015qq}).
Since the GLMs include both the linear regression for continuous response and the logistic regression for binary response,
it would be interesting to study the EI-optimal designs under the consideration of jointly modeling both quantitative and qualitative responses.
Finally, the proposed sequential algorithm is not restricted to find desirable designs for physical experiments.
It can be applied for finding space-filling design in computer experiments (\citeO{deng2015design}).

\bibliographystyle{asa}
\bibliography{GLM}

\section*{Appendix}
{\bf Proof of Lemma \ref{lem:IMSE}}
\begin{proof}
\begin{eqnarray*}
\EI(\xi,\vbeta,F_{\IMSE}) &=& \E\left[\int_{\Omega} \vc(\vx)^T\left(\hat{\vbeta}-\vbeta\right)\left(\hat{\vbeta}-\vbeta\right)^T\vc(\vx)\dif  F_{\IMSE}(\vx)\right]\\
&=& \int_{\Omega} \vc(\vx)^T\E\left[\left(\hat{\vbeta}-\vbeta\right)\left(\hat{\vbeta}-\vbeta\right)^T\right]\vc(\vx)\dif F_{\IMSE}(\vx)\\
&\approx & \int_{\Omega}\tr\left(\vc(\vx)\vc(\vx)^T\mI\left(\xi,\vbeta\right)^{-1}\right)\dif F_{\IMSE}(\vx)\\
&=& \tr\left[\left(\int_{\Omega}\vc(\vx)\vc(\vx)^T\dif F_{\IMSE}(\vx)\right) \mI(\xi,\vbeta)^{-1}\right]\\
&=& \tr\left(\mA \mI(\xi,\vbeta)^{-1}\right).
\end{eqnarray*}
The approximation is provided by the fact that the estimated regression coefficient $\hat{\vbeta}$ follows $N(\vbeta,\mI(\xi, \vbeta)^{-1})$ asymptotically.
\end{proof}

\noindent {\bf Proof of Lemma \ref{lem:ConWeight}}
\begin{proof}
For matrix $\frac{\partial \vf(\vbeta)}{\partial \vbeta^T}$ with size $q\times l$ and matrix $\mI(\xi^{\vlambda})^{-1}\left(\frac{\partial \vf(\vbeta)}{\partial \vbeta^T}\right)^T$ with size $l\times q$, and $q\leq l$, the eigenvalues of $\frac{\partial \vf(\vbeta)}{\partial \vbeta^T}\mI(\xi^{\vlambda})^{-1}\left(\frac{\partial \vf(\vbeta)}{\partial \vbeta^T}\right)^T$  are the eigenvalues of  $\mI(\xi^{\vlambda})^{-1}\left(\frac{\partial \vf(\vbeta)}{\partial \vbeta^T}\right)^T\frac{\partial \vf(\vbeta)}{\partial \vbeta^T}$, with extra eigenvalues being 0 if there is any. Thus, the eigenvalues of$\left[\mI(\xi^{\vlambda})^{-1}\left(\frac{\partial \vf(\vbeta)}{\partial \vbeta^T}\right)^T\frac{\partial \vf(\vbeta)}{\partial \vbeta^T}\right]^p$  are the eigenvalues of $\left[\frac{\partial \vf(\vbeta)}{\partial \vbeta^T}\mI(\xi^{\vlambda})^{-1}\left(\frac{\partial \vf(\vbeta)}{\partial \vbeta^T}\right)^T\right]^p$  with extra eigenvalues being 0. Thus, we could rewrite
\small
\begin{eqnarray}\label{eqn:conv_weight}
\tr\left[\frac{\partial \vf(\vbeta)}{\partial \vbeta^T}\mI(\xi^{\vlambda})^{-1}\left(\frac{\partial \vf(\vbeta)}{\partial \vbeta^T}\right)^T\right]^p = \tr\left[\mI(\xi^{\vlambda})^{-1}\left(\frac{\partial \vf(\vbeta)}{\partial \vbeta^T}\right)^T\left(\frac{\partial \vf(\vbeta)}{\partial \vbeta^T}\right)\right]^p = \tr\left[\mI^{-1}(\xi^{\vlambda})\mB\right]^{p},
\end{eqnarray}
\normalsize
where $\mB = \left(\frac{\partial \vf(\vbeta)}{\partial \vbeta^T}\right)^T\left(\frac{\partial \vf(\vbeta)}{\partial \vbeta^T}\right)$ which is positive semidefinite with size $l\times l$ and rank $q\leq l$. Then, by Smith decomposition, there exists a nonsingular matrix $\mS$ of size $l\times l$ such that
$$
\mB = \mS^T\left(\begin{array}{cc}
\boldsymbol{I}_q & \boldsymbol{0}_{q\times (l-q)}\\
\boldsymbol{0}_{(l-q)\times q} & \boldsymbol{0}_{(l-q)\times (l-q)}
\end{array}\right)\mS = \mS^T
\left(\begin{array}{c}
\boldsymbol{I}_q \\
\boldsymbol{0}_{(l-q)\times q}
\end{array}\right)
\left(\boldsymbol{I}_q\,\,\,\,\, \boldsymbol{0}_{q\times (l-q)}\right)\mS,
$$
where $\boldsymbol{I}_q$ is the identity matrix of size $q\times q$.
Thus, Equation (\ref{eqn:conv_weight}) could be written as
\begin{eqnarray}\label{eqn:conv_weight2}
&&\tr\left[\frac{\partial \vf(\vbeta)}{\partial \vbeta^T}\mI(\xi^{\vlambda})^{-1}\left(\frac{\partial \vf(\vbeta)}{\partial \vbeta^T}\right)^T\right]^p\nonumber\\
 &=& \tr\left[\mI(\xi^{\vlambda})^{-1}\mS^T
\left(\begin{array}{c}
\boldsymbol{I}_q \\
\boldsymbol{0}_{(l-q)\times q}
\end{array}\right)
\left(\boldsymbol{I}_q\,\,\,\,\, \boldsymbol{0}_{q\times (l-q)}\right)\mS\right]^{p}\nonumber\\
& = & \tr\left\{\left[\left(\boldsymbol{I}_q\,\,\,\,\, \boldsymbol{0}_{q\times (l-q)}\right)\left((\mS^T)^{-1}\mI(\xi^{\vlambda})\mS^{-1}\right)^{-1}
\left(\begin{array}{c}
\boldsymbol{I}_q \\
\boldsymbol{0}_{(l-q)\times q}
\end{array}\right)
\right]^{-1}\right\}^{-p}.
\end{eqnarray}
Consider two weights $\vlambda_1$ and $\vlambda_2$, and define $\vlambda_3 = (1-a)\vlambda_1+a\vlambda_2$. With $0\leq a\leq1$, $\vlambda_3$ is still a feasible weight vector, and
\begin{eqnarray}\label{eqn:lin_weight}
(\mS^T)^{-1}\mI(\xi^{\vlambda_3})\mS^{-1} = (1-a)(\mS^T)^{-1}\mI(\xi^{\vlambda_1})\mS^{-1}+a(\mS^T)^{-1}\mI(\xi^{\vlambda_2})\mS^{-1}.
\end{eqnarray}
By the theorem in section 3.13 of \citeI{93puk},
\begin{eqnarray}\label{eq:confunc}
\left[\left(\boldsymbol{I}_q\,\,\,\,\, \boldsymbol{0}_{q\times (l-q)}\right)\left((\mS^T)^{-1}\mI(\xi^{\vlambda})\mS^{-1}\right)^{-1}\left(
\boldsymbol{I}_q\,\,\,\,\, \boldsymbol{0}_{(l-q)\times q}\right)^T
\right]^{-1}
\end{eqnarray}
is matrix concave in $(\mS^T)^{-1}\mI(\xi^{\vlambda})\mS^{-1}$. Thus, by linearity of matrix $(\mS^T)^{-1}\mI(\xi^{\vlambda})\mS^{-1}$ in weight in \eqref{eqn:lin_weight}, $ \left[\left(\boldsymbol{I}_q\,\,\,\,\, \boldsymbol{0}_{q\times (l-q)}\right)\left((\mS^T)^{-1}\mI(\xi^{\vlambda})\mS^{-1}\right)^{-1}\left(
\boldsymbol{I}_q\,\,\,\,\, \boldsymbol{0}_{(l-q)\times q}\right)^T
\right]^{-1}$ is also concave in weight vector $\vlambda$.

Since $\tr(\mC^{-p})^{1/p}$ is nonincreasing and convex for any positive semidefinite matrix $\mC$ (\citeO{fedorov1997model}, see page 22), together with concavity of \eqref{eq:confunc}, the composite function
$$\left[\tr\left\{\left[\left(\boldsymbol{I}_q\,\,\,\,\, \boldsymbol{0}_{q\times (l-q)}\right)\left((\mS^T)^{-1}\mI(\xi^{\vlambda})\mS^{-1}\right)^{-1}\left(
\boldsymbol{I}_q\,\,\,\,\, \boldsymbol{0}_{(l-q)\times q}\right)^T
\right]^{-1}\right\}^{-p}\right]^{1/p}$$
is a convex function of weight vector $\vlambda$ (\citeO{09ber}, see page 480).
As a result, by Equation \eqref{eqn:conv_weight2}, $\Phi_p(\xi^{\vlambda})$ is convex in weight vector $\vlambda$.
\end{proof}

\noindent {\bf Proof of Lemma \ref{lem:DerWeight}}

\begin{proof}
Given $\tilde{\vlambda} = (1-\alpha)\vlambda+\alpha\Delta\vlambda$, we have $\mI(\xi^{\tilde{\vlambda}}) = (1-\alpha)\mI(\xi^{\vlambda})+\alpha\mI(\xi^{\Delta\vlambda})$. We still use Equation (\ref{eqn:conv_weight}) to rewrite $\Phi_p(\xi^{\vlambda})$ as
$$\Phi_p(\xi^{\vlambda}) = \left(q^{-1}\tr\left[\mI(\xi^{\vlambda})^{-1}\mB\right]^{p}\right)^{1/p},$$
where $\mB = \left(\frac{\partial \vf(\vbeta)}{\partial \vbeta^T}\right)^T\left(\frac{\partial \vf(\vbeta)}{\partial \vbeta^T}\right)$ .

For any positive semidefinite matrix $\mC$ as a function of $\alpha$, the derivative of its inverse $\mC^{-1}$ can be calculated as $\frac{\partial \mC^{-1}}{\partial \alpha} = -\mC^{-1}\frac{\partial \mC}{\partial \alpha}\mC^{-1}$ (\citeO{09ber}). So, the derivative of $\mI(\xi^{\tilde{\vlambda}})^{-1}\mB$ with respect to $\alpha$ can be expressed as,
\begin{equation*}\label{eqn:derder}
\frac{\partial\left[\mI(\xi^{\tilde{\vlambda}})^{-1}\mB\right]}{\partial\alpha} = \frac{\partial\mI(\xi^{\tilde{\vlambda}})^{-1})}{\partial\alpha}\mB = -\mI(\xi^{\tilde{\vlambda}})^{-1}[\mI(\xi^{\Delta\vlambda})-\mI(\xi^{\vlambda})]\mI(\xi^{\tilde{\vlambda}})^{-1}\mB.
\end{equation*}
Then, the directional derivative of $\Phi_p(\xi^{\vlambda}) $ is
\begin{eqnarray*}
&&\psi(\Delta{\vlambda},\vlambda)\\
 &=& \left.\frac{\partial \Phi_p(\xi^{\tilde{\vlambda}})}{\partial\alpha}\right|_{\alpha=0}\\
&=& \left.q^{-1/p}\left[\tr\left(\mI(\xi^{\tilde{\vlambda}})^{-1}\mB\right)^p\right]^{1/p-1}\tr\left[\left(\mI(\xi^{\tilde{\vlambda}})^{-1}\mB\right)^{p-1}\left( -\mI(\xi^{\tilde{\vlambda}})^{-1}[\mI(\xi^{\Delta\vlambda})-\mI(\xi^{\vlambda})]\mI(\xi^{\tilde{\vlambda}})^{-1}\mB\right)\right]\right|_{\alpha=0}\\
&=& q^{-1/p}\left[\tr\left(\mI(\xi^{\vlambda})^{-1}\mB\right)^p\right]^{1/p-1}\tr\left[\left(\mI(\xi^{\vlambda})^{-1}\mB\right)^p -\left(\mI(\xi^{\vlambda})^{-1}\mB\right)^{p-1} \mI(\xi^{\vlambda})^{-1}\mI(\xi^{\Delta\vlambda})\mI(\xi^{\vlambda})^{-1}\mB\right]\\
&=& \Phi_p(\xi^{\vlambda})-q^{-1/p}\left[\tr\left(\mI(\xi^{\vlambda})^{-1}\mB\right)^p\right]^{1/p-1}\tr\left[\left(\mI(\xi^{\vlambda})^{-1}\mB\right)^{p-1} \mI(\xi^{\vlambda})^{-1}\mI(\xi^{\Delta\vlambda})\mI(\xi^{\vlambda})^{-1}\mB\right].
\end{eqnarray*}
\end{proof}

\noindent {\bf Proof of Theorem \ref{thm:OptWeights}}
\begin{proof}
Since $\vlambda^*$ minimizes $\Phi_p$, and $\Phi_p$ is a convex function of weight vector as proved in Lemma \ref{lem:ConWeight},
\begin{eqnarray*}
&&\psi(\Delta{\vlambda}, \vlambda^*)\\
 &=& \Phi_p(\xi^{\vlambda^*})-q^{-1/p}\left[\tr\left(\mI(\xi^{\vlambda})^{-1}\mB\right)^p\right]^{1/p-1}\tr\left[\left(\mI(\xi^{\vlambda^*})^{-1}\mB\right)^{p-1} \mI(\xi^{\vlambda})^{-1}\mI(\xi^{\Delta\vlambda})\mI(\xi^{\vlambda})^{-1}\mB\right]\\
&=& \Phi_p(\xi^{\vlambda^*}) \\
&& \hspace{-2.8ex}-\sum\limits_i^n\Delta\lambda_i q^{-1/p}\left[\tr\left(\mI(\xi^{\vlambda^*})^{-1}\mB\right)^p\right]^{1/p-1}w(\vx_i)\vg(\vx_i)^T\mI(\xi^{\vlambda^*})^{-1}\mB\left(\mI(\xi^{\vlambda^*})^{-1}\mB\right)^{p-1}\mI(\xi^{\vlambda^*})^{-1}\vg(\vx_i)\\
&=& \sum\limits_{i=1}^n\Delta{\lambda}_i\left[\Phi_p(\xi^{\vlambda^*})\right.\\
&&\hspace{-2.8ex}\left.-q^{-1/p}\left[\tr\left(\mI(\xi^{\vlambda^*})^{-1}\mB\right)^p\right]^{1/p-1}w(\vx_i)\vg(\vx_i)^T\mI(\xi^{\vlambda^*})^{-1}\mB\left(\mI(\xi^{\vlambda^*})^{-1}\mB\right)^{p-1}\mI(\xi^{\vlambda^*})^{-1}\vg(\vx_i)\right]
\geq 0,
\end{eqnarray*}
for all feasible weight vector $\Delta{\vlambda}$.

Thus,
$$\Phi_p(\xi^{\vlambda^*})\geq q^{-1/p}\left[\tr\left(\mI(\xi^{\vlambda^*})^{-1}\mB\right)^p\right]^{1/p-1}w(\vx_i)\vg(\vx_i)^T\mI(\xi^{\vlambda^*})^{-1}\mB\left(\mI(\xi^{\vlambda^*})^{-1}\mB\right)^{p-1}\mI(\xi^{\vlambda^*})^{-1}\vg(\vx_i),$$ for $i=1,...,n$.

Now we will show,
$$\Phi_p(\xi^{\vlambda^*}) = q^{-1/p}\left[\tr\left(\mI(\xi^{\vlambda^*})^{-1}\mB\right)^p\right]^{1/p-1}w(\vx_i)\vg(\vx_i)^T\mI(\xi^{\vlambda^*})^{-1}\mB\left(\mI(\xi^{\vlambda^*})^{-1}\mB\right)^{p-1}\mI(\xi^{\vlambda^*})^{-1}\vg(\vx_i),$$
for all $\lambda^*_i>0$.

Suppose there exists at least one $\vx_j$ with $\lambda_j^*>0$, such that
$$\Phi_p(\xi^{\vlambda^*})>q^{-1/p}\left[\tr\left(\mI(\xi^{\vlambda^*})^{-1}\mB\right)^p\right]^{1/p-1}w(\vx_j)\vg(\vx_j)^T\mI(\xi^{\vlambda^*})^{-1}\mB\left(\mI(\xi^{\vlambda^*})^{-1}\mB\right)^{p-1}\mI(\xi^{\vlambda^*})^{-1}\vg(\vx_j).$$
Then, we have
\begin{eqnarray*}
&&\Phi_p(\xi^{\vlambda^*}) \\
&=&  \sum\limits_{i=1}^n\lambda^*_i\Phi_p(\xi^{\vlambda^*})\\
&>& \sum\limits_{i=1}^n\lambda^*_iq^{-1/p}\left[\tr\left(\mI(\xi^{\vlambda^*})^{-1}\mB\right)^p\right]^{1/p-1}w(\vx_i)\vg(\vx_i)^T\mI(\xi^{\vlambda^*})^{-1}\mB\left(\mI(\xi^{\vlambda^*})^{-1}\mB\right)^{p-1}\mI(\xi^{\vlambda^*})^{-1}\vg(\vx_i) \\
&=& q^{-1/p}\left[\tr\left(\mI(\xi^{\vlambda^*})^{-1}\mB\right)^p\right]^{1/p-1}\sum\limits_{i=1}^n\lambda^*_iw(\vx_i)\vg(\vx_i)^T\mI(\xi^{\vlambda^*})^{-1}\mB\left(\mI(\xi^{\vlambda^*})^{-1}\mB\right)^{p-1}\mI(\xi^{\vlambda^*})^{-1}\vg(\vx_i)\\
&=& q^{-1/p}\left[\tr\left(\mI(\xi^{\vlambda^*})^{-1}\mB\right)^p\right]^{1/p-1}\tr\left(\mI(\xi^{\vlambda^*})^{-1}\mB\right)^p\\
&=& \Phi_p(\xi^{\vlambda^*}),
\end{eqnarray*}
which is a contradiction.
So,
$$\Phi_p(\xi^{\vlambda^*}) = q^{-1/p}\left[\tr\left(\mI(\xi^{\vlambda^*})^{-1}\mB\right)^p\right]^{1/p-1}w(\vx_i)\vg(\vx_i)^T\mI(\xi^{\vlambda^*})^{-1}\mB\left(\mI(\xi^{\vlambda^*})^{-1}\mB\right)^{p-1}\mI(\xi^{\vlambda^*})^{-1}\vg(\vx_i),$$
for design points $\vx_i$ with $\lambda^*_i>0$.
\end{proof}

\noindent {\bf Proof of Corollary \ref{cor:I-weight_der}}
\begin{proof}
This is a special case that $\mB = q\mA = q\int_{\Omega} \vg(\vx)\vg^T(\vx)\left[\frac{\dif h^{-1}}{\dif \eta}\right]^2\dif F_{\IMSE}(\vx)$ and $p=1$. Thus,
\begin{eqnarray*}
\psi(\Delta{\vlambda},\vlambda) &=&  \tr\left(\mI(\xi^{\vlambda})^{-1}\mA\right)-\tr\left[ \mI(\xi^{\vlambda})^{-1}\mI(\xi^{\Delta\vlambda})\mI(\xi^{\vlambda})^{-1}\mA\right]\\
&=& \tr\left(\mI(\xi^{\vlambda})^{-1}\mA\right)-\tr\left[ \mI(\xi^{\vlambda})^{-1}\left(\sum\limits_{i}\Delta\lambda_iw(\vx_i)\vg(\vx_i)\vg(\vx_i)^T\right)\mI(\xi^{\vlambda})^{-1}\mA\right]\\
&=& \tr\left(\mI(\xi^{\vlambda})^{-1}\mA\right)-\sum\limits_{i=1}^n\Delta{\lambda}_iw(\vx_i)\vg(\vx_i)^T\mI(\xi^{\vlambda})^{-1}\mA\mI(\xi^{\vlambda})^{-1}\vg(\vx_i)
\end{eqnarray*}
\end{proof}

\noindent {\bf Proof of Corollary \ref{cor:suf_weight}}
\begin{proof}
For (i), it directly follows the result in Theorem \ref{thm:OptWeights}. For (ii), the proof follows exactly as the proof of Corollary \ref{cor:I-weight_der}.
\end{proof}

\noindent {\bf Proof of Corollary \ref{cor: feasibility}}
\begin{proof}
A positive definite $\mA$ in EI-optimality would insure $\mI(\xi^{\vlambda^k})^{-1}\mB\neq 0$ in Theorem \ref{thm:feasibility} with $\mB=q\mA$ and $p=1$.
\end{proof}

\noindent {\bf Proof of Theorem \ref{thm:feasibility}}
\begin{proof} Since information matrix $\mI(\xi^{\vlambda^k})$ is positive definite, $\mB$ is positive semidefinite, and $\mI(\xi^{\vlambda^k})^{-1}\mB\neq 0$, we have
\begin{eqnarray*}
0<\tr\left[\mI(\xi^{\vlambda^k})^{-1}\mB\right]^p&=&\tr\left[\mI(\xi^{\vlambda^k})^{-1}\mB\left[\mI(\xi^{\vlambda^k})^{-1}\mB\right]^{p-1}\mI(\xi^{\vlambda^k})^{-1}\mI(\xi^{\vlambda^k})\right]\\
&=& \tr\left[\sum\limits_{i=1}^n\lambda^k_iw(\vx_i)\vg(\vx_i)\vg(\vx_i)^T\mI(\xi^{\vlambda^k})^{-1}\mB\left[\mI(\xi^{\vlambda^k})^{-1}\mB\right]^{p-1}\mI(\xi^{\vlambda^k})^{-1}\right]\\
&=&  \sum\limits_{i=1}^n\lambda^k_i\tr\left[w(\vx_i)\vg(\vx_i)\vg(\vx_i)^T\mI(\xi^{\vlambda^k})^{-1}\mB\left[\mI(\xi^{\vlambda^k})^{-1}\mB\right]^{p-1}\mI(\xi^{\vlambda^k})^{-1}\right]\\
&=& \sum\limits_{i=1}^n\lambda^k_iw(\vx_i)\vg(\vx_i)^T\mI(\xi^{\vlambda^k})^{-1}\mB\left[\mI(\xi^{\vlambda^k})^{-1}\mB\right]^{p-1}\mI(\xi^{\vlambda^k})^{-1}\vg(\vx_i).
\end{eqnarray*}
Thus, there exists some $\vx_i$, such that
$$\lambda^k_iw(\vx_i)\vg(\vx_i)^T\mI(\xi^{\vlambda^k})^{-1}\mB\left[\mI(\xi^{\vlambda^k})^{-1}\mB\right]^{p-1}\mI(\xi^{\vlambda^k})^{-1}\vg(\vx_i)>0,$$ and naturally, $\lambda^k_i\left(w(\vx_i)\vg(\vx_i)^T\mI(\xi^{\vlambda^k})^{-1}\mA\mI(\xi^{\vlambda^k})^{-1}\vg(\vx_i)\right)^{\delta}>0$, which leads to
$$\sum\limits_{i=1}^n \lambda_i^k\left[w(\vx_i)\vg(\vx_i)^T\mI(\xi^{\vlambda^k})^{-1}\mB\left(\mI(\xi^{\vlambda^k})^{-1}\mB\right)^{p-1}\mI(\xi^{\vlambda^k})^{-1}\vg(\vx_i)\right]^\delta>0.$$
\end{proof}

\begin{lem}\label{lem:ConvexVarphi}
Define $\varphi(\mI) = \phi_p(\xi) =  \left(q^{-1}\tr\left[\frac{\partial \vf(\vbeta)}{\partial \vbeta^T}\mI^{-1}\left(\frac{\partial \vf(\vbeta)}{\partial \vbeta^T}\right)^T\right]^p\right)^{1/p},\,\,\,0<p<\infty$ as a function of Fisher information matrix $\mI$, then
\begin{itemize}
  \item[(a)]  $\varphi(\mI)$ is a strictly convex function of $\mI$.
  \item[(b)] $\varphi(\mI)$ is a decreasing function of $\mI$.
\end{itemize}
\end{lem}

\begin{proof}
The proof of (a) is very similar to that of Lemma \ref{lem:ConWeight}, and is omitted here.
For (b), for any $\mI_1\preceq\mI_2$,
$$\mI_1^{-1}\succeq\mI_2^{-1}\Rightarrow \frac{\partial \vf(\vbeta)}{\partial \vbeta^T}\mI_1^{-1}\left(\frac{\partial \vf(\vbeta)}{\partial \vbeta^T}\right)^T\succeq \frac{\partial \vf(\vbeta)}{\partial \vbeta^T}\mI_2^{-1}\left(\frac{\partial \vf(\vbeta)}{\partial \vbeta^T}\right)^T.$$
As $\tr(\mC^r)$ with $r>0$ is an increasing function of any positive definite matrix $\mC$ (\citeO{09ber}),
$$\tr\left[\frac{\partial \vf(\vbeta)}{\partial \vbeta^T}\mI_1^{-1}\left(\frac{\partial \vf(\vbeta)}{\partial \vbeta^T}\right)^T\right]^p\geq \tr\left[\frac{\partial \vf(\vbeta)}{\partial \vbeta^T}\mI_2^{-1}\left(\frac{\partial \vf(\vbeta)}{\partial \vbeta^T}\right)^T\right]^p.$$
As a result,
\begin{eqnarray*}
&& \left(q^{-1}\tr\left[\frac{\partial \vf(\vbeta)}{\partial \vbeta^T}\mI_1^{-1}\left(\frac{\partial \vf(\vbeta)}{\partial \vbeta^T}\right)^T\right]^p\right)^{1/p}\geq  \left(q^{-1}\tr\left[\frac{\partial \vf(\vbeta)}{\partial \vbeta^T}\mI_2^{-1}\left(\frac{\partial \vf(\vbeta)}{\partial \vbeta^T}\right)^T\right]^p\right)^{1/p},
\end{eqnarray*}
i.e., $\varphi(\mI_1)\geq \varphi(\mI_2).$
\end{proof}

\noindent {\bf Proof of Proposition \ref{pro:convAlg1}}
\begin{proof}
The proof of Proposition \ref{pro:convAlg1} will mainly based on the results in Theorems 1 and 2 in the paper by \citeI{10yu}.
Based on Lemma \ref{lem:ConvexVarphi}, it is known that
$$\varphi(\mI) = \ \left(q^{-1}\tr\left[\frac{\partial \vf(\vbeta)}{\partial \vbeta^T}\mI^{-1}\left(\frac{\partial \vf(\vbeta)}{\partial \vbeta^T}\right)^T\right]^p\right)^{1/p},\,\,\,0<p<\infty$$ is a convex and decreasing function of $\mI$.
Under Algorithm 1 with $0<\delta< 1$, denote $\vlambda^k$ and $\vlambda^{k+1}$ to be the solutions on $k$th and $(k+1)$th iteration of the multiplicative algorithm, respectively.
Since $\mI(\xi^{\vlambda^k}), \mI(\xi^{\vlambda^{k+1}})$ and $\mB$ are positive definite, it is easy to see that
$$\mI(\xi^{\vlambda^k})>0,\,\,\,\mI(\xi^{\vlambda^{k+1}})>0,\,\,\,\text{and}\,\,\,\mI(\xi^{\vlambda^k})^{-1}\mB\left(\mI(\xi^{\vlambda^k})^{-1}\mB\right)^{p-1}\mI(\xi^{\vlambda^k})^{-1} \neq 0,$$
where $\mI(\xi^{\vlambda^k})^{-1}\mB\left(\mI(\xi^{\vlambda^k})^{-1}\mB\right)^{p-1}\mI(\xi^{\vlambda^k})^{-1}$ is a continuous function for $\mI$. Based on Theorem 1 in \citeI{10yu}, we can have
$$ \phi_p(\xi^{\vlambda^{k+1}}) \le \phi_p(\xi^{\vlambda^{k}}),  $$
when $\vlambda^{k+1} \ne \vlambda^{k}$.
Thus it shows the monotonicity of Algorithm 1.

Moreover, we also have that for $\varphi(\mI) = \phi_p(\xi) =  \left(q^{-1}\tr\left[\frac{\partial \vf(\vbeta)}{\partial \vbeta^T}\mI^{-1}\left(\frac{\partial \vf(\vbeta)}{\partial \vbeta^T}\right)^T\right]^p\right)^{1/p}$,
$$
w(\vx_i)\vg(\vx_i)\vg^T(\vx_i) \frac{\partial \varphi(\mI)}{\partial \mI} |_{\mI = \mI(\xi^{\vlambda^{k}}) } =  w(\vx_i)\vg(\vx_i)\vg^T(\vx_i) \mI(\xi^{\vlambda^k})^{-1}\mB\left(\mI(\xi^{\vlambda^k})^{-1}\mB\right)^{p-1}\mI(\xi^{\vlambda^k})^{-1} \ne 0.
$$
For the sequence $\mI(\xi^{\vlambda^{k}}), k = 1, 2, ...$ from Algorithm 1, its limit point is obviously nonsingular because of the positive definiteness.
Combining the above statements with results in Lemma \ref{lem:ConvexVarphi}, all the required conditions in Theorem 2 of \citeI{10yu} are satisfied.
Thus, using the results in Theorem 2 of \citeI{10yu}, we have
all limit points of $\vlambda^{k}$ are global minims of $\phi_p(\xi^{\vlambda})$, and
the $\phi_p(\xi^{\vlambda})$ decreases monotonically to $\inf\limits_{\vlambda}\phi_p(\xi^{\vlambda})$ as $k\rightarrow\infty$.
\end{proof}

\noindent {\bf Proof of Theorem \ref{thm:efflowerbound}}

To prove Theorem \ref{thm:efflowerbound} and \ref{thm:cong-algo2}, we first prove the following Lemma.
\begin{lem}\label{lem:forproof}
For any design $\xi$ and $\Phi_p$-optimal design $\xi^*$ that minimizes $\Phi_p(\xi)$, the following inequality holds:
$$\min_{\vx\in\Omega}\phi(\vx,\xi)\leq \phi(\xi^*,\xi) \leq \Phi_p(\xi^*)-\Phi_p(\xi)\leq 0,$$
where $\phi(\vx,\xi)$ and $\phi(\xi^*,\xi)$ are the directional derivatives of $\Phi_p(\xi)$ in the direction of $\vx$ and $\xi^*$, respectively.

\begin{proof}
The directional derivative of $\Phi_p(\xi)$ in the direction of the point $\vx^{*} = \argmin\limits_{\vx\in\Omega}\phi(\vx,\xi)$ is:
\begin{eqnarray}\label{inequ:phixxi}
\min\limits_{\vx\in\Omega} \phi(\vx,\xi)&=&\phi(\vx^{*},\xi)\nonumber\\
 &=& \Phi_p(\xi)-q^{-1/p}\left[\tr\left(\mI(\xi)^{-1}\mB\right)^p\right]^{1/p-1}w(\vx^*)\vg(\vx^*)^{\top}\mI(\xi)^{-1}\mB\left(\mI(\xi)^{-1}\mB\right)^{p-1}\mI(\xi)^{-1}\vg(\vx^*)\nonumber\\
 &\leq & \Phi_p(\xi)-q^{-1/p}\left[\tr\left(\mI(\xi)^{-1}\mB\right)^p\right]^{1/p-1}w(\vx)\vg(\vx)^{\top}\mI(\xi)^{-1}\mB\left(\mI(\xi)^{-1}\mB\right)^{p-1}\mI(\xi)^{-1}\vg(\vx)\nonumber\\
\end{eqnarray}
for any $\vx\in\Omega$.

Denote the optimal design $\xi^* = \left\{\begin{array}{ccc}
\vx_1,&...,&\vx_n\\
\lambda^*_1,&...,&\lambda^*_n
\end{array}\right\}$. With \eqref{inequ:phixxi}, we have
\begin{eqnarray}\label{inequ:phixixi}
&&\phi(\vx^*,\xi)\nonumber\\
 &\leq & \sum_{i=1}^n\lambda^*_i\left(\Phi_p(\xi)-q^{-1/p}\left[\tr\left(\mI(\xi)^{-1}\mB\right)^p\right]^{1/p-1}w(\vx_i)\vg(\vx_i)^{\top}\mI(\xi)^{-1}\mB\left(\mI(\xi)^{-1}\mB\right)^{p-1}\mI(\xi)^{-1}\vg(\vx_i)\right)\nonumber\\
 &=&\Phi_p(\xi)-\tr\left[\mI(\xi^*) q^{-1/p}\left[\tr\left(\mI(\xi)^{-1}\mB\right)^p\right]^{1/p-1}\mI(\xi)^{-1}\mB\left(\mI(\xi)^{-1}\mB\right)^{p-1}\mI(\xi)^{-1}\right]\nonumber\\
 &=& \phi(\xi^*,\xi).
\end{eqnarray}

Furthermore, with the definition of directional derivative in the direction of the optimal design $\xi^*$ and convexity of $\Phi_p$, we have
\begin{eqnarray}\label{eqn:dirinequ}
\phi(\xi^*,\xi) &=& \lim_{\alpha\rightarrow 0}\frac{\Phi_p((1-\alpha)\xi+\alpha\xi^*)-\Phi_p(\xi)}{\alpha}\nonumber\\
&\leq &\lim_{\alpha\rightarrow 0}\frac{(1-\alpha)\Phi_p(\xi)+\alpha\Phi_p(\xi^*)-\Phi_p(\xi)}{\alpha}\nonumber\\
&=& \Phi_p(\xi^*)-\Phi_p(\xi)
\end{eqnarray}

Combining \eqref{inequ:phixixi} and \eqref{eqn:dirinequ}, we complete the proof that
$$\min\limits_{\vx\in\Omega} \phi(\vx,\xi)\leq \phi(\xi^*,\xi)\leq \Phi_p(\xi^*)-\Phi_p(\xi)\leq 0.$$
\end{proof}
\end{lem}

\begin{proof} [Proof\nopunct]\emph{of Theorem \ref{thm:efflowerbound}}

In Equation \eqref{eq:phi_p}, the reciprocal of $\Phi_p$-optimality coule be written as:
$$\frac{1}{\Phi_p(\xi)} = \left(q^{-1}\tr\left[\frac{\partial \vf(\vbeta)}{\partial \vbeta^T}\mI(\xi)^{-1}\left(\frac{\partial \vf(\vbeta)}{\partial \vbeta^T}\right)^T\right]^{p}\right)^{-1/p},\,\,\,0<p<\infty.$$
According to \cite{93puk}, Section 6.13, $\frac{1}{\Phi_p}$ is concave on the set of designs with positive definite information matrices.  Using the similar approach to derive the lower bound of efficiency in (\citeO{atwood1969optimal}), with $\xi$ and $\xi^*$ fixed, define $t(\alpha) = \frac{1}{\Phi_p(\alpha\xi^*+(1-\alpha)\xi)}$. With simple algebra and directional derivative of $\Phi_p$,
\begin{eqnarray*}
\left.\frac{\dif t(\alpha)}{\dif \alpha}\right|_{\alpha=0} &=& \left.-\frac{1}{\Phi_p^2(\alpha\xi^*+(1-\alpha)\xi)}\left(\frac{\partial \Phi_p(\alpha\xi^*+(1-\alpha)\xi)}{\partial \alpha}\right)\right|_{\alpha=0}\\
&=& -\frac{1}{\Phi_p^2(\xi)}\phi(\xi^*,\xi),
\end{eqnarray*}
where $\phi(\xi^*,\xi)$ is the directional derivative of $\Phi_p$ in the direction of $\xi^*$ defined in section \ref{section:GET}.

According to Lemma \ref{lem:forproof} that $\phi(\vx^*,\xi)\leq \phi(\xi^*,\xi) \leq \Phi_p(\xi^*)-\Phi_p(\xi)\leq 0,$ we have
\begin{eqnarray*}
&&\frac{1}{\Phi_p(\xi^*)}-\frac{1}{\Phi_p(\xi)}\nonumber\\
 &=&  t(1)-t(0)\nonumber\\
&\leq & \frac{\dif t(\alpha)}{\dif \alpha}|_{\alpha = \alpha^*},\,\,\,\,\,\text{where $\alpha^*\in (0,1)$ by mean value theorem}\nonumber\\
&\leq & \frac{\dif t(\alpha)}{\dif \alpha}|_{\alpha = 0},\,\,\,\,\text{by concavity of $t$}\nonumber\\
&=& -\frac{1}{\Phi_p^2(\xi)}\phi(\xi^*,\xi)\nonumber\\
&\leq & -\frac{1}{\Phi_p^2(\xi)}\phi(\vx^*,\xi)\nonumber\\
&=& \frac{\max\limits_{\vx\in\mathcal{C}} q^{-1/p}\left[\tr\left(\mI(\xi)^{-1}\mB\right)^p\right]^{1/p-1}w(\vx_i)\vg(\vx_i)^T\mI(\xi)^{-1}\mB\left(\mI(\xi)^{-1}\mB\right)^{p-1}\mI(\xi)^{-1}\vg(\vx_i)-\Phi_p(\xi)}{\Phi_p^2(\xi)}\nonumber\\
\end{eqnarray*}
That is, $\frac{\Phi_p(\xi)}{\Phi_p(\xi^*)}\leq \frac{\max\limits_{\vx\in\mathcal{C}} q^{-1/p}\left[\tr\left(\mI(\xi)^{-1}\mB\right)^p\right]^{1/p-1}w(\vx_i)\vg(\vx_i)^T\mI(\xi)^{-1}\mB\left(\mI(\xi)^{-1}\mB\right)^{p-1}\mI(\xi)^{-1}\vg(\vx_i)}{\Phi_p(\xi)},$ i.e.,
$$\Eff_{\Phi_p}(\xi|\xi^*) \geq \frac{\Phi_p(\xi)}{\max\limits_{\vx\in\mathcal{C}}q^{-1/p}\left[\tr\left(\mI(\xi)^{-1}\mB\right)^p\right]^{1/p-1}w(\vx_i)\vg(\vx_i)^T\mI(\xi)^{-1}\mB\left(\mI(\xi)^{-1}\mB\right)^{p-1}\mI(\xi)^{-1}\vg(\vx_i)}.$$
\end{proof}

\noindent {\bf Proof of Theorem \ref{thm:cong-algo2}}

\begin{proof}
Here we prove the convergence that
$$\lim\limits_{r\rightarrow\infty} \Phi_p(\xi^r) = \Phi_p(\xi^*)$$
 of the proposed algorithm for a general $\Phi_p$ optimality, and $\EI$-optimality shares the same mathematical structure as $\Phi_1$-optimality.

We will establish the argument by proof of contradiction with similar proof in the paper of \citeI{13yang}.
Assume that the $\xi^r$ does not converge to $\xi^*$, i.e.,
$$\lim\limits_{r\rightarrow\infty} \Phi_p(\xi^r) -\Phi_p(\xi^*) > 0.$$
Since the support set of the $r$-th iteration is a subset of support set of the ($r+1$)-th iteration, it is obvious that $\Phi_p(\xi^{r+1})\leq \Phi_p(\xi^r)$ for all $r\geq 0$. Thus, there exists some $a>0$, such that
$$\Phi_p(\xi^r) > \Phi_p(\xi^*)+a,\,\,\,\text{for all } r.$$

According to Lemma \ref{lem:forproof}, we can conclude $\phi(\vx^*,\xi^r)\leq \phi(\xi^*,\xi^r) \leq \Phi_p(\xi^*)-\Phi_p(\xi^r)<-a.$ It is quite obvious from the derivation of the directional derivative that $\Phi_p((1-\gamma)\xi+\gamma\xi')$ is infinitely differentiable with respect to $\gamma \in[0,1]$.
Thus, the second order derivative is bounded on $\gamma$, and denote the upper bound by $U$, with $U>0$ as $\EI$ is a convex function.

Then, consider the design $\tilde{\xi}^{r+1} = (1-\gamma)\xi^r+\gamma\vx_r^*$, with $\gamma\in[0,1]$. Since the proposed algorithm achieves optimal weights in each iteration, we have,
$$\Phi_p(\xi^{r+1})\leq \Phi_p(\tilde{\xi}^{r+1}).$$
Using the Taylor expansion of $\Phi_p(\tilde{\xi}^{r+1})$, we have
\begin{eqnarray*}
\Phi_p(\xi^{r+1})&\leq&\Phi_p(\tilde{\xi}^{r+1})\\
 &=& \Phi_p(\xi^r)+\gamma\phi(\vx_r^*,\xi^r)+\frac{1}{2}\gamma^2\left.\frac{\partial^2\Phi_p((1-\gamma)\xi^r+\gamma\vx_r^*)}{\partial \gamma^2}\right|_{\gamma=\gamma'}\\
&<& \Phi_p(\xi^r)-\gamma a+\frac{1}{2}\gamma^2U,
\end{eqnarray*}
where $\gamma'$ is some value between 0 and 1. Consequently, we have
\begin{eqnarray*}
\Phi_p(\xi^{r+1})-\Phi_p(\xi^r)<\frac{1}{2}U\left(\gamma-\frac{a}{U}\right)^2-\frac{a^2}{2U}.
\end{eqnarray*}

If $\frac{a}{U}\leq 1\Leftrightarrow U\geq a$, choose $\gamma=\frac{a}{U}$, then we have
$$\Phi_p(\xi^{r+1})-\Phi_p(\xi^r)<-\frac{a^2}{2U}.$$
If $\frac{a}{U}> 1\Leftrightarrow U<a$, choose $\gamma=1$, then we have
$$\Phi_p(\xi^{r+1})-\Phi_p(\xi^r)<\frac{1}{2}U-a<0.$$
Both situations will lead to $\lim\limits_{r\rightarrow\infty} \Phi_p(\xi^r)=-\infty$, which  contradicts with $\Phi_p(\xi^r)>0$.

In summary, the assumption that $\xi^r$ does not converge to $\xi^*$ is not valid, and thus we prove that $\xi^r$ converges to $\xi^*$.
\end{proof}

\end{document}